\newtheorem*{conjecture}{Conjecture}
\newtheorem*{problem}{Problem}
\newtheorem{theorem}{Theorem}[section]
\newtheorem{lemma}[theorem]{Lemma}
\newtheorem{proposition}[theorem]{Proposition}
\def\ket#1{\langle #1 \rangle}
\newcommand{\lms}{\longmapsto}
\newcommand{\lra}{\longrightarrow}
\newcommand{\Q}{\mathbb{Q}}
\newcommand{\R}{\mathbb{R}}
\newcommand{\C}{\mathbb{C}}
\newcommand{\Z}{\mathbb{Z}}
\newcommand{\PP}{\mathbb{P}}
\DeclareMathOperator{\Alt}{Alt}
\DeclareMathOperator{\Conf}{Conf}
\DeclareMathOperator{\Gr}{Gr}
\DeclareMathOperator{\Li}{Li}
\DeclareMathOperator{\Id}{Id}
\DeclareMathOperator{\sgn}{sgn}
\DeclareMathOperator{\B}{B}
\DeclareMathOperator{\Tr}{Tr}
\title{Motivic Amplitudes and Cluster Coordinates}
\author[a]{J.~K.~Golden,}
\author[b]{A.~B.~Goncharov,}
\author[a,c]{M.~Spradlin,}
\author[d]{C.~Vergu,}
\author[a,c]{and A.~Volovich}
\affiliation[a]{Department of Physics, Brown University,\\
Box 1843,\\
Providence, RI 02912-1843,
USA}
\affiliation[b]{Department of Mathematics, Yale University,\\
PO Box 208283,\\
New Haven, CT 06520-8283,
USA}
\affiliation[c]{Theory Division, Physics Department, CERN,\\
1211 Geneva 23, Switzerland}
\affiliation[d]{ETH Z\"urich, Institut f\"ur Theoretische Physik,\\
Wolfgang Pauli Strasse 27,\\
8093 Z\"urich, Switzerland}
\preprint{Brown-HET-1643}
\abstract{In this paper we study \emph{motivic amplitudes}---objects which contain all of the essential mathematical content of scattering amplitudes in planar SYM theory in a completely canonical way, free from the ambiguities inherent in any attempt to choose particular functional representatives.  We find that the cluster structure on the kinematic configuration space $\Conf_n(\PP^3)$ underlies the structure of motivic amplitudes.  Specifically, we compute explicitly the coproduct of the two-loop seven-particle MHV motivic amplitude $\mathcal{A}_{7,2}^{\mathcal{M}}$ and find that like the previously known six-particle amplitude, it depends only on certain preferred coordinates known in the mathematics literature as \emph{cluster $\mathcal{X}$-coordinates} on $\Conf_n(\mathbb{P}^3)$.
We also find intriguing relations between motivic amplitudes and the geometry of generalized associahedrons, to which cluster coordinates have a natural combinatoric connection.
For example, the obstruction to $\mathcal{A}_{7,2}^{\mathcal{M}}$ being expressible in terms of classical polylogarithms is most naturally represented by certain quadrilateral faces of the appropriate associahedron.
We also find and prove the first known functional equation for the trilogarithm
in which all 40 arguments are cluster $\mathcal{X}$-coordinates of a single algebra.
In this respect it is similar to Abel's 5-term dilogarithm identity.
}
\begin{document}
\maketitle

\section{Introduction}
\label{sec:introduction}

In the past several years a great amount of attention has been focused
on the problem of understanding the hidden mathematical structure
of scattering amplitudes
(for reviews see~\cite{Mangano:1990by,Dixon:1996wi,Cachazo:2005ga,Bern:2007dw,GreyBook,Feng:2011np}),
particularly (but certainly not exclusively)
in supersymmetric theories such as ${\mathcal{N}} = 4$ Yang-Mills
(SYM) theory~\cite{Brink:1976bc,Gliozzi:1976qd}.
As amplitudeologists, our mathematical interest in
planar SYM theory stems from imagining it as a vast and mysterious encyclopedia,
recovered from some long-lost desert cave, filled
with functions having remarkable properties and interrelationships.
This encyclopedia has many volumes, but
beyond the most introductory sections,
we can only make out bits and scraps of text here and there.

It is hardly our ambition to greatly ameliorate this situation.
Rather, our goal in this work is to describe some
general mathematical properties of and techniques for
analyzing amplitudes---to provide a kind of archaeologist's toolkit.
In particular, one overarching aim of our work is to
point out that SYM theory is an ideal setting in which to study
\emph{motivic amplitudes}, as proposed a decade ago in~\cite{G02}
(see in particular sec.~\ref{sec:examples}).
Why motivic amplitudes? It remains an important outstanding problem in physics to determine
explicit effective constructions for general amplitudes. 
However the abundance of functional identities amongst generalized
polylogarithms apparently precludes the existence of any
particular preferred or canonical functional representation or `formula' 
for general multi-loop
amplitudes (the only exception is reviewed
in section~\ref{sec:GSVV}).
Our goal is rather to investigate, following~\cite{G02},  their motivic avatars---motivic amplitudes---which are
mathematically more sophisticated, but at the same time much more structured
and canonical objects. In particular they are elements of a Hopf algebra.
This Hopf algebra is the algebra of functions on the so-called motivic Galois group. 
The group structure of the latter is encoded in the coproduct of the Hopf algebra. 
So by studying the coproduct of motivic amplitudes---a structure totally invisible
if we remain on the level of functions---we uncover their hidden motivic Galois symmetries.
One cannot resist to think that these new symmetries will eventually play an essential role in physics.

A similar upgrading,
from multi-zeta values to motivic multi-zeta values,
has recently played a crucial role in unlocking the structure of
tree-level
superstring amplitudes in the $\alpha'$
expansion~\cite{Schlotterer:2012ny,Drummond:2013vz,Broedel:2013tta,Broedel:2013aza}.  In SYM theory we expect the motivic approach to be even more powerful
since the amplitudes we deal with are not merely numbers but highly
nontrivial functions
on the $3(n-5)$-dimensional kinematic configuration space
$\Conf_n(\mathbb{P}^3)$, the space of collections of $n$ points in the projective space
$\mathbb{P}^3$, considered modulo the action of the projective linear group PGL${}_4$.

The one-sentence slogan of our paper is that
\emph{we find that
the cluster structure of the space $\Conf_n(\mathbb{P}^3)$ underlies
the structure of  amplitudes in SYM theory}.
The technical aspects of our work which support this
conclusion can be divided into two parts,
which one can think of very roughly as \emph{kinematics} and \emph{dynamics}.

We can phrase the `kinematic' question we are interested in roughly
as: \emph{which variables do motivic amplitudes in SYM theory depend on?}
Thanks to dual conformal symmetry~\cite{Drummond:2006rz,Bern:2006ew,Alday:2007hr,Drummond:2007aua,Drummond:2007cf,Alday:2007he,Drummond:2007au,Drummond:2008vq}
it is known that
appropriately defined
$n$-particle SYM
scattering amplitudes depend on  $3(n-5)$ algebraically
independent dual conformal cross-ratios.  However, all experience
to date indicates that the functional dependence of amplitudes
on these variables always takes very special forms.  For example, in
the case of the two-loop MHV amplitude for $n=6$ (reviewed
in section~\ref{sec:GSVV}), which can
be completely expressed in terms of the classical polylogarithm functions
$\Li_m$~\cite{Goncharov:2010jf},
only very particular algebraic functions of the three
independent cross-ratios appear
as arguments of the $\Li_m$'s.  It is natural to wonder why these
particular arguments appear, and not others, and to ask about
the arguments appearing in more general
amplitudes (including $n>6$, higher loops,
and non-MHV).

There is a more specific purely mathematical reason to concentrate on the study
of the motivic two-loop MHV amplitudes.
These amplitudes are polylogarithm-like functions of
weight (also known as transcendentality) four.
Any such function of weight one on a space $X$ is necessarily of the form $\log F(x)$, the logarithm of a rational function on $X$.
Next, any weight two function can be expressed 
as a sum of 
$\Li_2$'s  and products of two logarithms of rational functions. Similarly, any weight three function 
is a linear combination of $\Li_3$'s and products of lower weight polylogarithms of rational functions. 
So the question `which variables these functions depend on' is
well-defined,  up to the functional equations satisfied by $\Li_2$ and $\Li_3$---a
beautiful subject on its own. However this is no longer true for functions of weight four~\cite{G91b}.
There is an invariant associated to any weight four function, with values
in an Abelian group $\Lambda^2\B_2$,  reviewed in sec.~\ref{sec:math},
which is the obstruction
for the function to be expressible as a sum of products of classical polylogarithms $\Li_m$.   
This makes the above question  ill-defined.
However, upgrading weight four functions  to their motivic avatars one sees that
their coproducts are expressible via classical motivic polylogarithms of weights $\leq 3$,
and so the question makes sense again. Finally,  the coproduct 
preserves all information about motivic amplitudes but an additive constant. 
So to understand two-loop amplitudes we want first to find explicit formulas 
for the coproduct of their motivic avatars. 
On the other hand, the two-loop amplitudes are
 natural weight four functions, and so one can hope that their detailed analysis
might shed a new light on the fundamental unsolved mathematical
problems which we face starting at weight four.
Therefore, although the level of precision one can reach studying 
the two-loop motivic MHV amplitudes is unsustainable for higher loops, 
one can hope to discover general features by looking at the simplest case. 

In this paper we propose that the variables which appear in the study of the  
MHV amplitudes in SYM theory belong to a class  known in the
mathematics literature as
\emph{cluster $\mathcal{X}$-coordinates}~\cite{FG03b}
on the configuration space $\Conf_n(\mathbb{P}^3)$. 

Cluster $\mathcal{X}$-coordinates in general 
describe Poisson spaces which are in duality with cluster  algebras, originally
discovered in~\cite{1021.16017,1054.17024}. 
In particular, the space $\Conf_n(\mathbb{P}^3)$ is equipped with a natural Poisson structure, 
invariant under cyclic shift of the points. This Poisson structure 
looks especially simple in the cluster $\mathcal{X}$-coordinates:  
the logarithms of the latter  provide collections  
of canonical Darboux coordinates. 
It seems remarkable that the arguments of the amplitudes 
have such special Poisson properties, although at the moment we 
do not know how to fully exploit this connection. 

An immediate consequence of the cluster structure of the space $\Conf_n(\mathbb{P}^3)$ 
is that its real part $\Conf_n(\mathbb{RP}^3)$ contains the domain 
$\Conf^+_n(\mathbb{RP}^3)$ of positive 
configurations of $n$ points in $\mathbb{RP}^3$. This positive domain is evidently 
a part of the Euclidean domain in $\Conf_n(\mathbb{CP}^3)$, the domain
where amplitudes are singularity-free.

The configuration space $\Conf_n(\mathbb{P}^3)$ can be  
realized as a quotient of the
Grassmannian $\Gr(4,n)$ by the action of the group $(\C^*)^{n-1}$.
This Grassmannian,
describing the external kinematic data of an amplitude,
may look unrelated to those which star
in~\cite{ArkaniHamed:2009dn,ArkaniHamed:2009vw,ArkaniHamed:2009sx,ArkaniHamed:2009dg,ArkaniHamed:2010kv,ArkaniHamed:2012nw} and involve
also `internal' data related to loop integration variables.
However the cluster structure and in particular the positivity play a key role in the Grassmannian 
approach to amplitudes~\cite{ArkaniHamed:2012nw}, and we have no doubt that a tight connection between these objects
will soon emerge. 

Once one accepts the important role played by cluster coordinates
as the kinematic variables which, in particular,  the coproduct of the two-loop 
motivic MHV amplitudes are `allowed'
to depend on, it is natural to ask the `dynamic' question:
\emph{what exactly is the dependence on these coordinates}?  For example, what
explains the precise linear combination of $\Li_4$ functions appearing
in the two-loop MHV amplitude for $n=6$?
There is a vast and rich mathematical literature on cluster algebras,
which are naturally connected~\cite{1057.52003}
to beautiful combinatorial structures
known as cluster complexes and, more specifically, generalized associahedrons (or generalized Stasheff polytopes)~\cite{0114.39402}. 
We defer most of the dynamic question to subsequent work but report here
the first example of a connection between these mathematical
structures and motivic amplitudes:  we find that the `distance'
between a two-loop amplitude and the classical $\Li_4$
functions, expressed in the $\Lambda^2 \B_2$-obstruction, is
naturally formulated in terms of certain two-dimensional quadrilateral faces of
the associahedron for $\Conf_n(\mathbb{P}^3)$. Equivalently, the pairs of functions 
entering the Stasheff polytope $\Lambda^2 \B_2$-obstruction for the two-loop MHV amplitudes Poisson commute. 

The outline of this paper is as follows.  In section~\ref{sec:kinematics}
we briefly review various notations for configurations
of points in $\mathbb{P}^{k-1}$ and the appearance of the $3(n-5)$-dimensional
space $\Conf_n(\mathbb{P}^3)$ as the space on which
$n$-particle scattering amplitudes in SYM theory are defined. 
We also review the relationship with the Grassmannian $\Gr(4,n)$. 
In section~\ref{sec:GSVV} we call attention to
the very special arguments appearing inside the $\Li_4$
functions in the two-loop MHV amplitude for $n=6$.
Section~\ref{sec:math} reviews the mathematics necessary for the 
calculus of motivic amplitudes.
We present our result for the coproduct of the
two-loop $n=7$ MHV motivic amplitude in section~\ref{sec:mc} (results for
all higher $n$
will be given in a subsequent publication).
In section~\ref{sec:intr-clust-algebr} we turn to
cluster algebras related to $\Gr(4,n)$, the construction of cluster coordinates,
and the Stasheff polytope and cluster $\mathcal{X}$-coordinates
for $\Conf_n(\mathbb{P}^3)$.  Finally section~\ref{sec:examples} exhibits
these concepts for $n=6,7$ in detail and contains some
analysis of the structure of the two-loop $n=7$ MHV motivic amplitude and
its relation to the Stasheff polytope. While the $n=6$ case is well-known
in the mathematical literature, the geometry of the
cluster $\mathcal{X}$-coordinates in the $n=7$ case is more intricate. 
In Appendix~\ref{sec:parity-conj-twist} we discuss parity conjugation, and show how to calculate its 
action on the cluster $\mathcal{X}$-coordinates.  In Appendix~\ref{sec:trilog-ident} we 
discuss and prove the 40-term functional equation for the trilogarithm, which plays a role in sec.~\ref{sec:mc}.

\section{The Kinematic Configuration Space \texorpdfstring{$\Conf_n(\mathbb{P}^3)$}{Conf(n,P3)}}
\label{sec:kinematics}

Having argued that scattering amplitudes
are a collection of very interesting functions, we begin by
addressing a seemingly simple-minded question:  what variables do these
functions depend on?
Despite initial appearances this is far from a trivial question, and
somewhat surprisingly a completely satisfactory understanding
has only emerged rather recently.

\subsection{Momentum twistors}

The basic problem is essentially this:  a scattering amplitude
of $n$ massless particles depends on $n$ four-momenta
$p_i$ (which we can take to be complex),
but these are constrained variables. First of all each one
should be light-like, $p_i^2 = 0$ with
respect to the Minkowski metric for all $i$, and secondly energy-momentum
conservation requires that $p_1 + \cdots + p_n = 0$.
These constraints carve out a non-trivial subvariety
of $\mathbb{C}^{4n}$.  It is desirable to employ
a set of unconstrained variables which
parametrize precisely this subvariety. 
A solution to the problem is
provided by momentum twistors~\cite{Hodges:2009hk}, whose construction we now review.

In the planar limit of SYM theory we have an additional, and crucial,
piece of structure:  the $n$ particles come together with a specified
cyclic ordering.
This arises because each particle lives in the adjoint representation of a gauge
group and each amplitude is multiplied by some invariant constructed
from the gauge group generators of the participating particles.  In the
planar limit we take the gauge group to be U$(N)$ with $N \to \infty$,
in which case only amplitudes multiplying a single trace $\Tr[T^{a_1} \cdots T^{a_n}]$ of
gauge group generators are nonvanishing.

Armed with a specified cyclic ordering of the particles,
the conservation constraint is solved trivially by parameterizing each $p_i = x_{i-1} - x_i$ in
terms of $n$ dual coordinates $x_i$.  The $x_i$ specify the vertices,
in $\mathbb{C}^4$, of an $n$-sided polygon whose edges are the vectors $p_i$,
each of which is null.
A very special
feature of SYM theory in the planar limit is that all amplitudes
are invariant under
conformal transformations in this dual space-time~\cite{Drummond:2006rz,Bern:2006ew,Alday:2007hr,Drummond:2007aua,Drummond:2007cf,Alday:2007he,Drummond:2007au,Drummond:2008vq}

It is often useful, especially when one is interested in discussing
aspects of conformal symmetry, to compactify the space-time.  
For example, in
Euclidean signature, a single point at infinity has to be included
in order for conformal inversion to make sense.  
It is also convenient to complexify
space-time. 
Different real sections of this complexified space correspond to
different signatures of the space-time metric.
The complexified compactification $\widetilde{M}_4$
of four-dimensional space-time is
the Grassmannian manifold $\Gr(2,4)$ of two-dimensional vector
spaces in a four-dimensional complex vector space $V_4$; in other words, 
there is a one-to-one
correspondence between points in $\widetilde{M}_4$
and two-dimensional vector subspaces in $V_4$.
We can projectivize this picture to say that
the correspondence is between points in complexified
compactified space-time and lines in $\mathbb{P}^3$.

In the Grassmannian picture two
points are light-like separated if their corresponding $2$-planes 
intersect.
So after projectivization, 
a pair of light-like separated points in $\widetilde{M}_4$
corresponds to a pair of intersecting lines in $\mathbb{P}^3$. 
Conformal transformations in space-time correspond to PGL${}_4$
transformations on $\mathbb{P}^3$.

This $\mathbb{P}^{3}$ space is called twistor space in the physics literature.
The importance of this space was first noted in the work of Penrose~\cite{Penrose:1967wn, Penrose:1972ia} and more recently emphasized by Witten~\cite{Witten:2003nn}
in the context of Yang-Mills theory scattering amplitudes.
However the twistors we need here are not the ones associated to the space-time in which the scattering takes place, but rather
the ones associated to the dual space mentioned above, where the $x_i$ live and on which dual conformal symmetry acts.
These were called `momentum twistors' in ref.~\cite{Hodges:2009hk}, where they were first introduced.
Momentum twistor space has both a chiral supersymmetric
version
(see ref.~\cite{Mason:2009qx}) and a non-chiral supersymmetric version
(see refs.~\cite{Witten:1978xx,Beisert:2012gb,Beisert:2012xx}), but we will not make
use of these extensions in this paper.

To summarize: a scattering amplitude depends on a cyclically ordered collection
of points $x_i$ in the complexified momentum space $\mathbb{C}^4$, each of which corresponds to a projective line
in momentum twistor space.  Since each $x_i$ is null separated from its neighbors
$x_{i-1}$ and $x_{i+1}$, their corresponding lines in momentum twistor space
intersect.  We denote by $Z_i \in \mathbb{P}^3$ the intersection of
the lines corresponding to the points $x_{i-1}$ and $x_i$.  Conversely, an
ordered sequence of points $Z_1,\ldots,Z_n \in \mathbb{P}^3$ determines a
collection of $n$ lines which intersect pairwise and therefore correspond
to $n$ light-light separated points $x_i$ in the dual Minkowski space.

\subsection{Bracket notation}
The space we have just described---the collection of $n$ ordered points in $\mathbb{P}^3$ modulo the action of PGL${}_4$---defines
$\Conf_n(\mathbb{P}^3)$, read as `configurations of $n$ points in $\mathbb{P}^3$'.
Scattering amplitudes of $n$ particles in SYM theory are functions on this $3(n-5)$-dimensional kinematic domain.
This space can be essentially realized as the quotient $\Gr(4,n)/(\mathbb{C}^*)^{n-1}$ of the Grassmannian by considering
the space
of $4 \times n$ matrices (being the homogeneous coordinates of the $n$ points in $\mathbb{P}^3$) modulo
the left-action of PGL${}_4$ as well as independent rescaling of the $n$ columns.
In this presentation the natural dual conformal covariant objects are four-brackets of the form $\langle i j k l\rangle := \det(Z_i Z_j Z_k Z_l)$, which is just the $\mathbb{C}^{4}$ volume of the parallelepiped built on the vectors $(Z_{i}, Z_{j}, Z_{k}, Z_{l})$.

More precisely, emphasizing the structures involved, 
given a volume form $\omega_4$ in a four-dimensional vector space $V_4$, 
we can define the bracket $\langle v_1,v_2,v_3,v_4\rangle := \omega_4(v_1, v_2, v_3, v_4)$.

These four-brackets are key players in the rest of our story, so we list here a few of their important features.
The Grassmannian duality $\Gr(k,n) = \Gr(n-k,n)$ means that configurations of $n$ points in $\mathbb{P}^{k}$ are dual to configurations of $n$ points in $\mathbb{P}^{n-k-2}$. Explicitly, at six points the relationship between four-brackets in $\mathbb{P}^3$ and two-brackets in $\mathbb{P}^1$ given by
\begin{equation}
\ket{ijkl} = \frac{1}{2!} \epsilon_{ijklmn} \ket{mn}, \qquad
\ket{ij} = \frac{1}{4!} \epsilon_{ijklmn} \ket{klmn},
\end{equation}
while the
relationship at seven points between four-brackets in $\mathbb{P}^{3}$ and three-brackets in $\mathbb{P}^{2}$ is
clearly
\begin{equation}
\label{eq:sevenduality}
  \langle i j k l\rangle = \frac 1 {3!} \epsilon_{i j k l m n p} \langle m n p\rangle, \quad \langle i j k\rangle = \frac 1{4!} \epsilon_{i j k l m n p} \langle l m n p\rangle.
\end{equation}
We find it useful to exploit this duality for six and seven points when doing so leads to additional clarity. 
An invariant treatment of this duality is given below in sec.~\ref{sec:cg}.

More complicated PGL${}_4$ covariant objects can be formed naturally by using
projective geometry inside four-brackets.  Such objects will appear later in
sec.~\ref{sec:intr-clust-algebr}, so we review the standard notation for them here.
Following the $\cap$ notation introduced in ref.~\cite{ArkaniHamed:2010kv}
we define the four-brackets with an intersection to be

\begin{equation} \label{not}
  \langle a b (c d e) \cap (f g h)\rangle \equiv \langle a c d e\rangle \langle b f g h\rangle - \langle b c d e\rangle \langle a f g h\rangle.
\end{equation}  This composite four-bracket vanishes when the line $(ab)$ and the intersection of planes $(c d e) \cap (f g h)$ lie in a common hyperplane.

Here is a slightly different way to think about (\ref{not}). Consider a pair of vectors $v_1, v_2$ in 
a four-dimensional vector space $V_4$, and a pair of covectors $f_1, f_2 \in V_4^*$. 
Then we set 
\begin{equation} \label{not1}
  \langle v_1, v_2; f_1, f_2\rangle \equiv f_1(v_1)f_2(v_2) - f_1(v_2)f_2(v_1).
\end{equation} 
To get (\ref{not}) we just take the covectors $f_1(\ast):= \langle c,
d, e, \ast\rangle$ and 
$f_2(\ast):= \langle f, g, h, \ast\rangle$.

If we pick a vector $c$ in the intersection of the two hyperplanes, writing them as $(c a_2b_2)$ and $(ca_3b_3)$, then 
we can rewrite it in a slightly different notation, making more  symmetries manifest: 
\begin{equation} \label{not2}
  \langle a_1b_1 (ca_2b_2) \cap (ca_3b_3)\rangle =
-\langle c | a_1\times b_1, a_2\times b_2, a_3\times b_3\rangle,
\end{equation}
Precisely, consider the three-dimensional quotient $V_4/\langle c \rangle$ of 
the space $V_4$ along the subspace generated by the vector $c$. 
The volume form $\omega_4$ in $V_4$ induces a volume form $\omega_4(c, \ast, \ast, \ast)$
in  $V_4/\langle c \rangle$, and therefore in  the dual space $(V_4/\langle c \rangle)^*$. So we can define three-brackets $\langle \ast, \ast, \ast\rangle_c$ in 
$(V_4/\langle c \rangle)^*$.
The other six vectors in eq.~(\ref{not2})
project to  the quotient.
 Taking the cross-products $\times$ of consecutive pairs of these vectors, we get three covectors in 
$(V_4/\langle c \rangle)^*$. Their volume $-\langle a_1\times b_1, a_2\times b_2, a_3\times b_3\rangle_{c}$ 
equals the invariant (\ref{not1}). So we get formula (\ref{not2}).

Notice the expansions, where we 
use $\epsilon_{\alpha \beta \gamma} \epsilon^\alpha(\cdot, a, b) = a_\beta b_\gamma - a_\gamma b_\beta$:
\begin{align}
  \langle a_1 \times b_1, a_2 \times b_2, a_3 \times b_3\rangle &= \epsilon_{\alpha \beta \gamma} \epsilon^\alpha(\cdot, a_1, b_1) \epsilon^\beta(\cdot, a_2, b_2) \epsilon^\gamma(\cdot, a_3, b_3)\\ &=
  \langle a_1 a_2 b_2\rangle \langle b_1 a_3 b_3\rangle - \langle b_1 a_2 b_2\rangle \langle a_1 a_3 b_3\rangle \\
  &= -\langle a_2 a_1 b_1\rangle \langle b_2 a_3 b_3\rangle + \langle a_2 a_3 b_3\rangle \langle b_2 a_1 b_1\rangle \\
  &= \langle a_3 a_1 b_1\rangle \langle b_3 a_2 b_2\rangle - \langle a_3 a_2 b_2\rangle \langle b_3 a_1 b_1\rangle.
\end{align}

\subsection{Configurations and Grassmannians}
\label{sec:cg}

Let us formulate now the relationship between the Grassmannians $\Gr(k,n)$ and the configuration spaces 
more accurately. We start with the notion of configurations.

Let $V_k$ be a vector space of dimension $k$.
Denote by $\Conf_n(k)$ the space of
orbits of the group GL${}_k$ acting on the space of
$n$-tuples of vectors
in $V_k$. We call it the space of \emph{configurations of $n$ vectors in $V_k$}.
It is important to notice that the sets of
configurations of vectors in two different vector spaces of the same dimension are canonically isomorphic.
Denote by $\Conf_n(\PP^{k-1})$ the space of
PGL${}_k$-orbits  on the space of
$n$-tuples of points
in $\PP^{k-1}$, called \emph{configurations of $n$ points in $\PP^{k-1}$}.

We consider an $n$-dimensional `particle vector space' $\C^n$ with a given basis 
$(e_1, \ldots ,  e_n)$. Then a generic $k$-dimensional subspace $h$ in $\C^n$ determines 
a configuration of $n$ vectors $(f_1, \ldots , f_n)$ in the dual space $h^*$: these are 
the restrictions to $h$ of the coordinate linear functionals in $\C^n$ dual to the basis 
$(e_1, \ldots ,  e_n)$. This way we get a well-defined bijection only for generic $h$,
referred to mathematically as a birational isomorphism, 
\begin{equation} \label{GRassid}
\Gr(k, n) \stackrel{\sim}{\lra} \Conf_n(k). 
\end{equation}
The group $(\C^*)^n$ acts by rescaling in the directions of the coordinate axes in $\C^k$. 
This action transforms into rescaling of the vectors of the configuration space 
$\Conf_n(k)$. The diagonal subgroup $\C^*_{\rm diag}\subset (\C^*)^n$ acts 
trivially. So the quotient group 
$(\C^*)^{n-1} = (\C^*)^n/\C^*_{\rm diag}$ acts effectively. Passing to the quotients we get a birational isomorphism
\begin{equation}
\Gr(k, n)/(\C^*)^{n-1} \stackrel{\sim}{\lra} \Conf_n(\mathbb{P}^{k-1}).
\end{equation}

The dualities
\begin{equation}
\Conf_n(k)\stackrel{=}{\lra} \Conf_n(n-k), \qquad
\Conf_n(\mathbb{P}^k)\stackrel{=}{\lra} \Conf_n(\mathbb{P}^{n-k-2})
\end{equation}
are best understood via the identification with the Grassmannian (\ref{GRassid}), followed by the obvious 
isomorphism 
$\Gr(k, n) = \Gr(n-k, n)$, obtained by  taking the orthogonal planes.  

\subsection{The Euclidean region}

Scattering amplitudes in field theory have a complicated singularity structure, including poles and branch cut singularities.  However, there are regions in the kinematic space where such singularities are absent.  In particular, amplitudes are expected on physical grounds to be real-valued and singularity-free everywhere in the \emph{Euclidean region}, reviewed in this section.  It was discussed in ref.~\cite{Bern:2008ap} in connection with MHV amplitudes in SYM theory.

The Euclidean region is defined most directly in the dual space parametrized by the $x_i$.  We impose that the coordinates of the vectors $x_i$ are real and
\begin{equation}
  \label{eq:euclidean-region}
  (x_{i} - x_{i+1})^2 = 0, \qquad
  (x_{i} - x_{j})^2 < 0, \text{~otherwise},
\end{equation}
where the distance is computed with a metric of signature $(+,-,-,-)$ or $(+,+,-,-)$.  These constraints define the Euclidean region in terms of the $x_i$ coordinates.

When transformed to twistor coordinates the first constraint in eq.~\eqref{eq:euclidean-region} is always satisfied.  However, the constraint that the components of the vectors $x_i$ should be real is harder to impose.  We can think about twistors as being spinor representations of the complexified dual conformal group.  This complexified dual conformal group has several real sections: $SU(4)$ which corresponds to Euclidean signature, $SL(4, \mathbb{R})$ which corresponds to split signature $(+,+,-,-)$ and $SU(2,2)$ which corresponds to $(+,-,-,-)$ signature.

In fact, there are two kinds of spinor representations which we call twistors (denoted by $Z$) and conjugate twistors (denoted by $W$).  There is a $Z$ and a $W$ for every particle in a scattering process, which we denote by $Z_{i}$, $W_{i}$.  Under the dual conformal group the $Z$ and $W$ twistors and transform in the opposite way.  That is, if $M$ is a dual conformal transformation,
\begin{equation}
  \label{eq:twistor-transformation}
  W \to W' = W M^{-1}, \qquad Z \to Z' = M Z.
\end{equation}  This implies that there is an invariant product $W \cdot Z$.

Now we can study the reality conditions.  We will not discuss the Euclidean signature $(+,+,+,+)$ any further since it does not allow light-like separation.  For split signature, the twistors transform under $SL(4, \mathbb{R})$ so they can be taken to be real and independent.  For Lorentzian signature $(+,-,-,-)$ the symmetry group is $SU(2,2)$.  If $M \in SU(2,2)$, then $M^\dagger \mathcal{C} M = \mathcal{C}$, where $\mathcal{C}$ is a $(2,2)$ signature matrix which we will take to be real and symmetric.  Then $\mathcal{C} Z$ transforms in the same way as $W^\dagger$ so we can consistently impose a reality condition $W^\dagger = \mathcal{C} Z$.  This implies that $(W_{i} \cdot Z_{j})^{*} = W_{j} \cdot Z_{i}$.  The light-like conditions imply that $W_{i} = Z_{i-1} \wedge Z_{i} \wedge Z_{i+1}$, so the previous reality condition can be written in terms of the $Z$ twistors alone as $\langle i-1 i i+1 j\rangle^{*} = \langle j-1 j j+1 i\rangle$.

Finally, let us translate the second condition in eq.~(\ref{eq:euclidean-region}) into twistor language.  Space-time distances $(x_i - x_j)^2$ cannot be expressed in twistor variables without first making an arbitrary choice of `infinity twistor'.  However this choice cancels in conformal ratios, and for these the dictionary between space-time and momentum twistor space then implies that
\begin{equation}
  \frac {\langle i i+1 j j+1\rangle \langle k k+1 l l+1\rangle}{\langle i i+1 l l+1\rangle \langle j j+1 k k+1\rangle} > 0,
\end{equation} for all $i,j,k,l$ for which none of the four-brackets vanishes.  This condition is certainly guaranteed if $\langle i i+1 j j+1\rangle > 0$ for all nonvanishing four-brackets of this type.

We therefore define the Euclidean region in momentum twistor space by the condition that $\langle i i+1 j j+1\rangle > 0$.  It has two sub-regions
\begin{align}
  \text{$(2,2)$ signature:} &\qquad \langle i j k l\rangle \in   \mathbb{R},\\
  \text{$(3,1)$ signature:} &\qquad \langle i-1 i i+1 j\rangle^{*} = \langle j-1 j j+1 i\rangle.
\end{align}  Note that the $(2,2)$ signature region contains the positive Grassmannian which is well-studied mathematically.  In contrast, the $(3,1)$ region does not seem to have been studied in the mathematical literature.

\section{Review of the Two-Loop \texorpdfstring{$n=6$}{n=6} MHV Amplitude}
\label{sec:GSVV}

In the previous section we reviewed that $n$-particle
scattering amplitudes in SYM theory are functions on
the $3(n-5)$-dimensional space $\Conf_n(\mathbb{P}^3)$. It is further
believed~\cite{ArkaniHamed:2012nw} that
any MHV or next-to-MHV (NMHV) amplitude, at any loop order
$L$ in perturbation theory, can be expressed in terms of functions of uniform
transcendentality
weight $2L$.  A goal of this paper is to make a sharper statement
about the mathematical structure of
these functions.  Specifically: that their structure is described by a certain
preferred collection of functions on $\Conf_n(\mathbb{P}^3)$ which are known in
the mathematics literature as cluster $\mathcal{X}$-coordinates.
In this section we provide a simple but illustrative example of this
phenomenon.

The simplest nontrivial multi-loop scattering amplitude
is the two-loop MHV amplitude for $n=6$ particles.  This was originally
computed numerically in~\cite{Bern:2008ap,Drummond:2008aq},
then
analytically in~\cite{DelDuca:2009au,DelDuca:2010zg}
in terms of generalized polylogarithm functions, and finally
in a vastly simplified
form in terms of only the classical $\Li_m$ functions
in~\cite{Goncharov:2010jf}.  We present it here very mildly reexpressed as
\begin{equation}
\label{eq:GSVV}
R^{(2)}_6 = \sum_{i=1}^3\left( L_i - \frac{1}{2} \Li_4(-v_i) \right)
- \frac{1}{8}
\left( \sum_{i=1}^3 \Li_2(-v_i)\right)^2
+ \frac{1}{24} J^4 + \frac{\pi^2}{12} J^2 + \frac{\pi^4}{72},
\end{equation}
in terms of the functions
\begin{equation}
\begin{aligned}
L_i &= \frac{1}{384} P_i^4 + \sum_{m=0}^3 \frac{(-1)^m}{(2m)!!}
P_i^m
(\ell_{4-m}(x_i^+) + \ell_{4-m}(x_i^-)),\\
P_i &= 2 \Li_1(-v_i) - \sum_{j=1}^3 \Li_1(-v_j),
\end{aligned}
\end{equation}
and
\begin{equation}
\begin{aligned}
J &= \sum_{i=1}^3 \ell_1(x_i^+) - \ell_1(x_i^-),\\
\ell_n(x) &= \frac{1}{2} (\Li_n(-x) - (-1)^n \Li_n(-1/x)).
\end{aligned}
\end{equation}
Our aim in reproducing this formula here is to highlight two
rather astonishing facts.
The first is that the argument of each $\Li_n$ function is the negative
of one of the simple cross-ratios
\begin{align}
\label{eq:nineratios}
v_1 &= \frac{\ket{35} \ket{26}}{\ket{23}\ket{56}},&
v_2 &= \frac{\ket{13} \ket{46}}{\ket{16}\ket{34}},&
v_3 &= \frac{\ket{15} \ket{24}}{\ket{45}\ket{12}},
\nonumber\\
x^+_1 &= \frac{\ket{14}\ket{23}}{\ket{12}\ket{34}},&
x^+_2 &= \frac{\ket{25}\ket{16}}{\ket{56}\ket{12}},&
x^+_3 &= \frac{\ket{36}\ket{45}}{\ket{34}\ket{56}},
\\
x^-_1 &= \frac{\ket{14}\ket{56}}{\ket{45}\ket{16}},&
x^-_2 &= \frac{\ket{25}\ket{34}}{\ket{23}\ket{45}},&
x^-_3 &= \frac{\ket{36}\ket{12}}{\ket{16}\ket{23}}
\nonumber
\end{align}
(or their inverses).
We caution the reader that the $x_i^\pm$ here are the \emph{negative}
of the $x_i^\pm$ used in~\cite{Goncharov:2010jf}, while the $v_i$
used here are related to the three $u_i$ cross-ratios most commonly
seen in the literature by $v_i = (1-u_i)/u_i$.
Of course, these 9 variables are not independent---the dimension
of $\Conf_6(\mathbb{P}^3)$ is only three---so one could choose
any three of them in terms of which to express all
of the others algebraically.
It is striking that the
argument of each $\Li_m$ function in~(\ref{eq:GSVV})
is expressible as one of these simple cross-ratios
rather than, as might have been the case, some arbitrary algebraic function
of cross-ratios.

The second striking fact about~(\ref{eq:GSVV})
is that out of the 45 distinct cross-ratios of the form
\begin{equation}
\label{eq:crossratio}
r(i,j,k,l) = \frac{\ket{i j} \ket{kl}}{\ket{j k}\ket{i l}}
\end{equation}
only the 9 shown in~(\ref{eq:nineratios}) actually appear.
Note that here, as throughout the paper, we shall never count both $x$
and $1/x$ separately.

The presentation of~(\ref{eq:GSVV}) we have given here also highlights
another theme which will pervade this paper: positivity.  The
cross-ratios defined in eq.~(\ref{eq:nineratios}) all have the
manifest property
that they are positive whenever each ordered bracket is positive,
i.e.\ whenever $\ket{ij} > 0~\forall~i < j$ (see
appendix~\ref{app:positive} for additional details on
positive configurations).
As this example and others to be discussed below suggest,
we expect all MHV amplitudes will have particularly rich
structure on the positive subset of
the domain
$\Conf_n(\mathbb{P}^3)$.
The formula~(\ref{eq:GSVV}) is expressed in terms of the natural polylogarithm
function on the domain of positive real-valued $x$:
\begin{equation}
\Li_n(-x) =
\int_{\Delta_x} \log(1 + t_1)\, d \log t_2 \wedge \cdots \wedge d \log t_n
:= L_n(x)
\end{equation}
where $\Delta_x = \{ (t_1,\ldots,t_n) : 0 < t_1 < t_2< \cdots < t_n < x\}$.
The proper continuation of eq.~(\ref{eq:GSVV}) to the part of the Euclidean
region outside the positive domain was discussed in~\cite{Goncharov:2010jf}.

In the rest of this paper we will work almost exclusively not with amplitudes
but with coproducts of motivic amplitudes, reviewed in the next section.
For such purposes it is sufficient to highlight in $R_6^{(2)}$
only the leading terms
\begin{equation}
R^{(2)}_6 = \sum_{i=1}^3 L_4(x^+_i) + L_4(x^-_i) - \frac{1}{2} L_4(v_i)
+ \cdots,
\label{eq:GSVVleading}
\end{equation}
where the dots stand for products of functions of lower weight,
which are killed by the coproduct $\delta$ reviewed in the next section.

In a certain sense this example is too simple, as this amplitude
is likely unique in SYM theory in being expressible in terms
of classical polylogarithm functions $\Li_m$ only.  We do not aim to
write explicit formulas for more general amplitudes as there is apparently no
particular preferred or canonical functional form,
so the question of \emph{what variables the function depends on} requires
a more precise definition involving the more sophisticated
mathematics to which we turn our attention in the next section.

\section{Polylogarithms and Motivic Lie Algebras}
\label{sec:math}

In this section we review some of the necessary mathematical preliminaries
on transcendental functions and explain ways of distilling
the essential motivic content of such functions.
The precise mathematical definitions of motivic avatars of poly\-log\-arithm-like functions is 
given in~\cite{G02}.
Taking for granted that such avatars exist, our goal is to provide the elements of motivic calculus necessary to describe their basic properties.

\subsection{The motivic avatars of (generalized) polylogarithms}

Let us start with the motivic background.
Given any field $F$, there is an as yet hypothetical mathematical object
called the \emph{motivic Tate Lie coalgebra} $\mathcal{L}_\bullet(F)$ of this field~\cite{B}. It
is graded by positive integers, the weights, i.e.\ one has
\begin{equation}
\mathcal{L}_\bullet(F)=\bigoplus_{n=1}^\infty\mathcal{L}_n(F).
\end{equation}
There is a cobracket $\delta$, which is a weight preserving linear map
\begin{equation}
\label{cocom}
\delta: \mathcal{L}_\bullet(F) \lra \Lambda^2\mathcal{L}_\bullet(F).
\end{equation}
It satisfies the property that the following composition is zero:
\begin{equation}
\label{comp}
\mathcal{L}_\bullet(F) \stackrel{\delta}{\lra} \Lambda^2\mathcal{L}_\bullet(F)\stackrel{\delta \wedge \operatorname{Id} -
\operatorname{Id}\wedge \delta }{\lra}  \Lambda^3\mathcal{L}_\bullet(F).
\end{equation}
The very existence of this object is known only when
$F$ is a number field~\cite{DG}.

Denote by $V^\ast$ the dual vector space to a vector space $V$.
If each of the weight components $\mathcal{L}_n(F)$ were a finite-dimensional
vector space\footnote{However this is rarely the case; see~\cite{G91b} for the treatment of duals in the infinite-dimensional situation.},
this would mean that the dual graded vector space, defined as
\begin{equation}
\operatorname{L}_\bullet(F):= \bigoplus_{n=1}^\infty\operatorname{L}_{-n}(F), \qquad \operatorname{L}_{-n}(F): = (\mathcal{L}_n(F))^*,
\end{equation}
is a graded Lie algebra, with the bracket dual to the map $\delta$.
Then the condition~(\ref{comp}) follows from the Jacobi identity.

Consider the universal enveloping algebra $\operatorname{U}_\bullet(F)$ of the Lie algebra $\operatorname{L}_\bullet(F)$.
It is graded by non-positive integers. By definition, $\operatorname{U}_0(F) =\Q$.
Its graded dual
\begin{equation}
\mathcal{A}_\bullet(F):= \bigoplus_{n=0}^\infty\mathcal{A}_n(F), \qquad \mathcal{A}_n(F):= (\operatorname{U}_{-n}(F))^\ast
\end{equation}
has the structure of a commutative graded
Hopf algebra with a coproduct $\Delta$. One has
\begin{equation}
\mathcal{L}_\bullet(F) = \mathcal{A}_\bullet(F)/(\mathcal{A}_{>0}(F)\cdot \mathcal{A}_{>0}(F)).
\end{equation}
So the elements of $\mathcal{L}_\bullet(F)$ are the elements of $\mathcal{A}_\bullet(F)$ considered modulo the
sums of products of
elements of positive weight.

Now let $X$ be a complex variety and denote by $\C(X)$ the field of rational functions on $X$.
To give a first idea why the Lie coalgebra $\mathcal{L}_\bullet(F)$ and the Hopf algebra $\mathcal{A}_\bullet(F)$
are relevant to the analytic theory of polylogarithms and their generalizations, let us start with a
vague statement:

\emph{Any weight $n$ polylogarithm-like function $\mathcal{F}$ on $X$ gives rise to
an element $\mathcal{F}^{\mathcal{M}}$ of $\mathcal{A}_\bullet(\C(X))$. Considered modulo products of such
functions, it provides
an element of $\mathcal{L}_\bullet(\C(X))$.}

Precisely, but using terminology which we are not going to explain
here, a `weight $n$ polylogarithm-like function $\mathcal{F}$ on $X$' is a period of a weight $n$ framed variation
of mixed $\Q$-Hodge structures on an open part of $X$, with the Hodge weights $h^{p,q}$ being zero unless $p=q$,
which is of `geometric origin'. We call such functions \emph{Hodge-Tate periods}.

It is conjectured that in passing from $\mathcal{F}$ to its motivic
avatar $\mathcal{F}^{\mathcal{M}}$ we do not `lose any information
about $\mathcal{F}$'. See~\cite{G02}, where the motivic avatars of multiple polylogarithms were defined,
for a detailed account on the subject.
The main point is this:

\emph{To know a Hodge-Tate period function $\mathcal{F}$ is the same thing as to know its motivic avatar
$\mathcal{F}^{\mathcal{M}}$. The vector space $\mathcal{A}_n(\C(X))$  is precisely the linear vector space
spanned by motivic avatars of the weight $n$ Hodge-Tate period functions on open parts of $X$.}

The benefit of replacing $\mathcal{F}$ by $\mathcal{F}^{\mathcal{M}}$ is that
the latter lie in a Hopf algebra. Since this Hopf algebra is graded by non-negative integers,
its elements can be studied by applying the coproduct to them,
which is expressible via similar objects of lower weight.
The fundamental fact is that the kernels of the coproduct maps
\begin{equation}
\Delta: \mathcal{A}_\bullet(\C(X))\lra \otimes^2\mathcal{A}_\bullet(\C(X)), \qquad
\delta: \mathcal{L}_\bullet(\C(X))\lra \Lambda^2\mathcal{L}_\bullet(\C(X))
\end{equation}
are given by constants. Therefore,
taking the coproduct does not discard any essential information about the function.

\subsection{Higher Bloch groups}

So the key question is to describe the Hopf algebra $\mathcal{A}_\bullet(\C(X))$, or, equivalently, the
Lie coalgebra $\mathcal{L}_\bullet(\C(X))$.
The structure of $\mathcal{L}_\bullet(F)$ for any field $F$ is essentially
predicted by the Freeness Conjecture~\cite{G91b}. We start from its low weight consequences.

\vspace{3mm}
\noindent
Weight 1. First of all, one has
\begin{equation}
\label{wt1}
\mathcal{L}_1(F) = F^*\otimes_\Z \Q.
\end{equation}

\vspace{3mm}
\noindent
Weight 2. Let us recall the definition of the Bloch group $B_2(F)$~\cite{Bl,Su}.
Let $\Q[F]$ be the $\Q$-vector space
with basis elements $\{x\}$ for $x\in F$. Recall the cross-ratio
\begin{equation}
\label{eq:crossratio6}
r(x_1, x_2, x_3, x_4) = \frac{(x_1-x_2)(x_3-x_4)}{(x_2-x_3)(x_1-x_4)}.
\end{equation}
Notice the unusual normalization of the cross-ratio: $ r(\infty, -1,
0, x) = x$.

Given any $5$ points
$x_1, \dotsc, x_5$ on the projective line $\mathbb{P}^1(F)$ over $F$, set
\begin{equation} \label{5term}
\sum_{i=1}^5\{r(x_i, x_{i+1}, x_{i+2}, x_{i+3})\} \in \Q[F].
\end{equation}
Here the indices are considered modulo $5$.

Let $R_2(F)$ be the subspace generated
by $\{0\}$ and the \emph{five-term relations}~(\ref{5term}). Over the complex numbers, (\ref{5term}) provides
Abel's famous pentagon relation for the dilogarithm.
Precisely, consider the Bloch-Wigner single valued version of the dilogarithm,  altered by the $z\lms -z$ argument change:
\begin{equation}
\mathcal{L}_2(z):= \Im(\Li_2(-z) + \arg(1+z) \log|z|), \quad z\in \C.
\end{equation}
Then Abel's pentagon relation for the Bloch-Wigner dilogarithm is
\begin{equation}
\label{eq:fiveterm}
\sum_{i=1}^5\mathcal{L}_2(r(x_i, x_{i+1}, x_{i+2}, x_{i+3}))=0.
\end{equation}
The Bloch-Wigner function also satisfies the reality condition
$\mathcal{L}_2(z)+ \mathcal{L}_2(\overline z)=0$. Any functional
equation for the  Bloch-Wigner function can be deduced from this and
Abel's equation.  Since the complex-valued dilogarithm certainly does not
satisfy in general any reality condition, one refers to  Abel's
pentagon equation  as the \emph{generic functional equation for the
dilogarithm}.

Now the Bloch group is a $\Q$-vector space given by the quotient
\begin{equation}
\label{wt2}
\B_2(F):= \frac{\Q[F]}{R_2(F)}.
\end{equation}
In general we denote elements of $\B_n(F)$ by $\{x\}_n$, with $x \in F$.
It can be deduced from Beilinson's conjectures~\cite{B} and  Suslin's theorem~\cite{Su},
that one should have
\begin{equation}
\mathcal{L}_2(F) = \B_2(F).
\end{equation}
Set  $F^*_\Q:= F^*\otimes \Q$.
The weight $2$ part of the cocommutator map~(\ref{cocom}) is a map
\begin{equation}
\delta: \mathcal{L}_2(F) \to \Lambda^2 \mathcal{L}_1(F).
\end{equation}
The claim is that using the isomorphisms~(\ref{wt1}) and (\ref{wt2}), it becomes a map
\begin{equation}
\delta: \B_2(F) \lra \Lambda^2 F^*_\Q, \quad \{x\}_2 \lms (1+x) \wedge x, \quad \{0\}_2, \; \{-1\}_2 \lms 0.
\end{equation}
This can also be deduced from Suslin's theorem.
A non-trivial but not difficult fact to check is that the map $\{x\} \lms (1+x) \wedge x$ kills the five-term relations~(\ref{5term}), and thus descends to a map of the space $\B_2(F)$.

Notice that, unlike in the more traditional way to present Abel's
pentagon identity, all terms in this formula appear with a plus
sign. Moreover, the arguments  of the pentagon equation are nothing
else but the cluster $\mathcal{X}$-coordinates on the configuration
space $\Conf_5(\PP^1)$ (see sec.~\ref{sec:intr-clust-algebr}).
In particular, this explains the origin of the
non-standard normalization of the classical cross-ratio used in the
definition (\ref{eq:crossratio6}).

\vspace{3mm}
\noindent
Weight 3. Let us describe now the space $\mathcal{L}_3(F)$,  following~\cite{G91b}.  Consider the \emph{triple ratio of $6$ points $(z_1, \dotsc, z_6)$ in  $\mathbb{P}^2$}, given by the formula
\begin{equation}
\label{9.7.12.10}
r_3(z_1, \dotsc, z_6):= -\frac{\langle 124\rangle\langle 235\rangle\langle 316\rangle}{\langle 125\rangle
\langle 236\rangle\langle 314\rangle}.
\end{equation}
Here we pick  vectors $(l_1, \dotsc, l_6)$ in a three-dimensional
vector space $V_3$
projecting onto the
points $(z_1, \dotsc, z_6)$, and set $\langle i j k\rangle := \omega_3(l_i, l_j, l_k)$, where
$\omega_3$ is a volume form in $V_3$.

It was proved in~\cite{G91b} that
the triple ratio $r_3$ plays  a similar role for the trilogarithm as the classical cross-ratio in eq.~(\ref{eq:crossratio6}) does for the dilogarithm.
Precisely, consider the following single-valued version of the trilogarithm, which is the function from~\cite{Z} with argument modified by the change $z \lms -z$:
\begin{equation}
\mathcal{L}_3(z):= \Re\Bigl(\Li_3(-z) - \Li_2(-z) \log|z| - \frac{1}{3}\log^2|z|\log(1+z)\Bigr), \quad z\in \C.
\end{equation}
The functional equations for the trilogarithm are provided by
configurations of $7$ points $(z_1, \dotsc, z_7)$ in $\C\mathbb{P}^2$. Specifically,
\begin{equation}
\label{genericfet} \sum_{i=1}^7(-1)^i\Bigl(\Alt_{6}\mathcal{L}_3(r_3(z_1, \dotsc, \widehat
z_i, \dotsc, z_7))\Bigr)=0.
\end{equation}
Here $\Alt_{6}$ stands for the skew-symmetrization of the six points $z_1, \dotsc, \widehat
z_i, \dotsc, z_7$. 

The function 
$\mathcal{L}_3(z)$ satisfies the reality equation $\mathcal{L}_3(z) = \mathcal{L}_3(\overline z)$. 
Just like in the case of the Bloch-Wigner function, any functional equation for the 
function $\mathcal{L}_3(z)$ can be deduced from the reality equation and the 
equation (\ref{genericfet}). So the latter is referred to as the 
generic functional equation for the trilogarithm.

For an arbitrary field $F$, given any $7$ points on the projective plane over $F$,
consider an element
\begin{equation}
\label{9.7.12.11}
\sum_{i=1}^7 (-1)^i \Bigl(\Alt_{6} \{ r_3(z_1, \dotsc, \widehat
z_i, \dotsc, z_7)\}\Bigr) \in \Q[F].
\end{equation}
Let $R_3(F)$ be the subspace
generated by the elements~(\ref{9.7.12.11}),  where
$(z_1, \dotsc, z_7)$ are  points in  the projective plane over $F$, and $\{0\}$. Set
\begin{equation}
\label{wt3}
\B_3(F):= \frac{\Q[F]}{R_3(F)}.
\end{equation}
One deduces from the work~\cite{G91a,G91b}
on the proof of Zagier's conjecture~\cite{Z} on special values of
Dedekind $\zeta$-functions at $s=3$ that one should have
\begin{equation}
\mathcal{L}_3(F) = \B_3(F),
\end{equation}
However
the nature of the triple ratio~(\ref{9.7.12.10}) was a mystery. It was realized much later that the triple ratio is a  cluster $\mathcal{X}$-coordinate on the configuration space $\Conf_3(\PP^2)$, and moreover it is one  which cannot be reduced to cross-ratios of the type in eq.~(\ref{eq:crossratio6}). We will return to this later on.

The weight $3$ part of the cocommutator map~(\ref{cocom}) is a map\footnote{Here and in all that follows we use $V \otimes W$ to denote the summand in $\Lambda^2 (V \oplus W)$ given by vectors of the form $v \otimes w - w \otimes v$, since the map $v \otimes w \in V \otimes W \lms v \otimes w - w \otimes v \in \Lambda^2 (V \oplus W)$ is injective.}
\begin{equation}
\mathcal{L}_3(F) \to \mathcal{L}_2(F)\otimes \mathcal{L}_1(F).
\end{equation}
It follows from~\cite{G91b} that, using the
isomorphisms~(\ref{wt1}), (\ref{wt2}) and (\ref{wt3}), it becomes a
map
\begin{equation}
\delta: \B_3(F) \lra \B_2(F) \otimes F^*_\Q, \qquad \{x\}_3 \lms \{x\}_2 \otimes x.
\end{equation}
A quite non-trivial fact to check here is that the map $\{x\} \lms \{x\}_2  \otimes x$ kills the relations~(\ref{9.7.12.11}) and thus descends to a map defined on the space $\B_3(F)$, see~\cite{G95}.

\vspace{3mm}
The higher analogs of the $\B$-groups were defined in~\cite{G91a, G91b}. The group
$\mathcal{B}_n(F)$ is the quotient of the $\Q$-vector space $\Q[F]$ by the subspace of functional equations for
the classical $n$-logarithm. Although the functional equations are not known explicitly in general,
the subgroup they generate is defined for all $n$ inductively. Namely, consider a map
\begin{equation}
\delta'_n: \Q[F] \lra
\begin{cases}
\mathcal{B}_{n-1}(F)\otimes F^*_\Q, &n \ge 3\\
\Lambda^2 F^*_\Q, &n = 2,
\end{cases}
\quad\{x\}\lms
\begin{cases}
\{x\}_{n-1}\otimes x, &n \ge 3,\\
(1-x) \wedge x, &n = 2.
\end{cases}
\end{equation}
Now replace $F$ by the field $F(t)$ of rational functions in one variable, and
take an element $\sum_i a_i\{f_i(t)\} \in \Q[F(t)]$ killed by the map $\delta'_n$. We define a subspace
$\mathcal{R}_n(F) \subset \Q[F]$ as the subspace generated by the elements
$\sum_i a_i(\{f_i(0)\} - \{f_i(1)\})$, where we added $\{\infty\}:=0$. Then we set
\begin{equation}
\mathcal{B}_n(F):= \frac{\Q[F]}{\mathcal{R}_n(F)}.
\end{equation} The change of notation from ${\B}_n(F)$
to $\mathcal{B}_n(F)$ emphasizes that we deal with the definition where the functional equations are not known explicitly.
The map $\delta'_n$ induces a map
\begin{equation}
\label{db}
\delta_n:\mathcal{B}_n(F) \lra
\begin{cases}
\mathcal{B}_{n-1}(F)\otimes F^*_\Q, &n \ge 3,\\
\Lambda^2 F^*_Q, &n=2,
\end{cases}
\quad \{x\}_n \lms
\begin{cases}
\{x\}_{n-1}\otimes x, &n \ge 3,\\
(1-x) \wedge x, &n = 2.
\end{cases}
\end{equation}
One has $\mathcal{B}_n(F) = {\B}_n(F)$ for $n=2,3$.

At this point one might ask whether we have $\mathcal{L}_n(F) = \mathcal{B}_n(F)$ for all $n$.
It was shown in~\cite{G91b} that this is not the case starting with $n=4$.
Since this is the case we deal with when studying two-loop amplitudes, let us discuss it in detail.

\vspace{3mm}
\noindent
Weight 4. It is conjectured in~\cite{G91b}\footnote{This is a conjecture about a conjectural object. The point is that the very existence of
the Lie coalgebra $\mathcal{L}_\bullet(F)$, although still conjectural, follows from some `standard'
conjectures in algebraic geometry. Although there is a lot of evidence for the conjecture on $\mathcal{L}_4(F)$,
it is not known how to reduce it to any `first principles' conjectures.} that the $\Q$-vector
space  $\mathcal{L}_{4}(F)$ is an extension
\begin{equation}
\label{9.7.12.12}
0 \lra \mathcal{B}_4 \lra \mathcal{L}_{4} \lra \Lambda^2 {\B}_2 \lra 0.
\end{equation}
Here we start skipping the field $F$ in the notation.

To understand the nature of this extension, let us look at the coproduct map
\begin{equation}
\label{9.7.12.2}
\delta: \mathcal{L}_{4} \lra \Lambda^2 \mathcal{L}_2 \bigoplus \mathcal{L}_3\otimes \mathcal{L}_1
= \Lambda^2 {\B}_2 \bigoplus {\B}_3\otimes F^*_\Q.
\end{equation}
The map
\begin{equation}
\label{BBmap} \delta_{2,2}: \mathcal{L}_{4} \lra \Lambda^2 {\B}_2
\end{equation} in~(\ref{9.7.12.12})
is just a part of the coproduct. It is known to be surjective.
The restriction of the coproduct to the subspace $\mathcal{B}_4$ in~(\ref{9.7.12.12})
is  described as the map~(\ref{db}) for $n=4$. 

One can reformulate this as follows. Given an element $l \in {\cal L}_4(F)$, 
one can ask whether it can be written as a sum of the (motivic) classical $4$-logarithms. 
If $\delta_{2,2}(l) \not =0$, the answer is no. Indeed, the coproduct 
of the classical motivic $4$-logarithm is given by 
$$
\delta: \{x\}_4 \lms \{x\}_3 \otimes x \in \B_3(F) \otimes F^*_\Q. 
$$
Therefore the $\B_2\wedge \B_2$-component is zero. 
The deeper part of the conjecture tells that 
if $\delta_{2,2}(l) =0$, the answer is yes. So the map (\ref{BBmap}) is precisely the obstruction 
for an element $l$ to be  a sum of the (motivic) classical $4$-logarithms.

\vspace{3mm}
In summary, one can express $\mathcal{L}_{4}$ in terms of the higher Bloch groups,
which reflect properties of the classical polylogarithms, which are function of a single variable.
The Freeness Conjecture tells that a similar description is expected for all $\mathcal{L}_{n}$.
Here is its essential part:

\begin{conjecture}
Let $\operatorname{Lieb}_\bullet(F)$ be the free graded Lie algebra generated by the $\Q$-vector spaces
$\mathcal{B}_{n}(F)^*$ dual to $\mathcal{B}_{n}(F)$, where $n\geq 2$ and the weight of $\mathcal{B}_{n}(F)^*$ is $-n$.
Let us denote by $\operatorname{Lieb}^*_\bullet(F)$ the graded dual Lie coalgebra, graded by $n=2,3,\ldots$
Then one has
\begin{equation}
\mathcal{L}_\bullet(F) = {\operatorname{Lieb}}^*_\bullet(F).
\end{equation}
\end{conjecture}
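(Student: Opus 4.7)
The plan is to build a natural morphism of graded Lie coalgebras $\phi: \operatorname{Lieb}^*_\bullet(F) \to \mathcal{L}_\bullet(F)$ and then to argue it is an isomorphism weight by weight. By the universal property dual to freeness of a Lie algebra, specifying $\phi$ is equivalent to specifying a cogenerating family, namely a collection of linear sections $\mathcal{B}_n(F) \hookrightarrow \mathcal{L}_n(F)$ for each $n \geq 2$, whose images generate $\mathcal{L}_\bullet(F)$ under the cobracket. Such sections are already produced by the excerpt: they are the motivic avatars $\{x\}_n \in \mathcal{L}_n(F)$ of the classical polylogarithms, on which the cobracket acts by $\delta\{x\}_n = \{x\}_{n-1} \otimes x$, precisely matching the cobracket on the free Lie coalgebra cogenerated by the $\mathcal{B}_n(F)$. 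Functoriality of the construction in $F$ and compatibility with $\delta$ show that $\phi$ is a well-defined morphism of Lie coalgebras.

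To prove that $\phi$ is an isomorphism, the natural strategy is induction on weight, using the cobracket filtration. For each weight $n$, I would write
\begin{equation}
0 \lra \mathcal{B}_n(F) \lra \mathcal{L}_n(F) \lra \ker\bigl(\delta_{\operatorname{cl}}: \operatorname{Lieb}^*_n(F)/\mathcal{B}_n(F) \lra \operatorname{Lieb}^*_n(F)\wedge \operatorname{Lieb}^*_n(F)\bigr) \lra 0
\end{equation}
and match the outer terms with the corresponding layers in $\operatorname{Lieb}^*_n(F)$. The base cases $n=2$ and $n=3$ are known: the first follows from Suslin's theorem identifying $\mathcal{L}_2(F)$ with $\mathcal{B}_2(F)$, and the second follows from the work of~\cite{G91b} establishing $\mathcal{L}_3(F) = \mathcal{B}_3(F)$. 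The weight-four case $0 \to \mathcal{B}_4 \to \mathcal{L}_4 \to \Lambda^2 \mathcal{B}_2 \to 0$ from eq.~(\ref{9.7.12.12}) matches the prediction, since the degree-$4$ component of a free Lie coalgebra on generators in weights $\geq 2$ is exactly $\mathcal{B}_4 \oplus \Lambda^2 \mathcal{B}_2$. The inductive step amounts to showing that no new primitives appear in weight $n$ beyond those inside $\mathcal{B}_n(F)$, which should follow from the vanishing of $\ker \delta$ modulo $\mathcal{B}_n(F)$ by the same cocycle calculus used at weight four.

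The main obstacle, and the reason this remains a conjecture, is twofold. First, the very existence of $\mathcal{L}_\bullet(F)$ is itself hypothetical outside the case of number fields (where it follows from~\cite{DG}), so any argument must be conditional on the standard motivic conjectures. Second, even granting that $\mathcal{L}_\bullet(F)$ exists, showing that the iterated cobrackets do not impose any relations beyond those coming from Jacobi requires a dimension/rank match: dually, $\operatorname{L}_\bullet(F)$ must be free on $\mathcal{B}_n(F)^*$, and so the Poincar\'e--Birkhoff--Witt dimensions of $\operatorname{U}_\bullet(F)$ must match the Euler--Poincar\'e computation of motivic cohomology in each weight. This is exactly the content of Beilinson--Soul\'e vanishing together with Borel's rank computation $\dim_\Q K_{2n-1}(F)\otimes \Q$ matching $\dim_\Q \mathcal{B}_n(F)$ for number fields, neither of which is known in sufficient generality. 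So a realistic strategy is to reduce the Freeness Conjecture to these standard conjectures and to prove it unconditionally wherever they are established --- most notably, for global fields.
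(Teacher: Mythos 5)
You are attempting to prove the Freeness Conjecture, and the paper offers no proof of it --- it is stated as an open conjecture, with a footnote stressing that even its weight-four consequence (the extension $0 \to \mathcal{B}_4 \to \mathcal{L}_4 \to \Lambda^2\B_2 \to 0$ of eq.~(\ref{9.7.12.12})) is ``a conjecture about a conjectural object'' for which ``it is not known how to reduce it to any `first principles' conjectures.'' So your proposal must stand on its own, and it does not. The central gap is your first step. Freeness of $\operatorname{Lieb}_\bullet(F)$ classifies Lie algebra maps \emph{out} of it; dually, the coalgebra $\operatorname{Lieb}^*_\bullet(F)$ receives maps, which (in the conilpotent graded setting) are classified by weight-wise projections $\mathcal{L}_n(F)\to\mathcal{B}_n(F)$, not by the sections $\mathcal{B}_n(F)\hookrightarrow\mathcal{L}_n(F)$ you propose: a map $\phi\colon \operatorname{Lieb}^*_\bullet(F)\to\mathcal{L}_\bullet(F)$ dualizes to a map of $\operatorname{L}_\bullet(F)$ \emph{into} a free Lie algebra, which no universal property supplies. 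Worse, neither the sections nor the projections exist canonically: the paper emphasizes that the weight-four extension is non-split, so that ``there is even no canonical way to write a general element of $\mathcal{L}_4$,'' and a natural retraction $\mathcal{L}_4\to\mathcal{B}_4$ would split it. Even the input you take for granted --- that $\{x\}\mapsto\{x\}_n$ factors through $\mathcal{B}_n(F)=\Q[F]/\mathcal{R}_n(F)$, i.e.\ that the motivic classical $n$-logarithms satisfy all the specialization-generated functional equations --- is part of the conjectural package rather than an established fact you may cite.

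The inductive core is conditional or circular at every stage. The base cases are not ``known'': the paper derives $\mathcal{L}_2(F)=\B_2(F)$ from Suslin's theorem \emph{together with Beilinson's conjectures}, and $\mathcal{L}_3(F)=\B_3(F)$ likewise carries a ``one should have.'' Weight four, which you use as the template for the inductive step, is precisely the unproven conjecture of~\cite{G91b}, so ``the same cocycle calculus used at weight four'' does not exist as a theorem to invoke. Your displayed sequence is also not well-formed: the cobracket lands in the weight-$n$ component of $\Lambda^2\operatorname{Lieb}^*_\bullet(F)$, not in $\operatorname{Lieb}^*_n(F)\wedge\operatorname{Lieb}^*_n(F)$ (which sits in weight $2n$), and $\mathcal{B}_n(F)$ is a subspace of $\operatorname{Lieb}^*_n(F)$ only after choosing the very splitting whose existence is at issue. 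Finally, the reduction you call ``realistic'' --- Beilinson--Soul\'e vanishing plus Borel's rank computation --- yields a consistency check on graded dimensions for number fields (the one case where $\mathcal{L}_\bullet(F)$ is known to exist, by~\cite{DG}), but an equality of Poincar\'e series for $\operatorname{U}_\bullet(F)$ does not force $\operatorname{L}_\bullet(F)$ to be free on the $\mathcal{B}_n(F)^*$; a graded Lie algebra can have the right dimensions without being free, and the paper explicitly records that no reduction to standard conjectures is known even in weight four. What you have written is a fair summary of the evidence for the conjecture, not a proof or a currently viable proof strategy.
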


For example, when $n=1, 2,3$ we cannot present $n$ as a sum of two integers $\geq 2$, and thus
the Freeness Conjecture implies  $\mathcal{L}_{n} = \mathcal{B}_n$ for $n=1, 2,3$.
In contrast with this, $4=2+2$, and so, besides $\mathcal{B}_{4}$, we have an extra contribution to
$\mathcal{L}_{4}$ given by $\Lambda^2 {\B}_{2}$. It is
the dual to the commutators of the weight $-2$ elements in the free Lie algebra
$\operatorname{Lieb}_\bullet$.

\subsection{Symbols}
Recall the motivic Hopf algebra $\mathcal{A}_\bullet(F)$.
One has $\mathcal{A}_1(F) = F^*_\Q$.
Let
$\Delta_{n-1,1}: \mathcal{A}_n(F) \lra \mathcal{A}_{n-1}(F)\otimes \mathcal{A}_1(F)$
be the $(n-1, 1)$-component of the coproduct. Iterating it, we get a sequence of maps
\begin{align}
\mathcal{A}_n(F) &\lra \mathcal{A}_{n-1}(F)\otimes \mathcal{A}_1(F)\\
\nonumber
&\lra \mathcal{A}_{n-2}(F)\otimes \mathcal{A}_1(F)
\otimes \mathcal{A}_1(F)\\
\nonumber
&\lra \ldots\\
\nonumber
&\lra \otimes^n\mathcal{A}_1(F) =
\otimes^n F^*_\Q.
\end{align}
Its composition is a map, called the symbol map:
\begin{equation}
{\rm S}: \mathcal{A}_n(F) \lra \otimes^n F^*_\Q.
\end{equation}

\subsection{Motivic scattering amplitudes}

The $L$-loop $n$-particle MHV motivic amplitudes  in
SYM theory, considered modulo products,  are elements
$\mathcal{A}^{\mathcal{M}}_{n, L} \in \mathcal{L}_{2 L}$.
The two-loop motivic amplitudes are therefore elements
\begin{equation}
\label{9.7.12.1}
\mathcal{A}^{\mathcal{M}}_{n,2}(Z_1, \dotsc, Z_n) \in \mathcal{L}_{4}(F), \quad F = \Q(Z_1, \dotsc, Z_n)
\end{equation}
defined by a generic configuration of $n$ points $(Z_1, \dotsc, Z_n)$ in $\mathbb{P}^3$.
Therefore, as was explained above, according to the conjectural description~(\ref{9.7.12.12}) of $\mathcal{L}_{4}$,
they can be expressed via $\Li_4(z)$ if and only if the $\Lambda^2 {\B}_2$
obstruction vanishes. This is exactly what happened in~\cite{Goncharov:2010jf} where the  two-loop  $n=6$ MHV amplitude
was calculated as a sum of classical $4$-logarithms.
The problem set out for us here and subsequent work is therefore:

\begin{problem}
\label{problem1.2}
\emph{Calculate the motivic
$n$-particle two-loop MHV amplitudes for $n>6$.
More specifically, this amounts to computing the coproduct}
\begin{equation}
\delta_{\mathcal{M}}(\mathcal{A}^{\mathcal{M}}_{n,2}(Z_1, \dotsc, Z_n)) \in \Lambda^2 {\B}_2(F) \bigoplus \B_3(F)\otimes F^*_\Q, \qquad
F = \Q(Z_1, \dotsc, Z_n).
\label{eq:problem}
\end{equation}
\end{problem}

The coproduct determines the amplitude as a function up to a constant and products of similar functions of lower weight. Unlike the mysterious extension~(\ref{9.7.12.12}), which is non-split,
the coproduct is given in terms of the groups ${\B}_n$, $n=1,2,3$, and so its calculation is a precise problem.
Let us emphasize that due to the `one-variable' nature of the groups $\B_n$, to write an element in $\Lambda^2 {\B}_2 \bigoplus \B_3\otimes F^*_\Q$  we need a collection of functions  on the configuration space $\Conf_n(\PP^3)$---the arguments of the dilogarithms and
trilogarithms.   The only ambiguity of such a presentation  results
from the functional equations they satisfy.

As we show below, these functions for the $2$-loop $n$-particle MHV motivic amplitudes, where
$n=6,7$, are cluster $\mathcal{X}$-coordinates on the space $\Conf_n(\PP^3)$.  Moreover, although functional equations do come into the picture, the ones we see for the two-loop amplitudes are also of cluster nature, and discussed in Appendix~\ref{sec:trilog-ident}.

We would like to stress that without the motivic approach one cannot even formulate
Problem~\ref{problem1.2}---there is no way to define the coproduct just on the level of functions.
Moreover, due to non-split nature of the extension~(\ref{9.7.12.12}),
there is even no canonical way to write a general element of  $\mathcal{L}_{4}$.

\section{The Coproducts of Two-Loop MHV Motivic Amplitudes}
\label{sec:mc}

Using the symbol of the two-loop $n$-point MHV amplitude, as computed in~\cite{CaronHuot:2011ky}, one can calculate the coproduct $\delta(\mathcal{A}_{n,2}^{\mathcal{M}})$ of the corresponding motivic amplitude as defined in the previous section and summarized in eq.~(\ref{eq:problem}).
For $n=6$ the $\Lambda^2 \B_2$ component is trivial, as was noted already in~\cite{Goncharov:2010jf}, while the $\B_3 \otimes \mathbb{C}^*$ part of the coproduct may be read off immediately from~(\ref{eq:GSVVleading}):
\begin{equation}
\delta(\mathcal{A}_{6,2}^{\mathcal{M}})\rvert_{\B_3 \otimes \mathbb{C}^*}=
\sum_{i=1}^3 \{x_i^+\}_3 \otimes x_i^+ + \{x_i^-\}_3 \otimes x_i^-
- \frac{1}{2} \{v_i\}_3 \otimes v_i.
\end{equation}

We defer results for general $n$ to a subsequent publication and
present here explicit results only for $n=7$, since our main goal
at the moment is to call the reader's attention to the
same two non-trivial features that we emphasized in
section~\ref{sec:GSVV}.
The first feature is that the entry $z$ appearing inside each $\{ z \}_2$ or
$\{ z \}_3$ is a single cross-ratio (rather than,
as might have been the case, some arbitrary algebraic function of
cross-ratios); the second feature is that of the thousands
of such cross-ratios
one can form at $n=7$, only a small handful actually
appear in the motivic amplitudes.
The structure of the results presented here will be extensively discussed
in subsequent sections.

\subsection{The \texorpdfstring{$\Lambda^2 \B_2$}{B2 wedge B2} component for \texorpdfstring{$n=7$}{n=7}}

We find that the $\Lambda^2 \B_2$ component of $\delta(\mathcal{A}_{7,2}^{\mathcal{M}})$
can be expressed as
\begin{multline}
  \label{eq:seven-pt-b2wb2}
\sum_{\rm dihedral}\Bigg(
\Big\lbrace \frac {\langle 7 \times 1, 2 \times 3, 4 \times 5\rangle}{\langle 1 2 7\rangle \langle 3 4 5\rangle}\Big\rbrace_{2}
\wedge
\Big\lbrace \frac {\langle 2 \times 3, 4 \times 6, 7 \times 1 \rangle}{\langle 1 6 7\rangle \langle 2 3 4\rangle}\Big\rbrace_{2}\\
+ \frac{1}{2} \Big\lbrace \frac {\langle 3 1 2\rangle \langle 3 4 7\rangle}{\langle 3 7 1\rangle \langle 3 4 2\rangle}\Big\rbrace_{2}
\wedge
\Big\lbrace \frac {\langle 7 1 3\rangle \langle 7 4 6\rangle}{\langle 7 1 6\rangle \langle 7 3 4\rangle}\Big\rbrace_{2}\Bigg) + \mbox{parity conjugate},
\end{multline}
where the sum indicates that one should sum over the dihedral group
acting on the particle labels, resulting in a total of 42 distinct terms (the $1/2$ in front of the second term is a symmetry factor).

It is important to note that this expression is of course not unique.
There are two kinds of ambiguities:  first of all, there are dilogarithm
identities which hold at the level of $\B_2$ and are a consequence
of Abel's pentagon relation~(\ref{eq:fiveterm}).
One slightly non-trivial, but easily checked,
example is
\begin{multline}
\label{eq:b2identity}
0 =
   -\left\{\frac{\langle 127\rangle  \langle 156\rangle  \langle 345\rangle
    }{\langle 157\rangle  \langle 1\times 2,3\times 4,5\times 6\rangle
    }\right\}_2+\left\{\frac{\langle 127\rangle  \langle 256\rangle
\langle
    345\rangle }{\langle 257\rangle  \langle 1\times 2,3\times 4,5\times
    6\rangle }\right\}_2\\+\left\{-\frac{\langle 567\rangle  \langle 1\times
    2,3\times 4,5\times 6\rangle }{\langle 256\rangle  \langle 1\times
    7,3\times 4,5\times 6\rangle }\right\}_2-\left\{-\frac{\langle
127\rangle
    \langle 345\rangle  \langle 567\rangle }{\langle 157\rangle  \langle
    2\times 7,3\times 4,5\times 6\rangle }\right\}_2-\left\{\frac{\langle
    571\rangle  \langle 562\rangle }{\langle 512\rangle  \langle 567\rangle
    }\right\}_2.
\end{multline}
Such identities relate different, but equivalent, expressions for
the $\Lambda^2 \B_2$ component which in general have different numbers of terms.

Secondly there is ambiguity in
writing~(\ref{eq:seven-pt-b2wb2}) due to the trivial identities
\begin{equation}
\label{eq:squareidentity}
\{ x \}_2 \wedge \{ y \}_2 = - \{ 1/x \}_2 \wedge \{ y \}_2
= \{ 1/x \}_2 \wedge \{ 1/y \}_2 = - \{ x \}_2 \wedge \{ 1/y \}_2
\end{equation}
which preserve the number of terms.

Given these ambiguities one may wonder about the value in providing any
explicit
formula such as~(\ref{eq:seven-pt-b2wb2}).  Is there some invariant
way of presenting the $\Lambda^2 \B_2$ part of the coproduct of this amplitude,
without having to commit any particular representation?
To phrase this question
in language familiar to physicists: if we think about the
equations~(\ref{eq:b2identity})
and~(\ref{eq:squareidentity}) as generating some kind of gauge transformations,
then what is
the gauge-invariant content of the $\Lambda^2 \B_2$ component of this
coproduct?

We pose this question here merely as a teaser; to answer it fully requires
the mathematical machinery to be built up in
section~\ref{sec:intr-clust-algebr}.
Nevertheless we will also tease the reader here with the answer: the
42 terms in~(\ref{eq:seven-pt-b2wb2}) are naturally in correspondence
with certain quadrilateral faces of the $E_6$ Stasheff polytope, and
this is the shortest manifestly symmetric (dihedral $+$ parity) representative
with this property.

\subsection{The \texorpdfstring{$\B_3 \otimes \mathbb{C}^*$}{B3 x C*} component for \texorpdfstring{$n=7$}{n=7}}
\label{sec:mhv-1}

In order to save space, we will make use of the dihedral symmetry
and invariance under rescaling of the MHV amplitude to write only the two independent
$\B_3$ components necessary to express the full answer.
First, for the
$\langle 124\rangle$ component of $\C^*$ we find the $\B_3$ element
{\allowdisplaybreaks
\begin{multline}
\label{eq:mctwo}
\left\lbrace\frac{\langle 127\rangle \langle 256\rangle \langle
    345\rangle }{\langle 257\rangle \langle 1\times 2,3\times
    4,5\times 6\rangle }\right\rbrace_3 +
\left\lbrace\frac{\langle 125\rangle \langle 234\rangle \langle
    567\rangle }{\langle 257\rangle \langle 1\times 2,3\times
    4,5\times 6\rangle }\right\rbrace_3\\
-\left\lbrace\frac{\langle 127\rangle \langle 234\rangle \langle
    345\rangle \langle 567\rangle }{\langle 257\rangle \langle
    347\rangle \langle 1\times 2,3\times 4,5\times
    6\rangle}\right\rbrace_3 +
\left\lbrace\frac{\langle 124\rangle \langle 157\rangle}{\langle
    127\rangle \langle 145\rangle}\right\rbrace_3 +
\left\lbrace-\frac{\langle 12 7\rangle \langle 25 4\rangle\langle 51
    6\rangle}{\langle 12 4\rangle \langle 25 6\rangle \langle
    51 7\rangle}\right\rbrace_3\\
-\left\lbrace\frac{\langle 2 14\rangle \langle 2 57\rangle}{\langle 2
    17\rangle \langle 2 45\rangle}\right\rbrace_3
+\left\lbrace\frac{\langle 4 12\rangle \langle 4 35\rangle}{\langle 4
    15\rangle \langle 4 23\rangle}\right\rbrace_3
-\left\lbrace\frac{\langle 4 12\rangle \langle 4 37\rangle}{\langle 4
    17\rangle \langle 4 23\rangle}\right\rbrace_3
-\left\lbrace-\frac{\langle 14 7\rangle \langle 45 2\rangle \langle 51
    6\rangle}{\langle 14 2\rangle \langle 45 6\rangle \langle
    51 7\rangle}\right\rbrace_3\\
-\left\lbrace-\frac{\langle 24 7\rangle \langle 45 3\rangle \langle 52
    6\rangle}{\langle 24 3\rangle \langle 45 6\rangle \langle 52
    7\rangle}\right\rbrace_3 +
\left\lbrace\frac{\langle 4 12\rangle \langle 457\rangle}{\langle 4 17\rangle \langle 4 25\rangle}\right\rbrace_3 -
\left\lbrace\frac{\langle 5 12\rangle \langle 5 47\rangle}{\langle 5
    17\rangle \langle 5 24\rangle}\right\rbrace_3 +
\left\lbrace\frac{\langle 7 12\rangle \langle 7 45\rangle}{\langle 7
    15\rangle \langle 7 24\rangle}\right\rbrace_3\\
-\left\lbrace\frac{\langle 4 13\rangle \langle 4 57\rangle}{\langle 4
    17\rangle \langle 4 35\rangle}\right\rbrace_3
+\left\lbrace\frac{\langle 4 23\rangle \langle 4 57\rangle}{\langle 4
    27\rangle \langle 4 35\rangle}\right\rbrace_3
-\left\lbrace-\frac{\langle 24 1\rangle \langle 45 3\rangle \langle 52
    7\rangle}{\langle 24 3\rangle \langle 45 7\rangle \langle 52
    1\rangle }\right\rbrace_3 +
\left\lbrace-\frac{\langle 24 1\rangle \langle 47 3\rangle \langle 72
    5\rangle}{\langle 24 3\rangle \langle 47 5\rangle \langle 72
    1\rangle}\right\rbrace_3\\
+\left\lbrace\frac{\langle 5 12\rangle \langle 5 67\rangle}{\langle 5
    17\rangle \langle 5 26\rangle}\right\rbrace_3
-\left\lbrace-\frac{\langle 25 4\rangle \langle 57 6\rangle \langle 72
    1\rangle}{\langle 25 6\rangle \langle 57 1\rangle \langle
    72 4\rangle}\right\rbrace_3 -
\left\lbrace\frac{\langle 5 14\rangle \langle 5 67\rangle}{\langle 5
    17\rangle \langle 5 46\rangle}\right\rbrace_3 +
\left\lbrace-\frac{\langle 45 2\rangle \langle 57 6\rangle \langle 74
    1\rangle}{\langle 45 6\rangle \langle 57 1\rangle  \langle 74
    2\rangle}\right\rbrace_3\\
-\left\lbrace\frac{\langle 5 24\rangle \langle 5 67\rangle}{\langle 5
    27\rangle \langle 5 46\rangle}\right\rbrace_3
+\left\lbrace-\frac{\langle 45 3\rangle \langle 57 6\rangle \langle
    74 2\rangle}{\langle 45 6\rangle \langle 57 2\rangle \langle
    74 3\rangle }\right\rbrace_3.
\end{multline}}

Secondly, for $\langle 125 \rangle$ (this is symmetric under $1\leftrightarrow 2, 7 \leftrightarrow 3, 6 \leftrightarrow 4$)
we find
{\allowdisplaybreaks
\begin{multline}
\label{eq:mcthree}
-\left\lbrace\frac{\langle 157\rangle \langle 234\rangle}{\langle
    1\times 2,3\times 4,5\times 7\rangle }\right\rbrace_3-
\left\lbrace\frac{\langle 123\rangle \langle 457\rangle}{\langle
    1\times 2,3\times 4,5\times 7\rangle}\right\rbrace_3
-\left\lbrace\frac{\langle 127\rangle \langle 134\rangle \langle
    567\rangle}{\langle 167\rangle \langle 1\times 2,3\times 4,5\times
    7\rangle}\right\rbrace_3\\
+\left\lbrace\frac{\langle 127\rangle \langle 234\rangle \langle
    567\rangle}{\langle 267\rangle \langle 1\times 2,3\times 4,5\times
    7\rangle}\right\rbrace_3
+\left\lbrace\frac{\langle 123\rangle \langle 345\rangle \langle
    567\rangle}{\langle 356\rangle \langle 1\times 2,3\times 4,5\times
    7\rangle}\right\rbrace_3\\
-\left\lbrace\frac{\langle 124\rangle \langle 345\rangle \langle
    567\rangle}{\langle 456\rangle \langle 1\times 2,3\times 4,5\times
    7\rangle}\right\rbrace_3
-\left\lbrace\frac{\langle 123\rangle \langle 145\rangle}{\langle
    125\rangle \langle 134\rangle}\right\rbrace_3
-2 \left\lbrace\frac{\langle 125\rangle  \langle 167\rangle}{\langle 127\rangle \langle 156\rangle}\right\rbrace_3\\
+\left\lbrace\frac{\langle 4 12\rangle \langle 4 35\rangle}{\langle 4 15\rangle \langle 4 23\rangle}\right\rbrace_3
+\left\lbrace\frac{\langle 5 12\rangle \langle 5 34\rangle}{\langle 5 14\rangle \langle 5 23\rangle}\right\rbrace_3
-\left\lbrace-\frac{\langle 23 1\rangle \langle 35 4\rangle \langle
    52 6\rangle}{\langle 23 4\rangle \langle 35 6\rangle \langle
    52 1\rangle}\right\rbrace_3
+\left\lbrace-\frac{\langle 24 1\rangle \langle 45 3\rangle \langle
    52 6\rangle}{\langle 24 3\rangle \langle 45 6\rangle \langle
    52 1\rangle }\right\rbrace_3 \\
+\left\lbrace\frac{\langle 5 23\rangle \langle 5 46\rangle}{\langle 5 26\rangle \langle 5 34\rangle}\right\rbrace_3 +
(1\leftrightarrow 2, 7 \leftrightarrow 3, 6 \leftrightarrow 4).
\end{multline}}

The full $\B_3 \otimes \mathbb{C}^*$
part of the coproduct of the two-loop $n=7$ MHV motivic amplitude
is assembled in terms of these two building blocks as
\begin{equation}
\label{eq:7pt-MHV-B3C}
\sum_{\text{dihedral}} \left[
(\text{eq.~\ref{eq:mctwo}}) \otimes
\frac {\langle 124\rangle \langle 567\rangle}{\langle 127\rangle \langle 456\rangle} +
\frac{1}{2}
(\text{eq.~\ref{eq:mcthree}}) \otimes
\frac {\langle 125\rangle \langle 167\rangle \langle 234\rangle}{\langle 123\rangle \langle 127\rangle \langle 456\rangle}
\right].
\end{equation}

Let us comment briefly on the action of \emph{parity} (defined and discussed in detail in appendix~\ref{sec:parity-conj-twist}) on eq.~\eqref{eq:7pt-MHV-B3C}.  If we apply parity to the first term in eq.~\eqref{eq:7pt-MHV-B3C} and then the permutation $(1,7,6,5,4,3,2)$, the $\mathbb{C}^{*}$ term is unchanged. The corresponding $B_{3}$ terms are identical up to a $40$-term $\Li_{3}$ identity, of the type discussed in appendix~\ref{sec:trilog-ident}.  The parity invariance of the second term in eq.~\eqref{eq:7pt-MHV-B3C} is easier to prove: applying parity followed by the permutation $(4,5,6,7,1,2,3)$ leaves both the $\mathbb{C}^{*}$ term and the corresponding $\B_3$ part invariant.  Therefore in this case the parity invariance is manifest, without requiring any $\Li_{3}$ identities.
We note with interest that the aforementioned 40-term identity is the \emph{only} nontrivial $\Li_3$ identity which plays a role in elucidating the
structure of this amplitude.

\section{Cluster Coordinates and Cluster Algebras}
\label{sec:intr-clust-algebr}

Having presented concrete results of the calculation of the coproduct for the two-loop $n=7$ MHV amplitude, we now turn to a second main theme of the paper:  establishing the connection between the coproduct of motivic amplitudes and cluster algebras.
This section is aimed primarily at physicists since
most of this material
is a review of fairly well-known mathematical facts, with
a focus on the intended application
to $\Conf_n(\mathbb{P}^3)$, the space on which scattering amplitudes live.
We start with a short introduction to cluster algebras, which were discovered and first developed in a series of papers~\cite{1021.16017, 1054.17024} by Fomin and Zelevinsky.
The configuration space $\Conf_n(\mathbb{P}^{k-1})$ of $n$ points
in $\mathbb{P}^{k-1}$ has the structure of a \emph{cluster Poisson variety}~\cite{FG03b}---a
structure closely related to cluster algebras.
Our aim in this section is to guide the reader quickly to an understanding of what
cluster variables are, and how they may be systematically constructed
via a process called mutation.
Cluster coordinates come in two types, referred to as cluster $\mathcal{A}$-coordinates for Grassmannians and cluster $\mathcal{X}$-coordinates for the
configuration spaces $\Conf_n(\mathbb{P}^{k-1})$.

To guide the reader, let us specify the spaces  where different types of cluster coordinates live. 
Recall (see sec.~\ref{sec:kinematics}) that the Grassmannian $\Gr(k,n)$ is birationally isomorphic to the configuration space of vectors  
$\Conf_n(k)$. The latter projects onto the space of projective configurations $\Conf_n(\mathbb{P}^{k-1})$:
\begin{equation} \label{pi}
\Gr(k,n) \stackrel{\sim}{\lra} \Conf_n(k) \stackrel{\pi}{\lra} \Conf_n(\mathbb{P}^{k-1}).
\end{equation}
Cluster $\mathcal{A}$-coordinates live naturally on the Pl\"ucker cone $\widetilde \Gr(k,n)$ 
over the Grassmannian 
rather than on the Grassmannian itself.  
This cone can be identified  birationally with the configuration space $\widetilde \Conf_n(k)$ of $n$ vectors 
modulo the action of the group $SL_k$ rather than $GL_k$. 
Abusing terminology, one often refers to them as coordinates on the Grassmannian. 
Cluster $\mathcal{X}$-coordinates live naturally on the smaller space $\Conf_n(\mathbb{P}^{k-1})$---its
dimension is $n$ less then that of the $\widetilde \Conf_n(k)$.
They describe a collection of log-canonical Darboux coordinate systems 
for the natural cyclic invariant Poisson structure on the space $\Conf_n(\mathbb{P}^{k-1})$.  
Using the canonical projection 
$$
\widetilde \pi: \widetilde \Gr(k,n) \lra \Conf_n(\mathbb{P}^{k-1})
$$ one can lift the cluster $\mathcal{X}$-coordinates 
on $\Conf_n(\mathbb{P}^{k-1})$, and express them as monomials of the cluster 
variables in the Grassmannian cluster algebra. 
One cannot extend, however, the Poisson structure to the cluster algebra without breaking the cyclic invariance.  
Notice that the cluster algebra structure itself on $\widetilde \Conf_n(k)$ is 
(twisted) cyclic invariant. The cyclic invariance is a crucial feature of the planar scattering amplitudes.  
 So it is important that both Grassmannian cluster algebra  and the cluster 
Poisson structure on $\Conf_n(\mathbb{P}^{k-1})$ are cyclic invariant.

\subsection{Introduction and definitions}
\label{sec:defin-intr}

We can informally define cluster algebras as follows: they are commutative algebras constructed from distinguished generators (called \emph{cluster variables}) grouped into non-disjoint sets of constant cardinality (called \emph{clusters}), which are constructed iteratively from an initial cluster by an operation called \emph{mutation}.  The number of variables in a cluster is called the \emph{rank} of the cluster algebra.

A simple example is the ${\rm A}_{2}$ cluster algebra
defined by the following data:
\begin{itemize}
\item cluster variables: $a_{m}$ for $m \in \mathbb{Z}$, subject to
$a_{m-1} a_{m+1} = 1 + a_{m}$
\item rank: $2$
\item clusters: $\lbrace a_{m}, a_{m+1}\rbrace$ for $m \in \mathbb{Z}$
\item initial cluster: $\lbrace a_{1}, a_{2}\rbrace$
\item mutation: $\lbrace a_{m-1}, a_{m}\rbrace \to \lbrace a_{m}, a_{m+1}\rbrace$.
\end{itemize}
Using the exchange relation $a_{m-1} a_{m+1} = 1 + a_m$ one sees that
\begin{equation}
 a_{3} = \frac {1+a_{2}}{a_{1}},\quad
 a_{4} = \frac {1+a_{1}+a_{2}}{a_{1} a_{2}},\quad
 a_{5} = \frac {1+a_{1}}{a_{2}},\quad
 a_{6} = a_{1}, \quad a_{7} = a_{2}.
\end{equation}  Therefore, the sequence $a_{m}$ is periodic with period five and the number of cluster variables is finite.

When expressing the cluster variables $a_{m}$ in terms of $a_{1}$ and $a_{2}$, we encounter two interesting features.  First, the denominators of the cluster variables are always monomials.  In general the structure of an algebra is such that one might expect the cluster variables to be more general rational functions of the initial cluster variables, but in fact the denominator is always a monomial.  This is known as the `Laurent phenomenon' 
(see~\cite{MR1888840,1021.16017}).  A second feature is that, conjecturally, 
 the numerator of each cluster variable, expressed in terms of the original cluster variables, 
 is always a polynomial with positive integer coefficients.

Some cluster algebras may be defined in terms of another piece of mathematical machinery: quivers. A quiver is an oriented graph, and in the following we restrict to connected, finite quivers without loops (arrows with the same origin and target) and two-cycles (pairs of arrows going in opposite directions between two nodes).

Given a quiver and a choice of some node $k$ on that quiver we can define a new quiver obtained by \emph{mutating} at node $k$.  The new quiver is obtained by applying the following operations on the initial quiver:
\begin{itemize}
\item for each path $i \to k \to j$, add an arrow $i \to j$,
\item reverse all arrows on the edges incident with $k$,
\item and remove any two-cycles that may have formed.
\end{itemize}
The mutation at $k$ is an involution; when applied twice in succession at the same node we come back to the original quiver.

Quivers of the special type under consideration are in one-to-one correspondence with skew-symmetric matrices defined as
\begin{equation}
b_{i j} = (\# \text{arrows}\; i \to j) - (\# \text{arrows}\; j \to i).
\label{eq:bijdef}
\end{equation}
Since at most one of the terms above is nonvanishing, $b_{i j} = -b_{j i}$.  Under a mutation at node $k$ the matrix $b$ transforms to $b'$ given by
\begin{equation}
  \label{eq:b-mutation}
  b'_{i j} =
  \begin{cases}
    -b_{i j}, &\quad \text{if $k \in \lbrace i, j\rbrace$,}\\
    b_{i j}, &\quad \text{if $b_{i k} b_{k j} \leq 0$,}\\
    b_{i j} + b_{i k} b_{k j}, &\quad \text{if $b_{i k}, b_{k j} > 0$,}\\
    b_{i j} - b_{i k} b_{k j}, &\quad \text{if $b_{i k}, b_{k j} < 0$.}
  \end{cases}
\end{equation}

If we start with a quiver with $n$ nodes and associate to each node $i$ a variable $a_{i}$, we can use the skew-symmetric matrix $b$ to define a mutation relation at the node $k$ by
\begin{equation}
  \label{eq:mutation}
  a_{k} a_{k}' = \prod_{i \vert b_{i k} > 0} a_{i}^{b_{i k}} + \prod_{i \vert b_{i k} < 0} a_{i}^{-b_{i k}},
\end{equation} with the understanding that an empty product is set to one.  The mutation at $k$ changes $a_{k}$ to $a_{k}'$ defined by eq.~(\ref{eq:mutation}) and leaves the other cluster variables unchanged.

To illustrate these ideas we note that the initial cluster of the $a_{2}$ cluster algebra can be expressed by the quiver $a_{1} \to a_{2}$.  Then, a mutation at $a_{1}$ replaces it by $a_{1}' = \frac {1+a_{2}}{a_{1}} \equiv a_{3}$ and reverses the arrow.  A mutation at $a_{2}$ replaces it by $a_{2}' = \frac {1+a_{1}}{a_{2}} \equiv a_{5}$ and preserves the direction of the arrow.

\subsection{Cluster Poisson varieties}
These are defined using the same combinatorial skeleton: 
quivers and mutations of quivers. We assign now to the nodes of the quiver 
variables $\{x_i\}$ which mutate according to the following rule: 
\begin{equation}
  \label{eq:x-coords-mutation}
  x_{i}' =
  \begin{cases}
    x_{k}^{-1}, &\quad i=k,\\
    x_{i} (1+x_{k}^{\sgn b_{i k}})^{b_{i k}}, &\quad i \neq k.
  \end{cases}
\end{equation}
There is a natural Poisson bracket on the cluster $\mathcal{X}$-coordinates.  It is enough to define the Poisson bracket between the $\mathcal{X}$-coordinates in a given cluster, for which it is given in terms of the
antisymmetric
matrix $b_{i j}$ defined in~(\ref{eq:bijdef}) by
\begin{equation}
  \label{eq:poisson-x-coords}
  \lbrace x_{i}, x_{j}\rbrace = b_{i j} x_{i} x_{j}.
\end{equation}
An important property of this Poisson bracket is that it is invariant
under mutations, in the sense that
\begin{equation}
  \lbrace x_{i}', x_{j}'\rbrace = b_{i j}' x_{i}' x_{j}'
\end{equation}
whenever $x_{i}', x_{j}'$ and $b_{i j}'$ are obtained from $x_{i}, x_{j}$ and $b_{i j}$ by
a mutation.

Given a quiver described by the matrix $b$, the cluster $\mathcal{A}$- and $\mathcal{X}$-coordinates 
can be related as follows:
\begin{equation}
  \label{eq:xcoords-def}
  x_{i} = \prod_{j} a_{j}^{b_{i j}}.
\end{equation}  
This relation is preserved under mutations 
(changing, of course, the $b_{ij}$-matrix). 

We would like to stress that the cluster $\mathcal{A}$- and $\mathcal{X}$-coordinates 
live on different spaces of the same dimension, denoted  $\mathcal{A}$ and $\mathcal{X}$. Indeed, 
they are parametrized by the same set, the set of nodes of the corresponding quiver.  
Formula (\ref{eq:xcoords-def}) is just a coordinate way 
to express a canonical map of spaces $p: \mathcal{A} \to \mathcal{X}$. 
The dimension of the fibers of this map is the corank of the matrix $b$. 

A sequence of cluster mutations can result in reproducing the original quiver, while 
providing a non-trivial transformation of the cluster $\mathcal{A}$- and $\mathcal{X}$-coordinates. 
Such transformations form a group, introduced in \cite{FG03b} under the name \emph{cluster modular group}.

\subsection{Grassmannian cluster algebras and cluster Poisson spaces
  \texorpdfstring{$\Conf_n(\mathbb{P}^{k-1})$}{Conf n P k-1}}

In our application we are interested in a special class of cluster algebras called \emph{cluster algebras of geometric type}.  They are also described by quivers, but some of the nodes are special and called \emph{frozen nodes}.  Edges connecting two frozen nodes are not allowed,\footnote{There is a different approach, advocated in \cite{FG03b}, where half-edges between frozen variables are not only allowed, but in fact play a crucial role in the amalgamation construction: building bigger cluster structures from the smaller ones.}  and we also do not allow mutations on the frozen nodes.  The variables associated to the frozen nodes are called \emph{coefficients} instead of \emph{cluster variables} (and the rank of the algebra is equal only to the number of unfrozen nodes).  We define the \emph{principal part} of such a quiver to be the quiver obtained by erasing the frozen nodes as well as all edges which connect them to any of the non-frozen nodes.
When drawing these special kinds of quivers, we will indicate each frozen
node by placing its label inside a box.

We now review the cluster algebras of geometric type which arise from the Grassmannian $\Gr(k,n)$~\cite{1088.22009}.  
The Pl\"ucker coordinates $\langle i_1 \dotsc i_k \rangle$, being the minors obtained by computing the determinant of the indicated columns $i_1 \dotsc i_k$ of a matrix representative of a point in $\Gr(k,n)$, are examples of $\mathcal{A}$-coordinates.

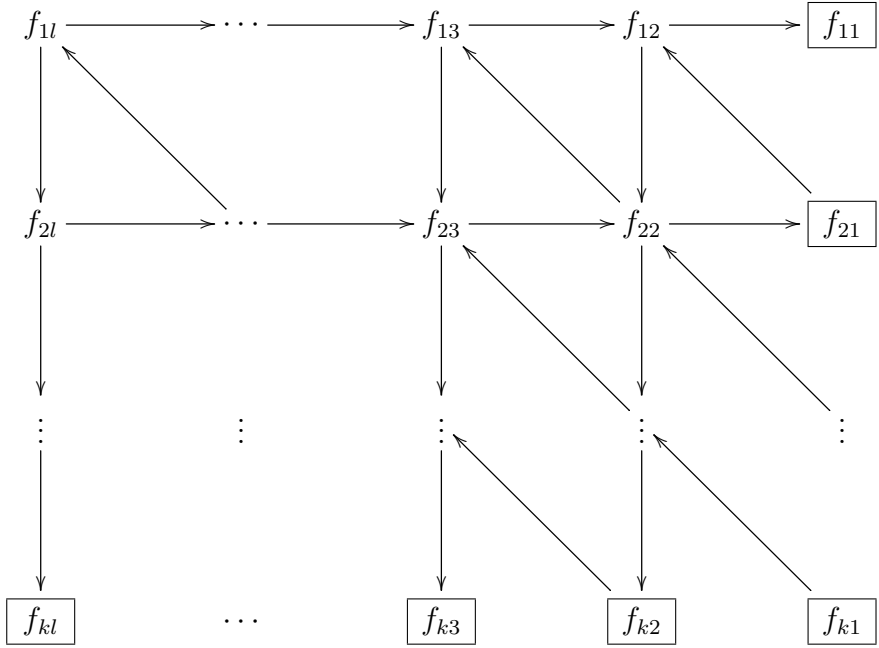
\begin{figure}
\centering
\begin{xy} 0;<1pt,0pt>:<0pt,-1pt>::
(0,0) *+{f_{1 l}} ="0",
(75,0) *+{\cdots} ="1",
(150,0) *+{f_{13}} ="2",
(225,0) *+{f_{12}} ="3",
(300,0) *+{\framebox[5ex]{$f_{11}$}} ="4",
(0,75) *+{f_{2 l}} ="5",
(75,75) *+{\cdots} ="6",
(150,75) *+{f_{23}} ="7",
(225,75) *+{f_{22}} ="8",
(300,75) *+{\framebox[5ex]{$f_{21}$}} ="9",
(0,150) *+{\vdots} ="10",
(75,150) *+{\vdots} ="11",
(150,150) *+{\vdots} ="12",
(225,150) *+{\vdots} ="13",
(300,150) *+{\vdots} ="14",
(0,225) *+{\framebox[5ex]{$f_{kl}$}} ="15",
(75,225) *+{\cdots} ="16",
(150,225) *+{\framebox[5ex]{$f_{k3}$}} ="17",
(225,225) *+{\framebox[5ex]{$f_{k2}$}} ="18",
(300,225) *+{\framebox[5ex]{$f_{k1}$}} ="19",
(-60,0) *+{} ="20",
"0", {\ar"1"},
"0", {\ar"5"},
"6", {\ar"0"},
"1", {\ar"2"},
"2", {\ar"3"},
"2", {\ar"7"},
"8", {\ar"2"},
"3", {\ar"4"},
"3", {\ar"8"},
"9", {\ar"3"},
"5", {\ar"6"},
"5", {\ar"10"},
"6", {\ar"7"},
"7", {\ar"8"},
"7", {\ar"12"},
"13", {\ar"7"},
"8", {\ar"9"},
"8", {\ar"13"},
"14", {\ar"8"},
"10", {\ar"15"},
"12", {\ar"17"},
"19", {\ar"13"},
"18", {\ar"12"},
"13", {\ar"18"},
\end{xy}
\caption{The initial quiver for the $\Gr(k,n)$ cluster algebra (see refs.~\cite{1057.53064,1215.16012}).}
\label{fig:gkn}
\end{figure}

The Pl\"ucker coordinates satisfy the relation
\begin{equation}
  \label{eq:plucker-rel}
  \ket{i, j, I} \ket{k, l, I} = \ket{i, k, I} \ket{j, l, I} +
\ket{i, l, I}\ket{j, k, I},
\end{equation}
where $I$ is a multi-index with $k-2$ entries,
which bears a resemblance to the exchange relation shown in
eq.~(\ref{eq:mutation}).
Indeed the cluster algebra for $\Gr(k,n)$ is constructed by starting
with an initial cluster whose variables are certain Pl\"ucker coordinates.
The operation of mutation generates additional Pl\"ucker coordinates,
as well as other, more complicated, cluster $\mathcal{A}$-coordinates.
For general $k$ and $n$ and with $l = n-k$, the appropriate initial quiver is given in ref.~\cite{1057.53064} (this construction is also reviewed in ref.~\cite{1215.16012}) and shown in fig.~\ref{fig:gkn},
where
\begin{equation}
  f_{i j} =
  \begin{cases}
    \frac {\langle i+1, \dotsc, k, k+j, \dotsc, i+j+k-1\rangle}{\langle 1, \dotsc, k\rangle}, \qquad &i \leq l-j+1,\\
    \frac {\langle 1, \dotsc, i+j-l-1, i+1, \dotsc, k, k+j, \dotsc, n\rangle}{\langle 1, \dotsc, k\rangle}, \qquad &i > l-j+1
  \end{cases}.
\end{equation}
The boxes identify the frozen variables while the rest of the variables are unfrozen.

In order to obtain the quivers we will use below we need to make one last change to the quiver above.  We rescale all the coordinates, frozen and unfrozen, by $\langle 1, \dotsc, k\rangle$.  This produces a frozen variable $\langle 1, \dotsc, k\rangle$ which connects to the node labeled by $f_{1l}$ by an ingoing arrow.  After this modification all the unfrozen nodes of the initial quiver have an equal number of ingoing and outgoing arrows.

The simplest nontrivial example is that of $\Gr(2,5)$, which is relevant for configurations of five points in $\mathbb{P}^1$.  In this case the initial quiver is simply
\begin{equation}
\begin{gathered}
\begin{xy} 0;<1pt,0pt>:<0pt,-1pt>::
(25,25) *+{\langle 13\rangle} ="0",
(75,25) *+{\langle 14\rangle} ="1",
(125,25) *+{\framebox[5ex]{$\langle 15\rangle$}} ="2",
(125,75) *+{\framebox[5ex]{$\langle 45\rangle$}} ="3",
(75,75) *+{\framebox[5ex]{$\langle 34\rangle$}} ="4",
(25,75) *+{\framebox[5ex]{$\langle 23\rangle$}} ="5",
(0,0) *+{\framebox[5ex]{$\langle 12\rangle$}} ="6",
(145,75) *+{},
"0", {\ar"1"},
"4", {\ar"0"},
"0", {\ar"5"},
"6", {\ar"0"},
"1", {\ar"2"},
"3", {\ar"1"},
"1", {\ar"4"},
\end{xy}
\end{gathered}
\end{equation}
where each node is labeled by its $\mathcal{A}$-coordinate.
In this
case it is easy to check that successive mutations on the two unfrozen nodes, in any order, generate only five distinct quivers.  The algebra so generated is nothing but the $A_2$ algebra defined at the beginning of sec.~\ref{sec:defin-intr}.
The name of the algebra comes from the fact that the principal part of this
initial quiver is the same as the Dynkin diagram of the $A_2$ Lie algebra.

Let us use this example to calculate $\mathcal{X}$-coordinates.  There is one such coordinate associated to each unfrozen node $k$, expressed by taking the product of the $\mathcal{A}$-coordinates on the nodes connected to $k$ by an incoming arrow, divided by the product of the $\mathcal{A}$-coordinates on the nodes
connected to $k$ by an outgoing arrow.
So in the $A_2$ quiver shown above the $\mathcal{X}$-coordinates
associated to the nodes $\ket{13}$ and $\ket{14}$ are
$\ket{12}\ket{34}/\ket{23}\ket{14}$ and $\ket{13}\ket{45}/\ket{34}\ket{15}$
respectively.
Cluster $\mathcal{A}$-coordinates are
not invariant under rescaling of the individual vectors in a $\Gr(k,n)$ matrix,
but the $\mathcal{X}$-coordinates are.  Therefore only the latter
are good coordinates on the quotient $\Gr(k,n)/(\mathbb{C}^*)^{n-1}
= \Conf_n(\mathbb{P}^{k-1})$ and are hence appropriate objects to
appear in motivic amplitudes.

To connect with the above general discussion of cluster Poisson varieties and 
$\mathcal{X}$-coordinates, in this example 
we use only the unfrozen part of the quiver 
to build the $\mathcal{X}$-coordinates. This leads to a reduced cluster $\mathcal{X}$-space 
$\mathcal{X}'$, and the map described coordinately in 
(\ref{eq:xcoords-def}) reduces to a surjective map 
$p': \mathcal{A} \to \mathcal{X}'$ describing the projection
$\widetilde \Gr(k,n) 
\to\Conf_n(\mathbb{P}^{k-1})$.

The two simplest examples relevant to SYM theory scattering amplitudes
are those for 6 or 7 points
in $\mathbb{P}^3$ (or, equivalently, in $\mathbb{P}^1$ or $\mathbb{P}^2$, respectively).
For the former
it is evident from fig.~\ref{fig:gkn} that the principal part of the quiver
is the same as the $A_3$ Dynkin diagram.  For the latter the initial quiver
is slightly more complicated:
\begin{equation}
\begin{gathered}
\begin{xy} 0;<1pt,0pt>:<0pt,-1pt>::
(50,50) *+{\langle 267\rangle} ="0",
(100,50) *+{\langle 367\rangle} ="1",
(150,50) *+{\langle 467\rangle} ="2",
(200,50) *+{\framebox[5ex]{$\langle 567\rangle$}} ="3",
(200,100) *+{\framebox[5ex]{$\langle 456\rangle$}} ="4",
(200,150) *+{\framebox[5ex]{$\langle 345\rangle$}} ="5",
(150,150) *+{\framebox[5ex]{$\langle 234\rangle$}} ="6",
(150,100) *+{\langle 346\rangle} ="7",
(100,100) *+{\langle 236\rangle} ="8",
(100,150) *+{\framebox[5ex]{$\langle 123\rangle$}} ="9",
(50,100) *+{\langle 126\rangle} ="10",
(50,150) *+{\framebox[5ex]{$\langle 127\rangle$}} ="11",
(0,0) *+{\framebox[5ex]{$\langle 167\rangle$}} ="12",
"0", {\ar"1"},
"8", {\ar"0"},
"0", {\ar"10"},
"12", {\ar"0"},
"1", {\ar"2"},
"7", {\ar"1"},
"1", {\ar"8"},
"2", {\ar"3"},
"4", {\ar"2"},
"2", {\ar"7"},
"7", {\ar"4"},
"5", {\ar"7"},
"7", {\ar"6"},
"6", {\ar"8"},
"8", {\ar"7"},
"8", {\ar"9"},
"10", {\ar"8"},
"9", {\ar"10"},
"10", {\ar"11"},
\end{xy}\end{gathered}.
\end{equation}
If we label the nodes occupied initially by $\langle 267\rangle$, $\langle 367\rangle$, $\langle 467\rangle$, $\langle 126\rangle$, $\langle 236\rangle$, $\langle 346\rangle$ by numbers $1$ through $6$, then after a sequence of mutations at nodes $4$, $3$, $2$, $5$, $1$, $4$, $3$, $4$, $6$,
the principal part of the quiver is brought into the form of the
$E_6$ Dynkin diagram\footnote{If we order them in the same way as in the initial cluster, the $\mathcal{A}$-coordinates after this sequence of mutations are $\langle 3 \times 4, 5 \times 6, 7 \times 1\rangle$, $\langle 256\rangle$, $\langle 124\rangle$, $\langle 247\rangle$, $\langle 5 \times 6, 7 \times 2, 3 \times 4\rangle$, $\langle 157\rangle$.}
\begin{equation}
\begin{gathered}
\begin{xy} 0;<1pt,0pt>:<0pt,-1pt>::
(50,50) *+{\langle 124\rangle} ="3",
(100,50) *+{\langle 247\rangle} ="4",
(180,0) *+{\langle 256\rangle} ="2",
(180,50) *+{\scriptstyle{\langle 5 \times 6, 7 \times 2, 3 \times 4\rangle}} ="5",
(270,50) *+{\scriptstyle{\langle 3 \times 4, 5 \times 6, 7 \times 1\rangle}} ="1",
(340,50) *+{\langle 157\rangle} ="6",
"2", {\ar"5"},
"4", {\ar"5"},
"1", {\ar"5"},
"4", {\ar"3"},
"1", {\ar"6"},
\end{xy}\end{gathered}.
\end{equation}
Therefore the $\Gr(3,7)$ cluster algebra is also called the $E_6$
algebra.

In~\cite{1054.17024} Fomin and Zelevinsky showed that a cluster algebra is of finite type (i.e., it has a finite number of cluster variables) if there exists a sequence of mutations which turns the principal part of its quiver into the Dynkin diagram of some classical Lie algebra.  However, if the principal part of the quiver contains a subgraph which is an affine Dynkin diagram, then the cluster algebra is of infinite type.

In ref.~\cite{1088.22009}, Scott classified all the Grassmannian cluster algebras of finite type.
As discussed in sec.~\ref{sec:kinematics}, the relevant Grassmannian for scattering amplitudes in SYM theory is $\Gr(4,n)$, for $n \geq 6$. If $n=6$ we need $\Gr(4,6) = \Gr(2,6)$ which is of finite type $A_3$.  If $n=7$ we need $\Gr(4,7) = \Gr(3,7)$ which is again of finite type $E_6$.  However, starting at $n=8$ the relevant cluster algebras are not of finite type anymore.  This indicates that there are infinitely many different $\mathcal{A}$-coordinates which could appear in the symbol of these amplitudes, and infinitely many different
$\mathcal{X}$-coordinates could appear in their coproduct.

Besides the usual Pl\"ucker determinants, mutations also lead in general
to more complicated $\mathcal{A}$-coordinates. For example, with $\Gr(4,8)$ one encounters
\begin{equation}
\label{eq:atwo}
  \langle 1 2 (3 4 5) \cap (6 7 8)\rangle \equiv \langle 1 3 4 5\rangle \langle 2 6 7 8\rangle - \langle 2 3 4 5\rangle \langle 1 6 7 8\rangle.
\end{equation}
As discussed in sec.~\ref{sec:kinematics}, the notation with $\cap$ emphasizes the following geometrical fact: the composite bracket $\langle 1 2 (3 4 5) \cap (6 7 8)\rangle$ vanishes whenever the points $1$ and $2$ are coplanar with the projective line $(3 4 5) \cap (6 7 8)$ obtained by intersecting the two projective planes $(3 4 5)$ and $(6 7 8)$.

One miraculous feature of the mutations is that the denominator can always be
canceled by the numerator, after using Pl\"ucker identities.  Therefore, the
$\mathcal{A}$-coordinates always end up being \emph{polynomials} in the Pl\"ucker
coordinates.  This is an analog of the Laurent phenomenon mentioned above.
An example which appears for $\Gr(4,8)$ is
\begin{equation}
\label{eq:aone}
 \frac {\langle 1 2 3 7\rangle \langle 1 2 4 5\rangle \langle 1 6 7 8\rangle + \langle 1 2 7 8\rangle \langle 4 5 (6 7 1)\cap(1 2 3)\rangle}{\langle 1 2 6 7\rangle}   =  \langle 4 5 (7 8 1) \cap (1 2 3)\rangle.
\end{equation}
Here the left-hand side is the raw expression obtained
for a certain $\mathcal{A}$-coordinate
following some mutation,
while the right-hand side is a simplified expression where the denominator has been
canceled by applying a Pl\"ucker relation to the numerator in order
to pull out all overall factor of $\ket{1267}$.

Both of these types of non-Pl\"ucker $\mathcal{A}$-coordinates, eqs.~(\ref{eq:aone})
and~(\ref{eq:atwo}),
appear
in the symbol of the two-loop $n=8$ MHV
amplitude~\cite{CaronHuot:2011ky}, and the simpler ones of the
type in eq.~(\ref{eq:aone}) appear already for $n=7$---indeed
the long formulas in
section~\ref{sec:mc} are littered with these composite brackets (though
expressed there in the $\mathbb{P}^2$ language).

However, since the number of $\mathcal{A}$-coordinates is infinite for $\Gr(4,8)$,
mutations must eventually generate even more exotic $\mathcal{A}$-coordinates.
A still relatively simple example is
\begin{equation}
-\langle (1 2 3) \cap (3 4 5), (5 6 7) \cap (7 8 1)\rangle.
\end{equation}
This quantity vanishes when the lines $(1 2 3) \cap (3 4 5)$ and $(5 6 7) \cap (7 8 1)$ intersect,
which is equivalent to saying that the lines $(3 4 5) \cap (5 6 7)$ and $(7 8 1) \cap (1 2 3)$ intersect.
But there are even more complicated $\mathcal{A}$-coordinates such as
\begin{multline}
  \langle 1 2 4 6\rangle \langle 1 2 5 6\rangle \langle 1 3 7 8\rangle \langle 3 4 5 7\rangle -
  \langle 1 2 4 6\rangle \langle 1 2 5 7\rangle \langle 1 3 7 8\rangle \langle 3 4 5 6\rangle -\\
  \langle 1 2 4 6\rangle \langle 1 2 7 8\rangle \langle 1 3 5 6\rangle \langle 3 4 5 7\rangle +
  \langle 1 2 7 8\rangle \langle 1 2 5 7\rangle \langle 1 3 4 6\rangle \langle 3 4 5 6\rangle +\\
  \langle 1 2 3 6\rangle \langle 1 2 7 8\rangle \langle 1 4 5 7\rangle \langle 3 4 5 6\rangle.
\end{multline}
Neither of these complicated quantities appears as an entry in the symbol
of the
$n=8$ MHV amplitude at two loops, but we know of no reason why they cannot appear
at higher loops.
It would be extremely interesting
to understand more about these algebras and their relation to amplitudes.

One final, important comment has to do with cyclic symmetry, which is
an exact symmetry of MHV amplitudes (and of all super-amplitudes).
Notice that the initial quivers we have been using break the cyclic symmetry of the configuration of points.  In order to see that the cyclic symmetry is preserved we need to show that by mutations one can reach another quiver whose labels are permuted by one unit.  For the case of $\Gr(3,7)$ described above, this cannot be done in fewer than six mutations, since all the unfrozen $\mathcal{A}$-coordinates need to change.  Indeed one can easily show that after mutating in the nodes which are initially labeled by $\langle 126 \rangle$, $\langle 267\rangle$, $\langle 236\rangle$, $\langle 367\rangle$, $\langle 346\rangle$ and $\langle 467\rangle$, we obtain the cluster with the node labels shifted by one $\langle 123\rangle \to \langle 234\rangle$, etc.  This proves the cyclic symmetry for $\Gr(3,7)$. It is not hard to imagine that a similar procedure can be applied in the general $\Gr(k,n)$ case, but we do not provide a complete proof of cyclic symmetry here.

\subsection{Generalized Stasheff polytopes}
\label{sec:associahedra}

In this section we review the connection~\cite{1057.52003} between cluster
algebras and certain
polytopes, including the generalized Stasheff
polytope or associahedron~\cite{0114.39402}.
Many additional details and examples may also be found
in~\cite{MR2383126}.

The unfrozen nodes of a cluster algebra of rank $r$ can be taken to be the vertices of an $r-1$-simplex.  A $k$-simplex is a generalization of the notion of a triangle and can be defined as a convex hull of its $k+1$ vertices.  A $0$-simplex is a point, a $1$-simplex is a line, a $2$-simplex is a triangle, a $3$-simplex is a tetrahedron, and so on.

If we take a subset of $l+1$ vertices of the $k+1$ vertices of a $k$-simplex we can form an $l$-simplex which is called a \emph{face} of the $k$-simplex.  The number of $l$-faces of a $k$-simplex is $\binom{k+1}{l+1}$.

When doing a mutation in a cluster algebra of rank $r$, one of the $\mathcal{A}$-coordinates changes while the other $r-1$ stay unchanged.  They define an $r-2$-face of the initial $r-1$-simplex.  Therefore, after mutation we obtain a new $r-1$-simplex which shares an $r-2$-face with the initial simplex.  We can glue them along this face to form an $r-1$-dimensional polytope.  For cluster algebras of finite type, by doing all possible mutations, we can build a polytope out of finitely many simplices.

In this language, the $\mathcal{X}$-coordinates of the cluster algebra correspond to $r-2$-faces of the $r-1$-simplex.  The number of such faces is $\binom{r}{r-1} = r$, which equal to the rank of the algebra. Under mutations, one of the $\mathcal{X}$-coordinates transforms to its inverse while the others transform in a more complicated way.  So more properly we should associate to each $r-2$-face the pair consisting of an $\mathcal{X}$-coordinate and its inverse.

The dual of the polytope we have described is a generalized Stasheff
polytope (or generalized associahedron, for a reason we will
describe in the following section) associated to the cluster algebra.
It is a theorem which is deduced from~\cite{1021.16017}, 
that for any cluster algebras,
the faces are always either quadrilaterals or pentagons.  For example,
the Stasheff polytope associated to $\Gr(2,6)$ has 3 quadrilateral
faces and 6 pentagonal faces, while the $\Gr(3,7)$ polytope
has 1785 quadrilaterals and 1071 pentagons.

\subsection{Poisson bracket and generalized Stasheff polytopes}
\label{sec:poisson-brackets}

There is a simple connection between the Poisson bracket and
the geometry of the Stasheff polytope, which plays an important role 
in elucidating the structure of motivic amplitudes. 

Two $\mathcal{X}$-coordinates
$x_1,x_2$ have zero Poisson bracket if the
Stasheff polytope has some quadrilateral face containing both $x_1$ and
$x_2$ at each node in the configuration
\begin{equation}
\label{eq:square}
\begin{gathered}
\begin{xy} 0;<1pt,0pt>:<0pt,-1pt>::
(150,0) *+{\framebox[20ex]{$\{1/x_1,x_2,\ldots\}$}}  ="0",
(150,50) *+{\framebox[20ex]{$\{1/x_1,1/x_2,\ldots\}$}}  ="1",
(0,50) *+{\framebox[20ex]{$\{x_1,1/x_2,\ldots\}$}} ="2",
(0,0) *+{\framebox[20ex]{$\{x_1,x_2,\ldots\}$}} ="3",
"0";"1" **\dir{-};
"3";"0" **\dir{-};
"1";"2" **\dir{-};
"2";"3" **\dir{-};
\end{xy}\end{gathered},
\end{equation}
where the dots stand for other $\mathcal{X}$-coordinates (some of
which may overlap between some, but not all, of the four corners).
In such a case the variables $x_1,x_2$ form a closed
$A_1 \times A_1$ subalgebra. 
Moving left-to-right or up-to-down is accomplished by mutating on
$x_1$ or $x_2$, respectively.

Two $\mathcal{X}$-coordinates $x_1,x_2$ have Poisson
bracket\footnote{Here and in the following, when we say that two cluster
$\mathcal{X}$ coordinates have Poisson bracket $\pm 1$ we mean that
the Poisson bracket of their logarithms is $\pm 1$.}
$\pm 1$ if they form a closed $A_2$ subalgebra.
In this case the Stasheff polytope has some pentagonal face containing
$x_1$ and $x_2$ in the configuration
\begin{equation}
\label{eq:pentagon}
\begin{gathered}
\begin{xy} 0;<1pt,0pt>:<0pt,-1pt>::
(200,50) *+{\framebox[20ex]{$\{\frac{x_1}{1+x_2},x_2\ldots\}$}} ="0",
(200,100) *+{\framebox[20ex]{$\{\frac{1+x_2}{x_1},\frac{x_1x_2}{1+x_1+x_2},\ldots\}$}} ="1",
(0,100) *+{\framebox[20ex]{$\{\frac{1+x_1+x_2}{x_1x_2},\frac{1+x_1}{x_2},\ldots\}$}} ="2",
(0,50) *+{\framebox[20ex]{$\{1/x_1,\frac{x_2}{1+x_1},\ldots\}$}} ="3",
(100,0) *+{\framebox[20ex]{$\{x_1,1/x_{2},\ldots\}$}} ="4",
"0";"1" **\dir{-};
"4";"0" **\dir{-};
"1";"2" **\dir{-};
"2";"3" **\dir{-};
"3";"4" **\dir{-};
\end{xy}\end{gathered},
\end{equation}
in which case the variables $x_1,x_2$ form an $A_2$ subalgebra.

\subsection{Parity invariance}
\label{sec:parity-invariance}

In this section we show that the parity operation (reviewed in
appendix~\ref{sec:parity-conj-twist}) is an element of the cluster modular group.
Specifically, using the identities in appendix~\ref{sec:parity-conj-twist},
we verify that the parity transform of any quiver is related by a sequence
of mutations to the original quiver.  This implies that the
set of cluster $\mathcal{X}$-coordinates
is closed under parity.
It would be very interesting to see if the rest of the cluster modular
group plays some role, or has a nice interpretation when
acting on motivic amplitudes.

Of course, in simple cases like six points in $\mathbb{P}^{3}$ the parity invariance of the set of cluster $\mathcal{X}$-coordinates can be explicitly checked by enumerating all of them.   However, due to the large number of cluster coordinates, this is much more difficult for seven points and it is impossible for more than seven points since then the cluster algebras are of infinite type.

For six points in $\mathbb{P}^{3}$ the initial quiver is shown in fig.~\ref{fig:six-point-parity-a}.
\begin{figure}
\centering
  \begin{subfigure}[b]{.3\textwidth}
  \begin{xy} 0;<1pt,0pt>:<0pt,-1pt>::
(50,50) *+{\scriptstyle{\langle 1 2 3 5\rangle}} ="0",
(50,100) *+{\scriptstyle{\langle 1 2 4 5\rangle}} ="1",
(50,150) *+{\scriptstyle{\langle 1 3 4 5\rangle}} ="2",
(50,200) *+{\framebox[5ex]{$\scriptstyle{\langle 2 3 4 5\rangle}$}} ="3",
(100,200) *+{\framebox[5ex]{$\scriptstyle{\langle 3 4 5 6\rangle}$}} ="4",
(100,150) *+{\framebox[5ex]{$\scriptstyle{\langle 1 4 5 6\rangle}$}}="5",
(100,100) *+{\framebox[5ex]{$\scriptstyle{\langle 1 2 5 6\rangle}$}} ="6",
(100,50) *+{\framebox[5ex]{$\scriptstyle{\langle 1 2 3 6\rangle}$}} ="7",
(0,0) *+{\framebox[5ex]{$\scriptstyle{\langle 1 2 3 4\rangle}$}} ="8",
"0", {\ar"1"},
"6", {\ar"0"},
"0", {\ar"7"},
"8", {\ar"0"},
"1", {\ar"2"},
"5", {\ar"1"},
"1", {\ar"6"},
"2", {\ar"3"},
"4", {\ar"2"},
"2", {\ar"5"},
  \end{xy}
  \caption{}
  \label{fig:six-point-parity-a}
  \end{subfigure} \hspace{2cm}
  \begin{subfigure}[b]{.3\textwidth}
  \begin{xy} 0;<1pt,0pt>:<0pt,-1pt>::
(50,50) *+{\scriptstyle{\langle 2 4 5 6\rangle}} ="0",
(50,100) *+{\scriptstyle{\langle 1 2 4 5\rangle}} ="1",
(50,150) *+{\scriptstyle{\langle 1 2 4 6\rangle}} ="2",
(50,200) *+{\framebox[5ex]{$\scriptstyle{\langle 1 2 5 6\rangle}$}} ="3",
(100,200) *+{\framebox[5ex]{$\scriptstyle{\langle 1 2 3 6\rangle}$}} ="4",
(100,150) *+{\framebox[5ex]{$\scriptstyle{\langle 1 2 3 4\rangle}$}} ="5",
(100,100) *+{\framebox[5ex]{$\scriptstyle{\langle 2 3 4 5\rangle}$}} ="6",
(100,50) *+{\framebox[5ex]{$\scriptstyle{\langle 3 4 5 6\rangle}$}}  ="7",
(0,0) *+{\framebox[5ex]{$\scriptstyle{\langle 1 4 5 6\rangle}$}} ="8",
"0", {\ar"1"},
"6", {\ar"0"},
"0", {\ar"7"},
"8", {\ar"0"},
"1", {\ar"2"},
"5", {\ar"1"},
"1", {\ar"6"},
"2", {\ar"3"},
"4", {\ar"2"},
"2", {\ar"5"},
  \end{xy}
  \caption{}
  \label{fig:six-point-parity-b}
  \end{subfigure}
  \caption{The initial quiver (a) for $\Gr(4,6)$ and its parity conjugate (b).}
  \label{fig:six-point-parity}
\end{figure}
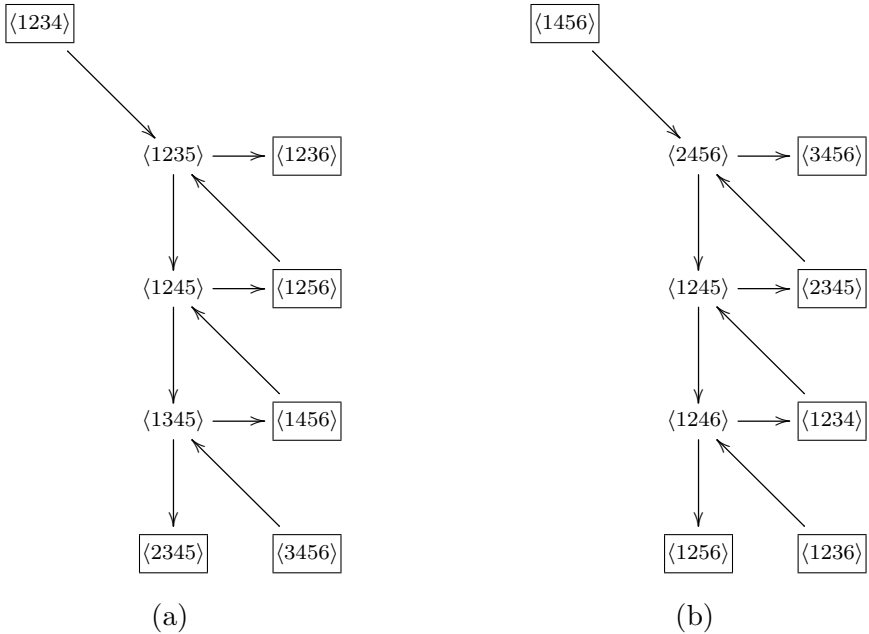  Parity amounts to replacing $\langle i j k l\rangle \to [i j k l]$.  The angle brackets $\langle i j k l\rangle$ are invariants made up of twistors $Z_{i}, Z_{j}, Z_{k}, Z_{k}$ whereas the square brackets $[i,j,k,l]$ are invariants made up of dual twistors $W_{i}, W_{j}, W_{k}, W_{k}$.  Dual twistors can be written in terms of twistors as\footnote{This is often written as $W_{i} = \tfrac {Z_{i-1} \wedge Z_{i} \wedge Z_{i+1}}{\langle i-1 i\rangle \langle i i+1\rangle}$ such that $W_{i}$ and $Z_{i}$ scale with opposite weight.  The two-brackets $\langle i j\rangle$ are defined by choosing an arbitrary line $I$ (also called `infinity twistor') and setting $\langle i j\rangle = \langle I i j\rangle$.  When constructing cross-ratios these two-brackets cancel out so in the following we will not keep track of them.} $W_{i} = Z_{i-1} \wedge Z_{i} \wedge Z_{i+1}$.  Then we rewrite the $[i j k l]$ in terms of angle brackets, as follows
\begin{gather*}
  [1 2 3 5] = \langle 6 1 2 3\rangle \langle 1 2 3 4\rangle \langle 2 4 5 6\rangle, \qquad
  [1 2 4 5] = \langle 6 1 2 3\rangle \langle 3 4 5 6\rangle \langle 1 2 4 5\rangle, \\
  [1 3 4 5] = \langle 2 3 4 5\rangle \langle 3 4 5 6\rangle \langle 6 1 2 4\rangle, \qquad
  [1 2 3 4] = \langle 6 1 2 3\rangle \langle 1 2 3 4\rangle \langle 2 3 4 5\rangle,\\ \qquad \text{cyclic permutations of $[1 2 3 4]$}.
\end{gather*}  The $\mathcal{X}$-coordinates of the quiver in fig.~\ref{fig:six-point-parity-b} generate parity conjugates of the $\mathcal{X}$-coordinates of the quiver in fig.~\ref{fig:six-point-parity-a}.  This quiver can be obtained from the initial quiver by mutations, but with opposite directions of the arrows.  Switching the direction of all arrows replaces all the cross-ratios by their inverses.  This does not change the set of cluster coordinates since if a cross-ratio is a cluster $\mathcal{X}$-coordinate then its inverse is also a cluster $\mathcal{X}$-coordinate.

We should note here that the parity transformation is, up to signs, the same as shifting all the points by three.  For example, $\langle 1 2 3 4\rangle \to \langle 1 4 5 6\rangle$, $\langle 1 2 3 5\rangle \to \langle 2 4 5 6\rangle$.

The seven-point case is a bit more complicated.  Here also we start with the initial quiver in fig.~\ref{fig:seven-point-parity-a} to which we apply parity $\langle i j k l\rangle \to [i j k l]$.  Let us focus on the $\mathcal{X}$-coordinate sitting at the node labeled by $\langle 1 2 3 5\rangle$ in fig.~\ref{fig:seven-point-parity-a}.  It is given by $\tfrac {\langle 1 2 3 4\rangle \langle 1 2 5 6\rangle}{\langle 1 2 3 6\rangle \langle 1 2 4 5\rangle}$.

\begin{figure}
  \centering
\begin{subfigure}[b]{.4\textwidth}
  \begin{xy} 0;<1pt,0pt>:<0pt,-1pt>::
(50,50) *+{\scriptstyle{\langle 1 2 3 5\rangle}} ="0",
(50,100) *+{\scriptstyle{\langle 1 2 4 5\rangle}} ="1",
(50,150) *+{\scriptstyle{\langle 1 3 4 5\rangle}} ="2",
(100,50) *+{\scriptstyle{\langle 1 2 3 6\rangle}} ="3",
(100,100) *+{\scriptstyle{\langle 1 2 5 6\rangle}} ="4",
(100,150) *+{\scriptstyle{\langle 1 4 5 6\rangle}} ="5",
(0,0) *+{\framebox[5ex]{$\scriptstyle{\langle 1 2 3 4\rangle}$}} ="6",
(150,50) *+{\framebox[5ex]{$\scriptstyle{\langle 1 2 3 7\rangle}$}} ="7",
(150,100) *+{\framebox[5ex]{$\scriptstyle{\langle 1 2 6 7\rangle}$}} ="8",
(150,150) *+{\framebox[5ex]{$\scriptstyle{\langle 1 5 6 7\rangle}$}}="9",
(150,200) *+{\framebox[5ex]{$\scriptstyle{\langle 4 5 6 7\rangle}$}}="10",
(100,200) *+{\framebox[5ex]{$\scriptstyle{\langle 3 4 5 6\rangle}$}} ="11",
(50,200) *+{\framebox[5ex]{$\scriptstyle{\langle 2 3 4 5\rangle}$}}="12",
"0", {\ar"1"},
"0", {\ar"3"},
"4", {\ar"0"},
"6", {\ar"0"},
"1", {\ar"2"},
"1", {\ar"4"},
"5", {\ar"1"},
"2", {\ar"5"},
"11", {\ar"2"},
"2", {\ar"12"},
"3", {\ar"4"},
"3", {\ar"7"},
"4", {\ar"5"},
"4", {\ar"8"},
"8", {\ar"3"},
"9", {\ar"4"},
"5", {\ar"9"},
"10", {\ar"5"},
"5", {\ar"11"},
\end{xy}
  \caption{}
  \label{fig:seven-point-parity-a}
  \end{subfigure} \hspace{0cm}
  \begin{subfigure}[b]{.4\textwidth}
  \begin{xy} 0;<1pt,0pt>:<0pt,-1pt>::
(50,50) *+{\scriptstyle{\langle 1 2 5 7\rangle}} ="0",
(50,100) *+{\scriptstyle{\langle 1 2 5 6\rangle}} ="1",
(50,150) *+{\scriptstyle{\langle 2 5 6 7\rangle}} ="2",
(100,50) *+{\scriptstyle{\langle 1 2 4 7\rangle}} ="3",
(100,100) *+{\scriptstyle{\langle 1 2 4 5\rangle}} ="4",
(100,150) *+{\scriptstyle{\langle 2 4 5 6\rangle}} ="5",
(0,0) *+{\framebox[5ex]{$\scriptstyle{\langle 1 2 6 7\rangle}$}}  ="6",
(150,50) *+{\framebox[5ex]{$\scriptstyle{\langle 1 2 3 7\rangle}$}} ="7",
(150,100) *+{\framebox[5ex]{$\scriptstyle{\langle 1 2 3 4\rangle}$}}  ="8",
(150,150) *+{\framebox[5ex]{$\scriptstyle{\langle 2 3 4 5\rangle}$}}="9",
(150,200) *+{\framebox[5ex]{$\scriptstyle{\langle 3 4 5 6\rangle}$}} ="10",
(100,200) *+{\framebox[5ex]{$\scriptstyle{\langle 4 5 6 7\rangle}$}} ="11",
(50,200) *+{\framebox[5ex]{$\scriptstyle{\langle 1 5 6 7\rangle}$}} ="12",
"0", {\ar"1"},
"0", {\ar"3"},
"4", {\ar"0"},
"6", {\ar"0"},
"1", {\ar"2"},
"1", {\ar"4"},
"5", {\ar"1"},
"2", {\ar"5"},
"11", {\ar"2"},
"2", {\ar"12"},
"3", {\ar"4"},
"3", {\ar"7"},
"4", {\ar"5"},
"4", {\ar"8"},
"8", {\ar"3"},
"9", {\ar"4"},
"5", {\ar"9"},
"10", {\ar"5"},
"5", {\ar"11"},
\end{xy}
  \caption{}
  \label{fig:seven-point-parity-b}
  \end{subfigure}
  \caption{The initial quiver (a) for $\Gr(4,7)$ and its partner (b) which makes the parity conjugation property manifest.}
  \label{fig:seven-point-parity}
\end{figure}
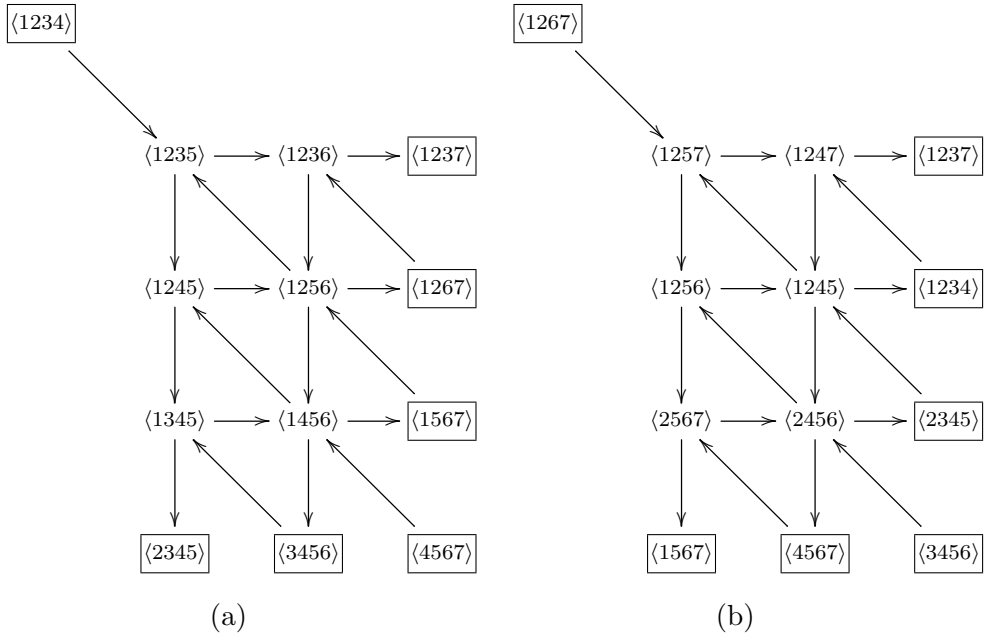

Using equations similar to the ones we used above for six points, we can write the parity conjugate of the quiver $\mathcal{X}$-coordinate at node $\langle 1 2 3 5\rangle$ as
\begin{equation}
  \frac {\langle 1 2 3 4\rangle \langle 1 2 5 6\rangle}{\langle 1 2 3 6\rangle \langle 1 2 4 5\rangle} \to \frac {[1 2 3 4] [1 2 5 6]}{[1 2 3 6] [1 2 4 5]} = \frac {\langle 1 2 5 6\rangle \langle 2 3 4 5\rangle \langle 4 5 6 7\rangle}{\langle 1 2 4 5\rangle \langle 2 5 6 7\rangle \langle 3 4 5 6\rangle}.
\end{equation}  The parity conjugated cross-ratio is the same as the inverse cross-ratio sitting at the opposite corner, at $\langle 2 4 5 6\rangle$ in the partner quiver in fig.~\ref{fig:seven-point-parity-b}.  Each of the unfrozen variables has a correspondent among the unfrozen variables of the partner quiver.  The $\mathcal{X}$-coordinates of nodes which are in correspondence are inverse and parity conjugate to one another.  The correspondence between nodes is defined as follows: the unfrozen nodes fit in a rectangular pattern.  Flip this rectangular pattern along the vertical and the horizontal.  After these flips the pattern of unfrozen nodes fits over the pattern of unfrozen nodes of the partner quiver (note that after superposition all the arrows point in the opposite directions after this sequence of operations).  For example, the correspondence between the nodes in figs.~\ref{fig:seven-point-parity-a},~\ref{fig:seven-point-parity-b} is the following: $\langle 1 2 3 5\rangle \leftrightarrow \langle 2 4 5 6\rangle$, $\langle 1 2 4 5\rangle \leftrightarrow \langle 1 2 4 5\rangle$, $\langle 1 2 5 6\rangle \leftrightarrow \langle 1 2 5 6\rangle$, $\langle 1 2 3 6\rangle \leftrightarrow \langle 2 5 6 7\rangle$, $\langle 1 3 4 5\rangle \leftrightarrow \langle 1 2 4 7\rangle$, $\langle 1 4 5 6\rangle \leftrightarrow \langle 1 2 5 7\rangle$.

Now we can show that the set of cluster $\mathcal{X}$-coordinates is closed under parity conjugation.  First, it is easy to show that the quiver in fig.~\ref{fig:seven-point-parity-b} can be obtained from the quiver in fig.~\ref{fig:seven-point-parity-a} after four mutations.  Another way to show that the two quivers can be obtained from one another by mutations is to notice that the quiver in fig.~\ref{fig:seven-point-parity-a} can be transformed to the quiver in fig.~\ref{fig:seven-point-parity-b} by a dihedral transformation of external data $1 \leftrightarrow 2$, $3 \leftrightarrow 7$, $4 \leftrightarrow 6$.  Then the conclusion follows from the dihedral symmetry of the cluster algebra.  So they generate the same cluster algebra.  Moreover, for every sequence of mutations in the quiver in fig.~\ref{fig:seven-point-parity-a}, we can perform the same sequence of mutations in the corresponding nodes of the partner quiver in fig.~\ref{fig:seven-point-parity-b} and we obtain the inverses of parity conjugate $\mathcal{X}$-coordinates.  This analysis can be extended without difficulty to the general case of cluster algebras $G(4,n)$.

\subsection{Cluster algebras and the positive Grassmannian}

The positive Grassmannian is defined as the subspace of the real
Grassmannian for which the ordered Pl\"ucker coordinates are all positive (see also
appendix~\ref{sec:parity-conj-twist} for more details):
\begin{equation}
\Gr^+(k,n) = \{ ( c_1 \cdots c_n ) \in \Gr(k,n,\mathbb{R}) :
\ket{c_{a_1} \cdots c_{a_k}} > 0\; \text{whenever $a_1 < \cdots < a_k$}\}.
\nonumber
\end{equation}
The mutation relation~(\ref{eq:mutation}) clearly respects
positivity:  if the $\mathcal{A}$-coordinates in the initial cluster are
all chosen to be positive, then all subsequently generated
$\mathcal{A}$-coordinates, in every cluster, will continue to be positive. The same trivially holds for $\mathcal{X}$-coordinates since they are just products of $\mathcal{A}$-coordinates.

It is manifest that set of positive configurations of points on $\mathbb{P}^3$
constitutes a subspace of what physicists call the Euclidean region, in which
scattering amplitudes are expected to be smooth, real-valued functions.
This is in strong accord with the main slogan of~\cite{ArkaniHamed:2012nw},
though we are talking here about positivity in the external kinematic data,
rather than in the Grassmannian of internal (i.e., loop integration) variables.
It seems clear that the full power of positivity has not yet been unleashed.

By construction, any
cluster $\mathcal{X}$-coordinate $x$ has the property that
$1+x$ factors into a product of $\mathcal{A}$-coordinates.
We have found empirically that
positivity allows for a quick criterion to go the other way around:
suppose we have identified some cross-ratio $r$ for which $1+r$
so factors; how do we determine whether or not $r$ is
a cluster $\mathcal{X}$-coordinate?  The answer is simply to evaluate the
triple
\begin{equation}
\{  r,  - 1 - r, - 1 - 1/r \}
\end{equation}
at a random point in the positive Grassmannian.  In all of our experience
to date, one of these three quantities will be positive and the other two
negative; the positive one is an $\mathcal{X}$-coordinate and the other two
are not.  Certainly this criterion is valid for $n=6,7$ where we can enumerate
all such possibilities, and it has held true at higher $n$ in all cases
we have looked at.  However in the infinite-dimensional algebras
we cannot exclude the possibility
that there might exist some
sufficiently complicated cross-ratio $r$ for which $1+r$ factorizes, yet
no member of the above list is an $\mathcal{X}$-coordinate.
Let us note also that for a given cross-ratio
$r$, different elements of
the above list may be $\mathcal{X}$-coordinates with respect to different
orderings of the external points.

\section{Cluster Coordinates and Motivic Analysis for \texorpdfstring{$n=6,7$}{n=6,7}}
\label{sec:examples}

We now have built up all of the machinery we need in order to carry
out a full motivic analysis of the two-loop $n=6,7$ MHV amplitudes.
To that end we begin this section with a detailed discussion of the
cluster coordinates and Stasheff polytopes for $\Gr(2,6)$ and $\Gr(3,7)$.

\subsection{Clusters and coordinates for \texorpdfstring{$\Gr(2,6)$}{G(2,6)}}

Beginning with the initial quiver for $\Gr(2,6)$, we can generate all
of the clusters and their $\mathcal{A}$- and $\mathcal{X}$-coordinates
by successively mutating at various nodes.
The $A_3$
cluster
algebra generated in this manner
has a total of 15 $\mathcal{A}$-coordinates, which are the standard Pl\"ucker coordinates $\ket{i j}$ on $\Gr(2,6)$.  The six coordinates with $i$ and $j$ adjacent (mod 6) are frozen, while the remaining nine are unfrozen.

However, in the special case of $\Gr(2,n)$ cluster algebras the mutations can also be given a simple geometric interpretation which we now describe.  The discussion in this section follows ref.~\cite{FG03b}.

We start with a configuration of $n$ points in $\mathbb{P}^{1}$ with coordinates $z_{i}$, $i = 1, \dotsc, n$ and we fix a cyclic ordering.  To these points we associate a convex polygon, with each vertex of the polygon labeled by one coordinate $z_{i}$.  Then, consider a complete triangulation $T$ of this polygon, as in fig.~\ref{fig:triangulation1}.

A triangulation $T$ and a diagonal $E$ of that triangulation uniquely determine a quadrilateral for which $E$ is a diagonal.  The points in $\mathbb{P}^{1}$ corresponding to the vertices of this quadrilateral have a cross-ratio in $\mathbb{P}^{1}$.  For example, to the diagonal $E$ in fig.~\ref{fig:triangulation1} we associate the cross-ratio
\begin{equation}
  r(3,5,1,2) = r(1,2,3,5) \equiv \frac{(z_{1}-z_{2})(z_{3}-z_{5})}{(z_{2}-z_{3})(z_{1}-z_{5})}.
\end{equation}  Note that the vertices of the quadrilateral are read in the same order as the cyclic order of the $n$ points, starting at one of the points incident with the diagonal $E$.  Note that it doesn't matter which of the two points incident with $E$ we start with, due to the identity
\begin{equation}
r(k,l,i,j) = r(i,j,k,l).
\end{equation}
Moreover, if we read the list of vertices in the opposite order, we obtain the inverse cross-ratio since
\begin{equation}
r(i,l,k,j) = \frac{1}{r(i,j,k,l)}.
\end{equation}

\begin{figure}
  \centering
  \includegraphics{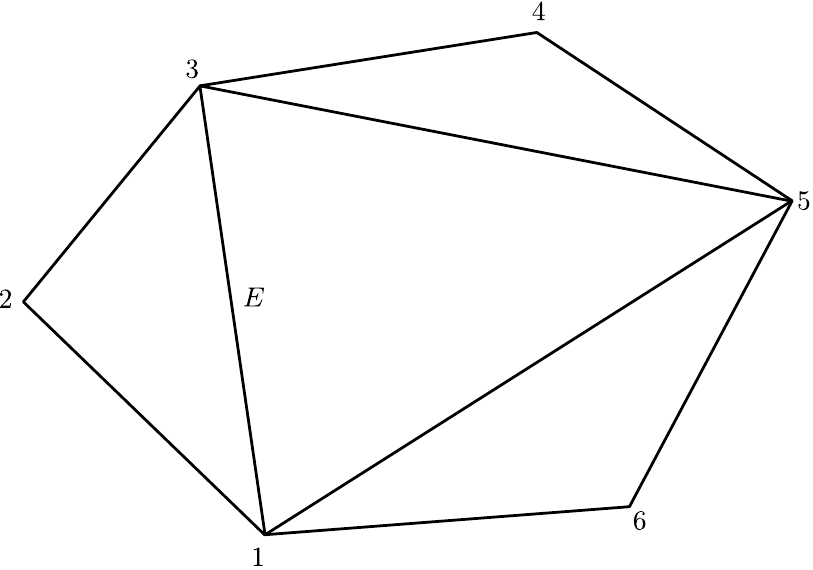}
  \caption{A triangulation $T$ of the hexagon.  One of the edges of the triangulation is marked by $E$.}
  \label{fig:triangulation1}
\end{figure}

Now we introduce a function $b$ which associates to a pair of diagonals in a triangulation a number which is $0$ or $\pm 1$.  Two diagonals in a triangulation are called adjacent if they are the sides of one of the triangles of the triangulation.  If two diagonals $E$ and $F$ are not adjacent we set $b_{E F} = 0$.  If $E$ and $F$ are adjacent we set $b_{E F} = 1$ if $E$ comes before $F$ when listing the diagonals at $E \cap F$ in anticlockwise order.  Otherwise we set $b_{E F} = -1$.

Now we can define cluster transformations (or mutations).  The starting point is a triangulation, to which we can associate a set of cross-ratios, as described above (one cross-ratio for each diagonal).  A cluster transformation is obtained by picking one of the diagonals and replacing it with the other diagonal in the same quadrilateral.  A sequence of such mutations is represented in fig.~\ref{fig:triangulation2}, where after five steps we reach the original configuration.

\begin{figure}
  \centering
  \includegraphics{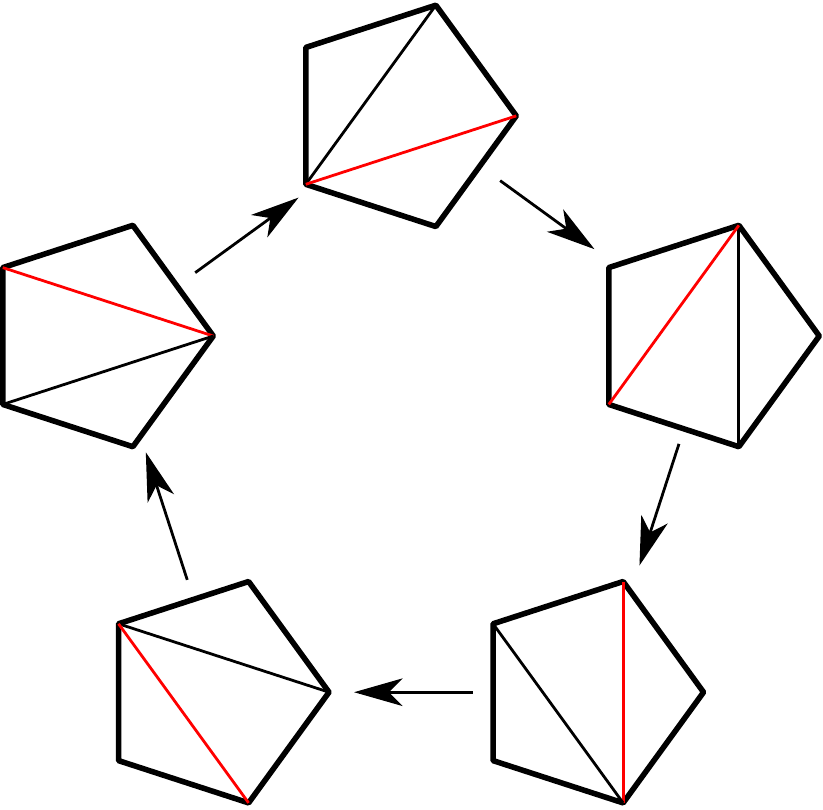}
  \caption{A sequence of mutations for five points.  At each step the side colored in red gets flipped.}
  \label{fig:triangulation2}
\end{figure}

It is not hard to show that the effect of a mutation on the diagonal labeled by $k$ on the cross-ratios $X_{j}$ corresponding to the other diagonals is given by
\begin{equation}
  \label{eq:mutation-CP1}
  X_{i}' = \mu_{k} X_{i} =
  \begin{cases}
    X_{k}^{-1}, &\quad i = k,\\
    X_{i} (1+ X_{k}^{\sgn(b_{ik})})^{b_{ik}}, &\quad i \neq k.
  \end{cases}
\end{equation}  This is the same as eq.~\eqref{eq:x-coords-mutation}.  As we noted before, flipping the diagonal produces the inverse of the initial cross-ratio.  The mutation in eq.~\eqref{eq:mutation-CP1} reproduces this.  Also, the cross-ratios corresponding to diagonals non-adjacent to the diagonal being flipped remain unchanged.  This is obvious since in this case $b_{ik} = 0$.

Now let us specialize to the case $n=6$ of interest.
A hexagon admits 14 distinct complete triangulations, each having three
diagonals.  These correspond to the 14 different clusters, each with
three $\mathcal{X}$ coordinates.
Out of these 42 are 15 distinct coordinates (as mentioned before
we never count both $x$ and $1/x$ separately).
Nine of these ratios were already displayed in eq.~(\ref{eq:nineratios});
the remaining six have not been given a name in previous literature
on $n=6$ scattering amplitudes since they do not appear in the two-loop
MHV amplitude.  For completeness let us list here all 15 $\mathcal{X}$-coordinates
\begin{alignat}{3}
v_1 &= r(3,5,6,2), &\qquad v_2 &= r(1,3,4,6), &\qquad v_3 &= r(5,1,2,4),\nonumber \\
x^+_1 &= r(2,3,4,1), &\qquad x^+_2 &= r(6,1,2,5), &\qquad x^+_3 &= r(4,5,6,3),\nonumber \\
x^-_1 &= r(1,4,5,6), &\qquad x^-_2 &= r(5,2,3,4), &\qquad x^-_3 &= r(3,6,1,2), \\
e_1 &= r(1,2,3,5), &\qquad e_2 &= r(2,3,4,6), &\qquad e_3 &= r(3,4,5,1), \nonumber\\
e_4 &= r(4,5,6,2), &\qquad e_5 &= r(5,6,1,3), &\qquad e_6 &= r(6,1,2,4),\nonumber
\end{alignat}
in terms of the $\mathbb{P}^1$ cross-ratio defined
in eq.~(\ref{eq:crossratio}).

Out of the 45 possible cross-ratios of the form $r(i,j,k,l)$,
the 15 $\mathcal{X}$-coordinates are special in that
they are precisely those in which the points
$i,j,k,l$ come in cyclic order.  The three most well-known
cross-ratios which are \emph{not} cluster $\mathcal{X}$-coordinates are
the ones known in the literature as
\begin{equation}
u_1 = -r(3,6,5,2), \qquad
u_2 = -r(1,4,3,6), \qquad
u_3 = -r(5,2,1,4).
\end{equation}
These are related to cluster $\mathcal{X}$-coordinates by $u_i = 1/(1 + v_i)$.

\subsection{The generalized Stasheff polytope for \texorpdfstring{$\Gr(2,6)$}{G(2,6)}}

Let us now discuss the geometry of the Stasheff polytope for the $A_3$ cluster
algebra detailed in the previous section.
In this case each cluster is in correspondence with a $2$-simplex, or a triangle.  There are $14$ clusters, to each of which corresponds a triangle.  We can label each triangle by the three $\mathcal{A}$-coordinates which appear on its vertices:
\begin{equation}
\begin{aligned}
 &\ket{13}, \ket{14}, \ket{15}, \qquad 
  \ket{14}, \ket{15}, \ket{24}, \qquad 
  \ket{13}, \ket{15}, \ket{35}, \qquad 
  \ket{13}, \ket{14}, \ket{46}, \\ 
 &\ket{15}, \ket{24}, \ket{25}, \qquad 
  \ket{14}, \ket{24}, \ket{46}, \qquad 
  \ket{15}, \ket{25}, \ket{35}, \qquad 
  \ket{13}, \ket{35}, \ket{36}, \\ 
 &\ket{13}, \ket{36}, \ket{46}, \qquad 
  \ket{24}, \ket{25}, \ket{26}, \qquad 
  \ket{24}, \ket{26}, \ket{46}, \qquad 
  \ket{25}, \ket{26}, \ket{35}, \\ 
 &\ket{26}, \ket{35}, \ket{36}, \qquad 
  \ket{26}, \ket{36}, \ket{46}. 
\end{aligned}
\end{equation}
These triangles fit together in a polytope with $14$ triangular faces, shown in fig.~\ref{fig:G26polytope}.  The polytope has $9$ vertices given by the non-frozen $\mathcal{A}$-coordinates $\ket{13}$, $\ket{14}$, $\ket{15}$, $\ket{24}$, $\ket{25}$, $\ket{26}$, $\ket{35}$, $\ket{36}$ and $\ket{46}$, and 21 edges.

\begin{figure}
  \centering
  \includegraphics[width=10.0truecm]{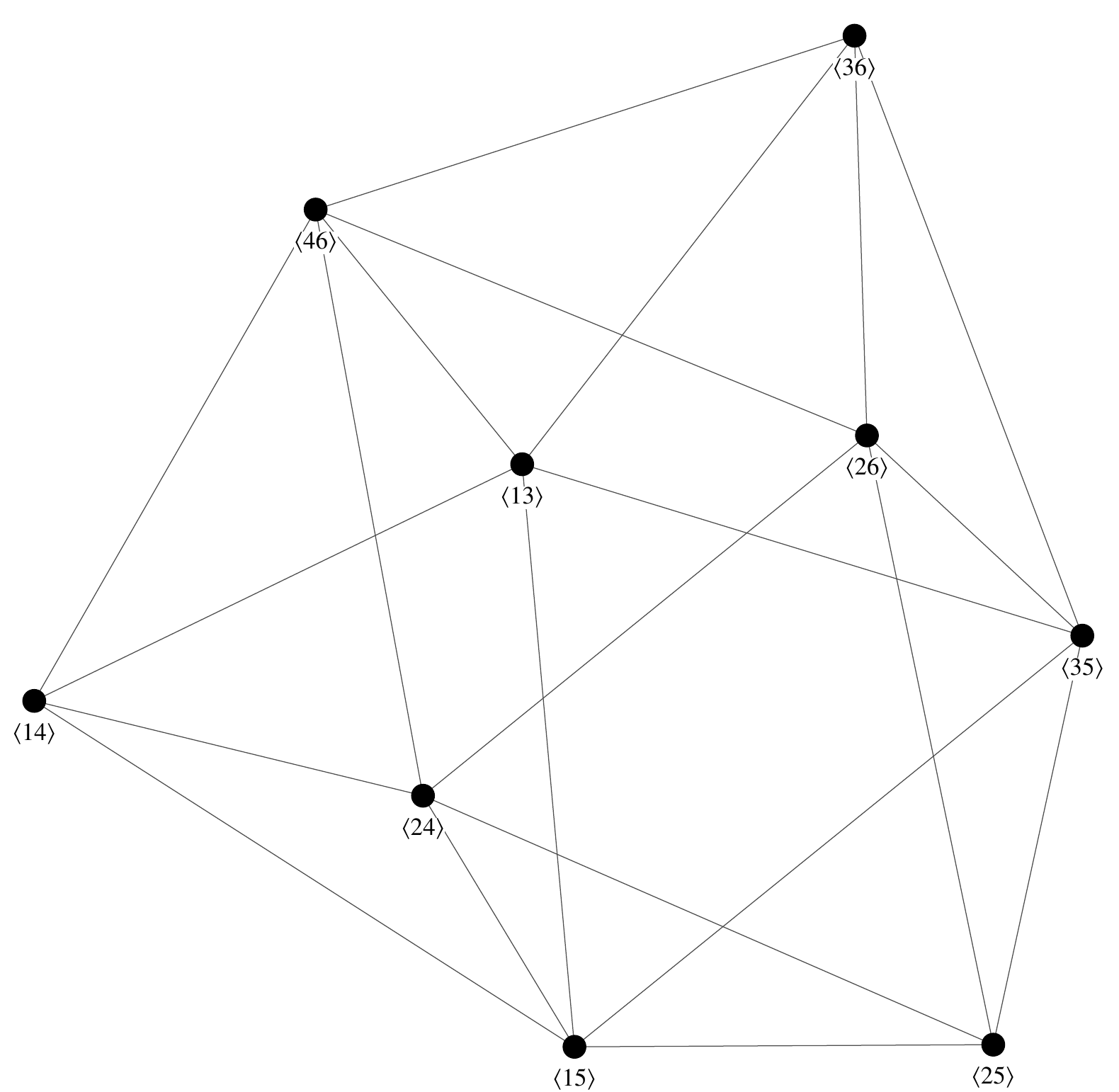}
  \caption{The polytope obtained by gluing together the triangles associated to clusters of the $\Gr(2,6)$ (i.e., $A_3$) cluster algebra.}
  \label{fig:G26polytope}
\end{figure}

All faces are triangles, but there are two different types of vertices: $\ket{14}$, $\ket{25}$ and $\ket{36}$ have valence four (they are incident with four edges) while the other six vertices have valence five.  The polytope has the topology of a sphere as can be confirmed by computing the Euler characteristic $\chi = V-E+F = 9 - 21 + 14 = 2$.

\begin{figure}
  \centering
\begin{subfigure}{.45 \textwidth}
  \includegraphics{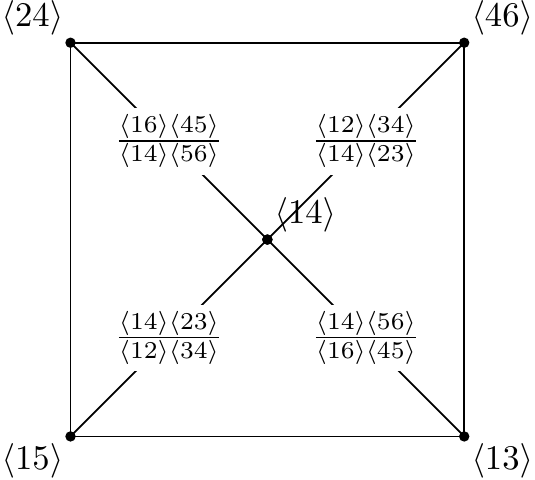}
\caption{}
  \label{fig:G26-4vertex}
\end{subfigure}
\begin{subfigure}{.45 \textwidth}
  \includegraphics{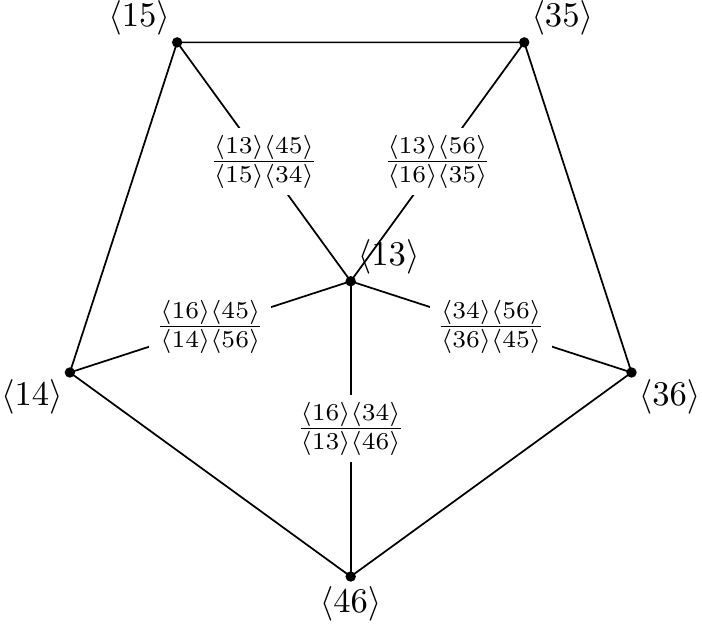}
\caption{}
  \label{fig:G26-5vertex}
\end{subfigure}
  \caption{The cross-ratios ($\mathcal{X}$-coordinates) around a valence 4 vertex (a) and a valence 5 vertex (b) of the polytope (fig.~\ref{fig:G26polytope}) associated to the $\Gr(2,6)$ cluster algebra. For clarity we have omitted the crucial overall minus sign in front of each $\mathcal{X}$-coordinate.}
\end{figure}

Now recall that to each edge of the polytope we can associate a pair
consisting of an $\mathcal{X}$-coordinate and its inverse. Let us take a
closer look at the $\mathcal{X}$-coordinates corresponding to the edges
incident on the two kinds
of vertices.
In order for the association between $\mathcal{X}$-coordinates and edges to be one-to-one, we need to pick an orientation.  Consider for example the valence 4 vertex shown in fig.~(\ref{fig:G26-4vertex}).  As we go around it we encounter the cross-ratios
\begin{equation}
  \label{eq:4vertex}
 \frac {\ket{14}\ket{23}}{\ket{12}\ket{34}}, \qquad
 \frac {\ket{14}\ket{56}}{\ket{16}\ket{45}}, \qquad
 \frac {\ket{12}\ket{34}}{\ket{14}\ket{23}}, \qquad
 \frac {\ket{16}\ket{45}}{\ket{14}\ket{56}}.
\end{equation}
The third cross-ratio is an inverse of the first while the fourth is an inverse of the second.  Therefore, the cluster coordinates are the same as for the $A_{1} \times A_{1}$ cluster algebra.
This is the dual of the statement shown in eq.~(\ref{eq:square}).

On the other hand, if we consider for example the valence 5 vertex shown
in fig.~\ref{fig:G26-5vertex}, the corresponding
list of cross-ratios is
\begin{equation}
  \label{eq:5vertex}
 \frac {\ket{13}\ket{45}}{\ket{15}\ket{34}}, \qquad
 \frac {\ket{13}\ket{56}}{\ket{16}\ket{35}}, \qquad
 \frac {\ket{34}\ket{56}}{\ket{36}\ket{45}}, \qquad
 \frac {\ket{16}\ket{34}}{\ket{13}\ket{46}}, \qquad
 \frac {\ket{16}\ket{45}}{\ket{14}\ket{56}}.
\end{equation}
These are the $\mathcal{X}$-coordinates of an $A_{2}$ cluster algebra.  It can be checked that these are precisely minus the arguments of dilogarithms in the five-term dilogarithm identity~(\ref{eq:fiveterm}).
This is the dual of the statement shown in eq.~(\ref{eq:pentagon}).

\begin{figure}
  \centering
  \includegraphics[width=15.0truecm]{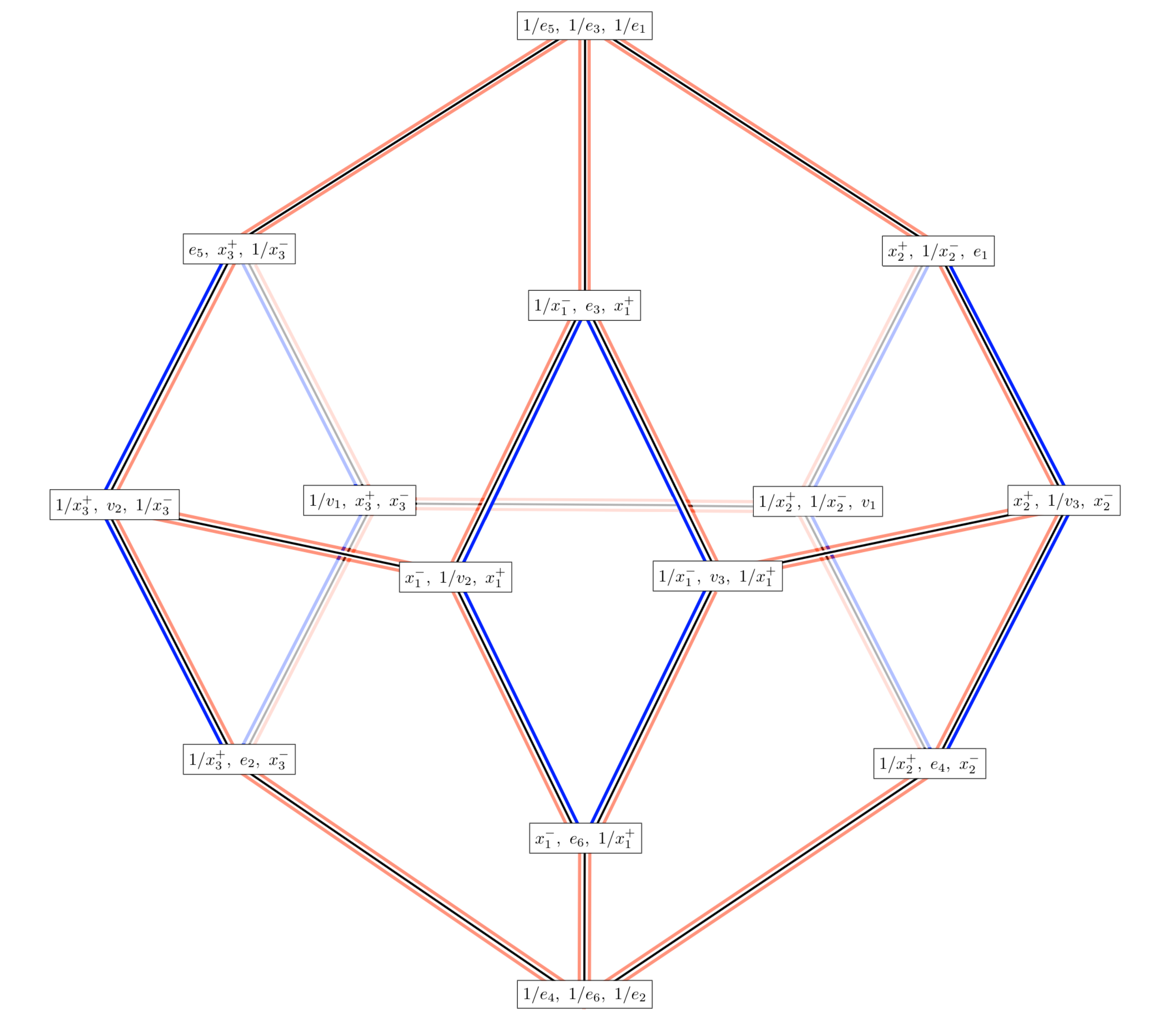}
\caption{The Stasheff polytope for the $\Gr(2,6)$ (or $A_3$) cluster algebra,
also known as the $K_5$ associahedron.
Each of the 14 vertices (clusters)
is labeled by its three $\mathcal{X}$-coordinates.
The 21 edges connect pairs of
clusters which are related to each other by some mutation.
The 9 faces comprise three quadrilaterals (shown in blue) and
six pentagons (shown in red).  These correspond respectively
to $A_1 \times A_1$ and $A_2$ subalgebras.}
  \label{fig:G26-associahedron}
\end{figure}

The dual polytope, shown in fig.~\ref{fig:G26-associahedron}, has $14$ vertices and $9$ faces, three of which are quadrilaterals and six of which are pentagons.  This is the Stasheff polytope or the $K_5$ associahedron.  The name associahedron comes from the following construction: consider $n$ (in the case of $K_5$ we take $n=5$) non-commutative variables and all the ways of inserting parentheses.  For example, we have $((ab)(c d))e$, $(((ab)c)d)e$, etc.  In total there are $C_{n-1}$ ways of parenthesizing $n$ variables, where $C_n$ is the $n$th Catalan number.  Then, join together two such expressions if one can be obtained from the other by applying the associativity rule once.  By joining all these expressions, we build up the Stasheff polytope.

\subsection{Cluster coordinates for \texorpdfstring{$\Gr(3,7)$}{G(3,7)}}

Beginning with the initial quiver for $\Gr(3,7)$, we can similarly
generate all of the clusters and their $\mathcal{A}$- and
$\mathcal{X}$-coordinates by successive mutations until all possibilities
are exhausted.
The $E_6$ algebra generated in this manner has a total of 49
well-known $\mathcal{A}$-coordinates (see for example~\cite{1088.22009}), composed of the 35 Pl\"ucker coordinates
$\ket{i j k}$ on $\Gr(3,7)$ together with 14 composite brackets of the form
\begin{equation}
\ket{1 \times 2, 3 \times 4, 5 \times 6}, \qquad
\ket{1 \times 2, 3 \times 4, 5 \times 7}
\end{equation}
and their cyclic images.
The seven coordinates $\ket{123}, \ldots, \ket{712}$ are frozen,
while the remaining forty-two are unfrozen.

Mutation generates 833 distinct clusters, which altogether contain a total
of 385 distinct $\mathcal{X}$-coordinates.  We list all of them here by
separating them into four classes, and use the notation
\begin{equation}
r(a\vert b,c,d,e) = \frac{\ket{a b c} \ket{a d e}}{\ket{a c d} \ket{a b e}}
\end{equation}
as well as the $\mathbb{P}^2$ cross-ratio defined in
eq.~(\ref{eq:crossratio6}).

First of all there are $3 \times 7 = 21$ coordinates of the form
\begin{equation}
r(2\vert 1,3,5,6), \quad
 \frac{\langle 2 3 1\rangle  \langle 4 5 6\rangle }{\langle 4\times 5, 6\times 1 ,2\times 3\rangle }, \quad
\frac{\langle 1 2 7\rangle  \langle 2 3 4\rangle  \langle 3 4 5\rangle  \langle 5 6 7\rangle }{\langle 2 5 7\rangle  \langle 3 4 7\rangle  \langle 1\times 2 ,3\times 4 ,5\times 6\rangle }
\end{equation}
together with their cyclic images.
Each of these cross-ratios is real (that is, equal to its parity conjugate---see appendix~\ref{sec:parity-conj-twist}),
and it suffices to take only their cyclic images since a dihedral
transformation (i.e., $i \to 8 - i$) maps this set back to itself.

Secondly there are a further $2 \times 14 = 28$ real cross-ratios of the form
\begin{equation}
r_3(1,2,5,6,3,4),
\quad
 \frac{\langle 1 2 7\rangle  \langle 2 5 6\rangle  \langle 3 4 5\rangle }{\langle 2 5 7\rangle  \langle 1\times 2, 3\times 4, 5\times 6\rangle },
\end{equation}
together with their dihedral images.

Next come $6 \times 2 \times 7 = 84$ complex (that is, not equal
to their parity conjugates) cross-ratios of the form
\begin{multline}
r(2\vert 1,3,4,5), \quad
r(1\vert 6,3,4,5), \quad
r(3\vert 2,4,5,1), \\
r_{3}(1,5,3,2,6,4), \quad
r_{3}(1,4,6,2,3,5), \quad
\frac{\langle 1 2 7\rangle  \langle 2 3 4\rangle  \langle 5 6 7\rangle }{\langle 2 6 7\rangle  \langle 3\times 4, 5\times 7 ,1\times 2\rangle},
\end{multline}
together with their parity conjugates and all cyclic images thereof.

Finally we have the $9 \times 2 \times 14 = 252$ complex cross-ratios
\begin{multline}
r(1\vert 5,2,3,4), \quad
r(1\vert 6,2,3,4), \quad
r(1\vert 6,2,3,5), \quad
r(1\vert 6,2,4,5), \quad
r(2\vert 1,3,4,6),\\
r_{3}(1,2,4,6,3,5), \quad
r_{3}(1,4,3,6,5,2), \quad
r_{3}(1,3,5,6,2,4), \quad
\frac{\langle 2 6 1\rangle  \langle 3 4 5\rangle }{\langle 4\times 5 ,6\times 1, 2\times 3\rangle },
\end{multline}
together with their parity conjugates and all dihedral images thereof.

Note that we have expressed most of the cross-ratios above in a form
in which they do not depend explicitly on point number 7.
The three most complicated cross-ratios are exceptions to this, and
for these three we find it worthwhile to display here the
non-trivial factorizations
\begin{equation}
\begin{aligned}
1+\frac{\langle 1 2 7\rangle  \langle 2 5 6\rangle  \langle 3 4 5\rangle }{\langle 2 5 7\rangle  \langle 1\times 2, 3\times 4, 5\times 6\rangle } &=
\frac{\ket{125} \ket{7 \times 2, 3 \times 4, 5 \times 6}}{\ket{257} \ket{1 \times 2, 3 \times 4, 5 \times 6}}, \\
1+\frac{\langle 1 2 7\rangle  \langle 2 3 4\rangle  \langle 5 6 7\rangle }{\langle 2 6 7\rangle  \langle 3\times 4, 5\times 7 ,1\times 2\rangle} &=
\frac{\ket{257} \ket{1 \times 2, 3 \times 4, 6 \times 7}}{\ket{267} \ket{1 \times 2, 3 \times 4, 5 \times 7}}, \\
1+\frac{\langle 1 2 7\rangle  \langle 2 3 4\rangle  \langle 3 4 5\rangle  \langle 5 6 7\rangle }{\langle 2 5 7\rangle  \langle 3 4 7\rangle  \langle 1\times 2 ,3\times 4 ,5\times 6\rangle} &=
\frac{\ket{1 \times 2, 3 \times 4, 5 \times 7} \ket{7 \times 2, 3 \times 4, 5 \times 6}}{\ket{257} \ket{347} \ket{1 \times 2, 3 \times 4, 5 \times 6}}.
\end{aligned}
\end{equation}
All three of the cluster $\mathcal{X}$-coordinates on the left-hand side of this equation appear in the
$\B_3 \otimes \mathbb{C}^*$ part of the coproduct of the two-loop $n=7$ MHV amplitude.

\subsection{Structure of the motivic two-loop \texorpdfstring{$n=7$}{n=7} MHV amplitude}

Obviously it is impractical for us to display the
Stasheff polytope for the $\Gr(3,7)$ cluster algebra, with its 833 vertices, 2499 edges, and
2856 two-dimensional faces (of which 1785 are quadrilaterals and the
other 1071 are pentagons).
However, we are in a position now to carry out a `motivic analysis'
of the two-loop $n=7$ MHV amplitude using the information contained in the
previous section.

First of all we note the amazing fact that all of the entries
of $\{z\}_2$ and $\{z\}_{3}$ in the results in sec.~\ref{sec:mc} are always
cluster
$\mathcal{X}$-coordinates of $\Conf_7(\mathbb{P}^2)$.
Interestingly, of the 385 such coordinates available, only 231 of them
actually appear in the $n=7$ MHV amplitude at two loops.
This might be a two-loop accident, but if it continues to hold at higher
loop order it would be important to find some sort of geometric
explanation.

Turning our attention now to the expression for the $\Lambda^2 \B_2$
part of the coproduct shown in eq.~(\ref{eq:seven-pt-b2wb2}), we note first of all
the further highly nontrivial fact that for each term $\{x_1 \}_2
\wedge \{ x_2 \}_2$, there is always at least one of the 833 clusters
which contains both $x_1$ and $x_2$.  And more spectacularly, the
variables always appear in pairs with Poisson bracket $\{ x_1,  x_2 \} = 0$.
Now we understand the geometric meaning of the ambiguity mentioned
in eq.~(\ref{eq:squareidentity}), in light
of eq.~(\ref{eq:square})---it is exactly the ambiguity of trying
to chose one of the four
vertices of a quadrilateral, when there is no reason at all to
have to make a choice:  each term in $\Lambda^2 \B_2$ corresponds
naturally to a certain quadrilateral face.

We conclude that the most canonical, invariant
way of expressing the $\Lambda^2 \B_2$
part of the coproduct of the two-loop $n=7$ MHV amplitude is not by the
formula~(\ref{eq:seven-pt-b2wb2}), but by writing it as a sum of
42 quadrilateral faces of the $E_6$ Stasheff polytope.  It is obviously
of paramount importance to understand what makes these 42 special,
out of the 1785 such faces available.

An analysis of the $\B_3 \otimes \mathbb{C}^*$ part of the coproduct
requires a classification of all of the possible $A_3$, $A_2 \times A_1$
and $A_1 \times A_1 \times A_1$ subalgebras of $E_6$.
The generalized Stasheff polytope of this algebra has 1547 three-dimensional faces, consisting of
357 cubes ($A_1 \times A_1 \times A_1$),
714 pentaprisms ($A_2 \times A_1$) and
476 of the $A_3$ polytopes shown in fig.~\ref{fig:G26-associahedron}.
We expect these to play a role in unlocking further structure
in the two-loop $n$-point MHV amplitudes, which we will explore in
future work.

\section{Conclusion}

Appropriately defined scattering amplitudes in
maximally supersymmetric
Yang-Mills
theory are functions on $\Conf_n(\mathbb{P}^3)$ which have a very
rich mathematical structure but do not, in general, appear to admit any particular
canonical or even preferred functional representation.
The one important exception is the two-loop MHV amplitude
for $n=6$ reviewed in section~(\ref{sec:GSVV}), which does have a
canonical form (up to trivial $\Li_m$ identities):  that in which
it is expressed only in terms of the classical polylogarithm functions,
with only (minus) cluster $\mathcal{X}$-coordinates as arguments.

More general amplitudes may be computed numerically if desired
(for example all two-loop MHV amplitudes may be evaluated with
reasonable efficiency~\cite{Anastasiou:2009kna}), but we do
not strive to find any particular explicit analytic formulas for them.
It often happens in physics and in mathematics that when one reaches
sufficiently deep into a subject, one realizes that the appropriate objects
of study are not what one originally thought, but some suitable generalization
or modification thereof.
In this vein we have proposed that our focus on the mathematical structure
of scattering amplitudes in SYM theory should fall on what we call
motivic amplitudes, and in particular on their coproducts, which capture
the `mathematically most complicated part' of an amplitude.

By drawing on our explicit results for the two-loop $n=6$ and $n=7$ MHV
motivic amplitudes,
we have shown that an important role is played by cluster coordinates,
which are preferred
sets of coordinates
on $\Conf_n(\mathbb{P}^3)$ with very rich mathematical structure.
Specifically, we conjecture based on the examples presented here, as well
as others that we have computed, that all coproduct components of
all two-loop MHV motivic amplitudes are expressible in terms of
Bloch group elements $\{x\}_k$ with only
${\mathcal{X}}$-coordinates $x$ appearing.
The algebras relevant for $n=6,7$ are of finite type, being
the $A_3$
and $E_6$ algebras respectively, while for $n>7$ the relevant algebras
are of infinite type, although only a finite subset of these variables
actually appear at two loops.

If one accepts that cluster $\mathcal{X}$-coordinates answer the
`kinematical' question \emph{which variables do motivic amplitudes
depend on?}, it is natural to turn attention next to the `dynamical' question
of exactly in which combinations they appear in amplitudes.
We have provided a first glimpse by showing
that the terms in $\Lambda^2 \B_2$ component of the
coproduct of the two-loop $n=7$ MHV amplitude---the component which
measures the obstruction to writing this amplitude in terms of the
classical polylogarithm functions only---are
in correspondence with quadrilateral faces
of the
relevant Stasheff polytope (i.e., with $A_1 \times A_1$ subalgebras
of the cluster algebra).  Again based on other examples which we have
analyzed, we conjecture that this statement remains true for all two-loop
MHV amplitudes.  However a great deal of structure remains to be understood.
In particular, only a very small number of all possible quadrilaterals actually
make an appearance in $\Lambda^2 \B_2$; what, if anything, is the special
geometric significance of these particular quadrilaterals?
What is the geometric
significance of the cluster $\mathcal{X}$-coordinates appearing in
$\B_3 \otimes \mathbb{C}^*$, or in non-MHV amplitudes, or at higher
loops?  A few of these questions will be addressed
in future work.

Many other interesting questions are also raised by our work.
For example, Dixon, Drummond and Henn have employed with great success
the strategy of studying the space of all integrable,
conformally invariant symbols
whose letters are drawn from the collection of available
$\mathcal{A}$-coordinates
at $n=6$.  By imposing all physical constraints available to
them at the time, they
were able to determine the symbol of the two-loop NMHV amplitude
exactly~\cite{Dixon:2011nj},
and that of the three-loop MHV amplitude up to two
parameters~\cite{Dixon:2011pw}
which were subsequently determined in~\cite{CaronHuot:2011kk}.
We have proposed that only $\mathcal{X}$-coordinates can appear in the
coproduct of MHV amplitudes, which is a stronger condition than
that only $\mathcal{A}$-coordinates can appear in their symbols.
The functions $\Li_m(1 + x)$ and $\Li_m(1 + 1/x)$
for any $\mathcal{X}$-coordinate $x$ for example satisfy the latter
but not the former.  Hence in particular we expect to see neither $\Li_3(u_i)$
nor $\Li_3(1-u_i)$ in the coproduct of any two-loop MHV motivic amplitudes.
It would be very interesting to investigate
in detail how restrictive this condition is in the space of all integrable
symbols, in order to see whether our new `motivic' constraint could
aid future computations employing this strategy.

It is also important to point out that in the examples we've
looked at, only
a fraction of all available $\mathcal{X}$-coordinates actually make
an appearance.  For example in section~\ref{sec:GSVV} we emphasized
that only the 9 coordinates~(\ref{eq:nineratios}) enter the two-loop
$n=6$ amplitude, out of the 15 available.
Then in section~\ref{sec:mc} we found that only 231 of the 385
available $\mathcal{X}$-coordinates make an appearance in the two-loop
$n=7$ motivic amplitude.
We do not yet have an understanding of the criterion which selects
these particular subsets of $\mathcal{X}$-coordinates, nor whether
this phenomenon is just an accident at two loops or continues to hold
at higher loop order.
If it does, this obviously constrains the space of possible motivic amplitudes
even more strongly than just the $\mathcal{X}$-coordinate condition
discussed in the previous paragraph.
We cannot help but note with amusement the fact that $9/15 = 231/385$, but we
certainly have too little data to speculate on whether or not
this is just a coincidence.

A number of interesting questions about the connection
between motivic amplitudes,
cluster coordinates and other recent approaches to scattering amplitudes
can now be asked.
A fair amount of recent work has considered the behavior of amplitudes
in various restricted domains, such as two-dimensional
or multi-Regge kinematics (see for
example~\cite{DelDuca:2010zp,Heslop:2010kq,Heslop:2011hv,Goddard:2012cx,Ferro:2012wa}
or~\cite{Bartels:2011xy,Prygarin:2011gd,Bartels:2011ge,Lipatov:2012gk,Dixon:2012yy,Pennington:2012zj}, respectively),
where in either case considerable simplification occurs.
Also, it has long been appreciated that
the behavior of amplitudes under collinear (and especially multi-collinear)
limits strongly constrains their structure, and the operator product
expansion (OPE) to the null
polygonal Wilson
loop~\cite{Alday:2010ku,Gaiotto:2010fk,Gaiotto:2011dt,Basso:2013vsa}
aims to compute arbitrary amplitudes at finite coupling in a systematic
expansion around the collinear limit.
It would very nice to have a
thorough understanding of these limits and expansions directly
at the level of cluster algebras, or even individual quivers.

Finally, we have so far made no explicit reference to the integrability
of planar SYM theory (see the review~\cite{Beisert:2010jr}),
which clearly plays a crucial but so far not fully
exploited part in unlocking the structure of its amplitudes (approaches
other than the OPE mentioned above
include for example~\cite{Drummond:2009fd,Alday:2009dv,Alday:2010vh,Ferro:2012xw}).
We hope that motivic amplitudes and cluster coordinates will be found
to be useful in these and other endeavors, just as
general `motivic' methods based on the symbol calculus of polylogarithm
functions are finding ever wider applications to physical computations
in quantum field theory,
including
Feynman integrals,
amplitudes, form factors, correlation functions, and
Wilson loops~\cite{DelDuca:2011wh,Duhr:2011zq,Bullimore:2011kg,Brandhuber:2012vm,Bogner:2012dn,Lipstein:2012vs,Naculich:2013xa,Drummond:2013nda},
not just in SYM theory
but even QCD~\cite{Duhr:2012fh,vonManteuffel:2012je,Gehrmann:2013vga,Anastasiou:2013srw,Henn:2013pwa}
and string theory~\cite{Schlotterer:2012ny,Drummond:2013vz,Broedel:2013tta,Broedel:2013aza} as well.

\acknowledgments

Various subsets of us have benefited
from stimulating discussions with Nima Arkani-Hamed,
Lance Dixon, James Drummond and David Skinner, and are grateful
to Andy Neitzke and
Yang-Hui He for very illuminating conversations on cluster algebras.
MS, CV and AV appreciate the generous support of the Kavli
Institute for Theoretical Physics during the initial stages of this work,
and CV in addition
acknowledges the hospitality of BIRS, ECT* and CERN during its course.
This work was supported by the US Department of Energy under contracts
DE-FG02-91ER40688 (JG, MS) and DE-FG02-11ER41742 (AV Early Career Award),
the Simons Fellowship in Theoretical Physics (AV),
and the
Sloan Research Foundation (AV).  The work of AG is supported by the NSF grants DMS-1059129  and DMS-1301776.

\appendix

\section{Parity Conjugation on \texorpdfstring{$\Conf_n(\mathbb{P}^{k-1})$}{Conf P k-1}}
\label{sec:parity-conj-twist}

\subsection{Positive configurations}
\label{app:positive}

Given a volume form $\omega$ in $V_k$, which is not a part of our data,
we can assign to a configuration of $k$ vectors $v_1, \dotsc, v_k$ a number:
\begin{equation}
\langle 1, \dotsc, k\rangle:= \langle v_1, \dotsc, v_k\rangle:= \omega(v_1, \dotsc, v_{k}).
\end{equation}
Given an orientation of a real vector space $V_k$, we can define \emph{positive configurations of vectors in $V_k$}.
Namely, choose a volume form $\omega$ compatible with the orientation of the space,
i.e.\ $\omega(v_1, \dotsc, v_{k})>0$ if $(v_1, \ldots , v_{k})$ is a positively oriented basis.
Then  a configuration $(v_1, \dotsc, v_n)$ is \emph{positive} if
$\langle v_{i_1}, \dotsc, v_{i_k}\rangle >0$ for any $i_1 < \cdots < i_k$. We denote by
$\Conf^+_n(k)$ the set of positive configurations.

There is a twisted cyclic shift map, obtained by moving the last vector to the front, and multiplying it by
$(-1)^{k-1}$.
It preserves positive configurations of vectors:
\begin{equation}
c: \Conf^+_n(k) \lra \Conf^+_n(k), \qquad (v_1, \dotsc, v_n) \lms ((-1)^{k-1}v_n, v_1, \dotsc, v_{n-1}).
\end{equation}

The subspace of \emph{positive configurations of points} $\Conf^+_n(\R\PP^{k-1})$
 is the image of the restriction of the projection map $\pi: \Conf_n(k) \lra \Conf_n(\PP^{k-1})$
to positive configurations of vectors:
\begin{equation}
\Conf^+_n(\R\PP^{k-1}) := \pi(\Conf^+_n(k)).
\end{equation}

\subsection{Parity conjugation} 
Let $P_n$ be an oriented convex $n$-gon. Let us denote by $\Conf_{P_n}(\R\PP^{k-1})$ the 
space of configurations of points
of $\PP^{k-1}$ parametrized 
by the set of vertices
of the polygon $P_n$.  
It is the space of orbits of the diagonal action of the group 
$PGL_k$ on collections of  points parametrized 
by the vertices. 

We emphasize that the points are parametrized by the vertices of the polygon, but there is no 
special parametrization of the vertices. If we choose an initial vertex $v$ of the polygon, 
then there is an isomorphism
\begin{equation}
i_v: \Conf_{P_n}(\R\PP^{k-1}) \lra \Conf_n(\PP^{k-1})
\end{equation}
defined by parameterizing the vertices by the set $\{1, \ldots, n\}$, starting from the vertex $v$
to which we assign $1$, and going
according to the orientation of the polygon.

The parity conjugation is a rational map, i.e.\ a map defined for generic configurations,
\begin{equation}\label{paritycona}
\ast: \Conf_{P_n}(\R\PP^{k-1}) \lra \Conf_{\ast P_n}(\R\PP^{k-1}),
\end{equation}
 described as follows.
A collection of points $\{x_{v}\}$ in $\PP^{k-1}$ parametrized by
the set of vertices $\{v\}$ of the polygon $P_n$ gives rise to a collection of hyperplanes
$\{H_{v}\}$ in $\PP^{k-1}$ parametrized by the same set.
Namely, let $\{x_1(v), \dotsc, x_{k-1}(v)\}$ be the points parametrized by the
  $(k-1)$ vertices of the polygon obtained by starting at the vertex $v$ and going around the polygon
following the orientation. So $x_1(v) = x_v$ and so on. We define the hyperplane $H_v$ to be
the span of these points:
\begin{equation}
H_v := \langle x_1(v), \dotsc, x_{k-1}(v)\rangle.
\end{equation}
Viewing the hyperplanes $\{H_{v}\}$ as points of the dual projective space, and reversing the orientation
of the polygon,
we get a point of $\Conf_{\ast P_n}(\R\PP^{k-1})$, which is the result of the parity conjugation
applied to the original configuration of points.

\begin{lemma}
The map (\ref{paritycona}) is a perfect duality: $\ast^2 = \Id$.
\end{lemma}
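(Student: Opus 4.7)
The plan is to unwind the two applications of $\ast$ and show that the resulting operation is the identity on generic configurations, with the two orientation reversals canceling each other. Given a configuration $\{x_v\}_{v}$ on $P_n$, the first application produces the configuration of hyperplanes $H_v = \langle x_v, x_{v+1}, \ldots, x_{v+k-2}\rangle$ parametrized by the vertices of $\ast P_n$, where the indices $v, v+1, \ldots$ run with the orientation of $P_n$. I will view these as points in the dual projective space $(\PP^{k-1})^{\ast}$.

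Next, I apply $\ast$ to the configuration $\{H_v\}$ on $\ast P_n$. By definition, the new hyperplane $H_v^{\ast\ast}$ attached to vertex $v$ is the span of the $k-1$ consecutive points in the $\ast P_n$ orientation starting at $v$, namely the span (in the dual space) of $H_v, H_{v-1}, \ldots, H_{v-k+2}$. Under the canonical duality between hyperplanes in $\PP^{k-1}$ and points in $(\PP^{k-1})^{\ast}$, the span of these points in the dual corresponds to the intersection $H_v \cap H_{v-1} \cap \cdots \cap H_{v-k+2}$ in $\PP^{k-1}$. The orientation of $\ast \ast P_n$ is the original orientation of $P_n$, so the output naturally sits in $\Conf_{P_n}(\R\PP^{k-1})$.

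The key computation is then to identify this intersection. I would observe that for each $j \in \{0,1,\ldots,k-2\}$, the hyperplane $H_{v-j} = \langle x_{v-j}, x_{v-j+1}, \ldots, x_{v-j+k-2}\rangle$ contains $x_v$, because $v$ lies in the range $[v-j, v-j+k-2]$. Hence $x_v \in H_v \cap H_{v-1} \cap \cdots \cap H_{v-k+2}$. For a generic configuration, the $k-1$ hyperplanes in a $(k-1)$-dimensional projective space intersect in a single point, and so this intersection must equal $x_v$. Thus $(\ast\ast)(\{x_v\}) = \{x_v\}$, proving $\ast^2 = \Id$.

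The main obstacle is verifying the genericity statement used in the last step, namely that for a generic $\{x_v\}$ the hyperplanes $H_v, H_{v-1}, \ldots, H_{v-k+2}$ are in general position and hence intersect transversely in a single point. I would dispatch this by a direct dimension count: each $H_{v-j}$ is the projectivization of a hyperplane in the $k$-dimensional vector space $V_k$, and the joint condition that they be in general position is a nonempty Zariski-open condition on $\Conf_{P_n}(k)$, automatically satisfied on the open set on which the rational map $\ast$ itself is defined. Everything else in the argument is combinatorial bookkeeping about cyclic indexing and the two orientation flips, which compose to the identity on $P_n$.
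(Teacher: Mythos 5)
Your proposal is correct and follows essentially the same route as the paper's own proof: you unwind $\ast^2$ at a vertex $v$, identify the result with the intersection $H_v \cap H_{v-1} \cap \cdots \cap H_{v-k+2}$ of the hyperplanes attached to $v$ and the $k-2$ vertices preceding it (the orientation reversal being what turns ``succeeding'' into ``preceding''), and observe that each of these hyperplanes contains $x_v$ by construction, so that for a generic configuration the transverse intersection is exactly $x_v$. The only difference is presentational---you spell out the dual-space bookkeeping and the genericity/transversality point that the paper leaves implicit in calling $\ast$ a rational map.
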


\begin{proof} To calculate the map $\ast^2$ we need to find the intersection
of the hyperplanes $H_w$ corresponding to the vertex $v$ and $k-2$ vertices preceding
it in the orientation of $P_n$. All of them, by the very definition,
contain the point $x_v$ parametrized by the vertex $v$.
\end{proof}

We emphasize that the map (\ref{paritycona}) is not a map of a space to itself, since
there is no invariant way to identify the left and right spaces in (\ref{paritycona}).

\subsection{Parity conjugation on configurations of vectors}
Similarly,
let us
denote by $\Conf_{P_n}(k)$ the
space of configurations of vectors  in a $k$-dimensional vector space,   parametrized
by the vertices of the polygon $P_n$.
Let us upgrade the projective parity conjugation to a  parity conjugation on configurations
vectors, given by a rational map
\begin{equation} \label{parityconav}
\ast: \Conf_{P_n}(k) \lra \Conf_{\ast P_n}(k).
\end{equation}

Having in mind computing the parity conjugation, we define it now by
breaking the symmetry, i.e.\ using the isomorphism
$i_v: \Conf_{P_n}(k) \to \Conf_{n}(k)$ determined by a choice of a specific vertex $v$.
So we use the polygon vertex $v$ to order the vectors of a configuration by
$(l_1, \dotsc, l_n)$. Let us define  a configuration of
covectors $(g_1, \dotsc, g_n)$ by setting
\begin{equation}
g_1(\bullet):=
\frac{\omega(l_1, \dotsc, l_{k-1}, \bullet)}{\omega(l_1, \dotsc, l_{k-1}, l_k)},
\end{equation}
and $g_i$ obtained by the twisted cyclic shift by $i-1$ of this formula.
The covectors $g_i$ evidently do not depend on the choice of the form $\omega$.

We define the parity conjugation $\ast$ as an automorphism  of the space $\Conf_n(k)$:
\begin{equation}
\ast: \Conf_n(k)\lra \Conf_n(k), \qquad \ast(l_1, \ldots, l_{n}):= (g_1, \ldots, g_{n}).
\end{equation}
Abusing notation, we use the same notation $\ast$.
Let us stress again that, although $g_i$'s are vectors of the dual space, the configurations of vectors in a space and in its dual are canonically identified.

\begin{proposition}
\label{6.14.11.1} The map
\begin{equation}
\ast: (l_1, \ldots , l_n) \lms (g_1, \ldots , g_n)
\end{equation}
is a duality.
\end{proposition}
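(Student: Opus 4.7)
\medskip

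The plan is to verify $\ast^2 = \Id$ on $\Conf_n(k)$ by computing the double dual $(\ast\ast l)_i$ explicitly and showing it coincides with $l_i$ (up to the overall $GL_k$-action that defines configurations). First I would unravel the definition: applying $\ast$ to the tuple of covectors $(g_1, \ldots, g_n)$, regarded as a configuration of vectors in $V_k^{*}$ under the canonical identification $\Conf_n(V) = \Conf_n(V^{*})$, produces a tuple $(h_1, \ldots, h_n)$ of elements of $V_k^{**} = V_k$.  Each $h_i$ is characterized by the vanishing conditions $g_j(h_i) = 0$ for the $(k-1)$ indices $j$ that appear in the ``hyperplane'' slot of the twisted cyclic formula, together with the normalization $g_{j_0}(h_i) = 1$ for the remaining index.

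Next I would invoke the vanishing structure of the $g_j$'s: by construction, $g_j$ annihilates $l_m$ exactly when $m$ lies in the window of $k-1$ consecutive indices used to define $g_j$.  The same combinatorial intersection argument that proves the projective Lemma (intersecting the $k-1$ hyperplanes $H_{j}$ whose defining windows all contain a given vertex) shows that the common kernel of the relevant covectors $g_j$ is generically one-dimensional and contains $l_i$.  Therefore $h_i = \lambda_i\,l_i$ for some scalar $\lambda_i$; the problem reduces to evaluating this scalar.

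The key computational tool for $\lambda_i$ is the standard pairing identity
\begin{equation*}
\omega^{*}(g_{i_1}, \ldots, g_{i_k})\, \omega(l_{j_1}, \ldots, l_{j_k}) = \det\bigl(g_{i_a}(l_{j_b})\bigr),
\end{equation*}
which transfers the dual volume form appearing in the formula for $\ast g$ back to an ordinary determinant of pairings computable from the definition of the $g_j$'s.  For the $k$-tuples of indices that arise in the normalization of $h_i$, the matrix $\bigl(g_{i_a}(l_{j_b})\bigr)$ has a staircase pattern of zeros inherited from the consecutive-vanishing property of the $g_j$'s, so up to one cofactor expansion it is triangular.  Multiplying out the diagonal produces a telescoping product whose terms are ratios $\omega(l_j,\ldots,l_{j+k-1})/\omega(l_{j+1},\ldots,l_{j+k})$; almost everything cancels, and substituting the resulting value of $\omega^{*}$ back into the normalization gives $\lambda_i = 1$.

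The hard part will be the sign and index bookkeeping.  Each twisted cyclic shift introduces a $(-1)^{k-1}$, the staircase determinant contributes a permutation sign, and the $(k-2)$-step effective shift of indices that arises from doing $\ast$ twice ``forward'' (rather than reversing orientation, as in the projective Lemma) has to be absorbed by those signs together with the canonical identification of $V_k$ with $V_k^{**}$.  Checking that all of these pieces combine to give exactly $\lambda_i = 1$, and that the final answer is independent of the auxiliary choices of volume form $\omega$ and initial vertex $v$ used to index the configuration, is what turns the projective duality into a vectorial duality on $\Conf_n(k)$.
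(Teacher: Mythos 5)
The core computational machinery you set up is sound---the pairing identity $\ast\omega(g_{i_1},\ldots,g_{i_k})\,\omega(l_{j_1},\ldots,l_{j_k})=\det\bigl(g_{i_a}(l_{j_b})\bigr)$ is correct, and the staircase/telescoping evaluation it enables is exactly how one would nail the normalization scalars---but there is a genuine gap at the step where you conclude $h_i=\lambda_i\,l_i$. If both applications of $\ast$ are run ``forward,'' as you propose, the window bookkeeping does \emph{not} return $l_i$: each $g_j$ annihilates $l_m$ precisely for $m$ in the window $\{j,\ldots,j+k-2\}$, so the common kernel of $g_1,\ldots,g_{k-1}$ is spanned by $l_{k-1}$ (the one index lying in all those windows), giving generically $h_1=\lambda_1\,l_{k-1}$ and in general $h_i=\lambda_i\,l_{i+k-2}$. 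Thus the forward double dual is, up to scalars, the $(k-2)$-th power of the twisted cyclic shift, which for $k\geq 3$ is a nontrivial permutation of the tuple and hence a genuinely different point of $\Conf_n(k)$. No sign bookkeeping, and certainly not the canonical identification $V_k\cong V_k^{**}$ (which is index-free), can absorb an index shift, contrary to what your final paragraph hopes. This is precisely why the paper defines parity as a map $\Conf_{P_n}\to\Conf_{\ast P_n}$ with the orientation of the polygon reversed, and why the proof of the projective Lemma stresses that the reversal is ``crucial'': the second dualization must use the $k-1$ windows \emph{preceding} the vertex $v$, and it is exactly those windows whose common intersection is $x_v$. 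So before any scalar computation you must re-index the $g$'s by the reversed cyclic order; only then is $h_i$ proportional to $l_i$ and your determinant computation meaningful.

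With that repair, your route is viable and genuinely different from the paper's in how the scalars are pinned down. The paper never computes any $\lambda_i$ directly: it quotes the projective Lemma for the proportionality $h_i=\lambda_i\,l_i$, proves Lemma~\ref{invo} by a divisibility-and-degree argument (which fixes the constant only up to sign), deduces the bracket identity $\langle \ast g_1,\ldots,\ast g_k\rangle=\langle 1,\ldots,k\rangle$---hence only the \emph{products} $\lambda_i\lambda_{i+1}\cdots\lambda_{i+k-1}=1$---and then needs an arithmetic step (``for odd $n$, and thus, by employing a trick, for all $n$'') to conclude. Your direct evaluation of each scalar via the staircase determinant is in effect a sharper, sign-exact proof of Lemma~\ref{invo}, and it would bypass both the sign ambiguity and the odd-$n$/trick step, which is a real gain; but it is contingent on handling the orientation reversal correctly, which is the one idea your proposal currently gets wrong.
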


For computational purposes, we define a non-normalized version of the parity map
\begin{equation}
f_1(\bullet):=
\omega(l_1, \dotsc, l_{k-1}, \bullet),
\end{equation}
and $f_i$ is obtained by the twisted cyclic shift by $i-1$.

A volume form $\omega$ in $V_k$ defines the dual volume form $\ast\omega$ on $V_k^*$.
We set
\begin{equation}
[1, \dotsc, k]:= [f_1, \dotsc, f_k]:= \ast\omega(f_1, \dotsc, f_{k}).
\end{equation}

\begin{lemma}
\label{invo}
One has
\begin{equation}
\label{form*}
[f_{1}, f_{2}, \ldots , f_{k}]=  \langle 1,2,\dotsc,k\rangle \langle 2,3,\dotsc,k+1\rangle \ldots
\langle k-1,k,\dotsc,2k-2\rangle.
\end{equation}
\end{lemma}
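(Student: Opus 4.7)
The plan is to evaluate the left-hand side as a $k \times k$ determinant in a carefully chosen basis of $V_k$ and show that the sparsity forced by antisymmetry of $\omega$ makes the determinant factor into the desired product of consecutive brackets. For any basis $(v_1, \ldots, v_k)$ of $V_k$, the standard identity between the induced dual volume form and the matrix of pairings gives
\begin{equation*}
[f_1, \ldots, f_k] \cdot \omega(v_1, \ldots, v_k) = \det\bigl(f_i(v_j)\bigr)_{1 \le i, j \le k}.
\end{equation*}
I will apply this with the generic basis $(v_1, \ldots, v_k) = (l_2, l_3, \ldots, l_{k+1})$, so that $\omega(v_1, \ldots, v_k) = \langle 2, 3, \ldots, k+1\rangle$ is nonzero.

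The crucial observation is a sparsity pattern. Writing $f_i(\bullet) = \sigma_i\, \omega(l_i, l_{i+1}, \ldots, l_{i+k-2}, \bullet)$ with signs $\sigma_i \in \{\pm 1\}$ coming from the twisted cyclic shift, antisymmetry of $\omega$ forces $f_i(l_{j+1}) = 0$ whenever $j+1 \in \{i, i+1, \ldots, i+k-2\}$. Enumerating, rows $i = 1, 2$ carry nonzero entries only in columns $k-1$ and $k$, whereas row $i \geq 3$ carries nonzero entries only in columns $1, \ldots, i-2$. Consequently the matrix has block antidiagonal form $\begin{pmatrix} 0 & A \\ B & 0 \end{pmatrix}$, where $A$ is a $2 \times 2$ upper-triangular block (its bottom-left entry $f_2(l_k)$ vanishes) with diagonal entries $\langle 1, 2, \ldots, k\rangle$ and $\sigma_2 \langle 2, 3, \ldots, k+1\rangle$, while $B$ is a $(k-2) \times (k-2)$ lower-triangular block whose diagonal entries are $f_{p+2}(l_{p+1}) = \sigma_{p+2}(-1)^{k-1}\langle p+1, p+2, \ldots, p+k\rangle$ for $p = 1, \ldots, k-2$, after cyclically reordering the arguments of $\omega$ into increasing order.

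Multiplying $\det(A) \det(B)$ therefore produces each of the consecutive brackets $\langle i, i+1, \ldots, i+k-1\rangle$ for $i = 1, \ldots, k-1$, with the middle factor $\langle 2, 3, \ldots, k+1\rangle$ appearing squared. Dividing by $\omega(l_2, \ldots, l_{k+1}) = \langle 2, 3, \ldots, k+1\rangle$ cancels exactly this extra factor and reproduces the right-hand side of (\ref{form*}) up to an overall sign. The step I expect to be the main obstacle is the sign bookkeeping: one must check that the block-antidiagonal sign $(-1)^{2(k-2)} = +1$, the $k-2$ cyclic-permutation signs $(-1)^{k-1}$ picked up in rewriting the diagonal entries of $B$ as bracket symbols with increasing indices, and the accumulated signs $\sigma_2 \sigma_3 \cdots \sigma_k$ coming from the twisted cyclic shift all combine to $+1$. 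This reduces to verifying $(-1)^{(k-1)(k-2)} \prod_{i=2}^{k} \sigma_i = 1$, an identity which follows directly from the defining formula for the twisted cyclic shift $c(v_1, \ldots, v_n) = ((-1)^{k-1}v_n, v_1, \ldots, v_{n-1})$ iterated $i-1$ times.
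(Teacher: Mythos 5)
Your proof is correct, but it takes a genuinely different route from the paper's. The paper proves Lemma~\ref{invo} by divisibility and degree count: if any bracket on the right-hand side of~(\ref{form*}) vanishes, two of the covectors become proportional (e.g.\ $\langle 1,\dotsc,k\rangle=0$ forces $f_1 \propto f_2$, since their kernels coincide), so each bracket, being an irreducible polynomial, divides $[f_1,\dotsc,f_k]$; the two sides have equal degree, and the leftover constant is pinned to $\pm 1$ by a reduction-mod-$p$ argument, with a homogeneity check under rescalings $v_i \lms \lambda_i v_i$ and $\omega \lms \lambda\omega$. You instead evaluate the left side outright as the determinant $\det\bigl(f_i(l_{j+1})\bigr)$ in the basis $(l_2,\dotsc,l_{k+1})$, using the correct identity $[f_1,\dotsc,f_k]\,\omega(v_1,\dotsc,v_k)=\det\bigl(f_i(v_j)\bigr)$; your sparsity pattern, the block-antidiagonal structure with a $2\times 2$ upper-triangular block and a $(k-2)\times(k-2)$ lower-triangular block, and the cancellation of the squared factor $\langle 2,\dotsc,k+1\rangle$ all check out (I verified the $k=4$ case reproduces the paper's example $\langle 1234\rangle\langle 2345\rangle\langle 3456\rangle$). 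What each approach buys: yours is more computational but strictly sharper on the sign---it yields the constant $+1$ directly, whereas the paper's argument as written only fixes $\pm 1$ (the scaling remark checks homogeneity, not the sign); the paper's divisibility trick, on the other hand, is basis-free, shorter, and is reused as the engine for the further bracket identities~(\ref{a}) and~(\ref{b}) in the appendix, so it buys generality. One clean-up in your last step: the identity $(-1)^{(k-1)(k-2)}\prod_{i=2}^{k}\sigma_i = 1$ is true, but the transparent reasons are (i) all $\sigma_i = +1$ for $i \le k$, because the statement implicitly assumes $n \ge 2k-2$ so no index wraps around and the twisted-shift sign $(-1)^{k-1}$ never triggers, and (ii) $(-1)^{(k-1)(k-2)} = +1$ because $(k-1)(k-2)$ is a product of consecutive integers and hence even; attributing the whole sign identity to iterating the twisted cyclic shift is a slight misdirection, though it does not affect the validity of your argument.
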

For example, for $k=4$ we have
\begin{equation}
[f_{1}, f_{2}, f_{3}, f_{4}]=  \langle 1,2,3, 4\rangle \langle 2,3,4, 5\rangle \langle 3,4, 5, 6\rangle.
\end{equation}

\begin{proof} The left-hand side in~(\ref{form*}) vanishes if any of the brackets on the right vanishes.
For example, if $\langle 1,2,\dotsc,k\rangle =0$, then $f_1$ and $f_2$ are proportional, and thus
$[f_{1}, \ldots, f_{k}]=0$.
Therefore the left hand side is divisible by their product. Since they are polynomials of the same degree,
the claim follows up to a constant. The constant  must be $\pm 1$:
indeed, it  is a rational number;
if it is divisible by a prime $p$, reducing mod $p$ we get a contradiction.
Notice that the two expressions in~(\ref{form*}) scale the same way under the rescaling
$v_i \lms \lambda_i v_i$. Rescaling $\omega \lms \lambda \omega$
we rescale $f_{\ast}$ by $\lambda$ and $[\ast]$ by $\lambda^{-1}$.
Thus both sides scale by $\lambda^{k-1}$.
\end{proof}

\begin{proof}[Proof of the Proposition.]
The map $\ast^2$, being projected to configurations of points,
 is the identity map. Here it is crucial that the change
of the orientation of the polygon $P_n$ is built in its definition.
 Lemma~\ref{invo} tells
\begin{equation}
\label{form*a}
[g_{1}, g_{2}, \ldots , g_{k}]=  \langle k,\dotsc,2k-1\rangle^{-1}.
\end{equation}
Therefore $\langle \ast g_{1}, \ldots , \ast g_{k}\rangle = \langle 1,\dotsc,k\rangle$.
This implies the proof for odd $n$, and thus, by employing a trick, for all $n$.
\end{proof}

\subsection{Parity conjugation for the projective plane}

It is convenient to use
the notation $f_{i j}$ for the functional defined as
$f_{i j}(v):= \omega(v_i, v_j, v)$.
One has
\begin{equation}
\label{a}
[f_{12}, f_{23}, f_{34}]=  \langle 1,2,3\rangle\langle 2,3,4\rangle.
\end{equation}
\begin{equation}
\label{b}
[f_{12}, f_{23}, f_{45}]=  \langle 1,2,3\rangle\langle 2,4, 5\rangle.
\end{equation}
To prove the second identity, notice that if any of the two factors becomes zero,
then the left hand side is zero. For example, if $v_5$ is a linear combination of $v_2$ and $v_4$, then
the left hand side is proportional to $[f_{12}, f_{23}, f_{24}]=0$.

\begin{figure}
\centering
\includegraphics{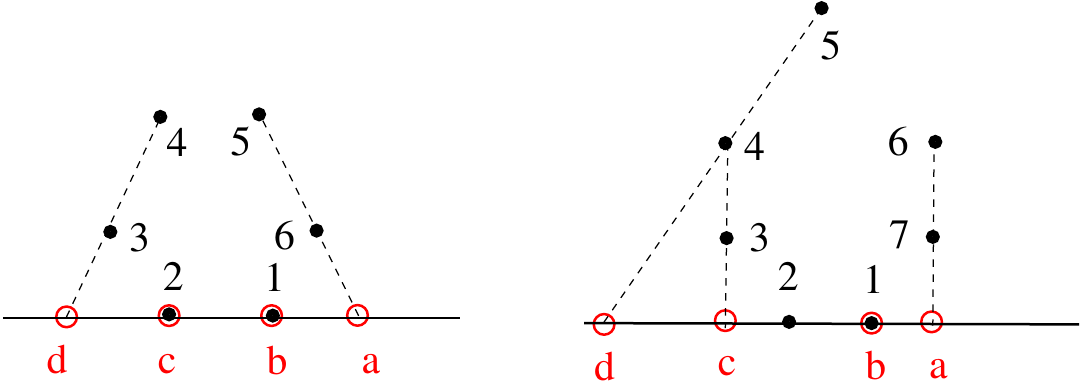}
\caption{$r(a,b, c,d) = \frac{\langle 234\rangle\langle
    156\rangle}{\langle 1\times 2, 3\times4, 5\times 6\rangle}$ (left), and
$r(a,b,c, d) = \frac{\langle 123\rangle\langle 345\rangle\langle
  671\rangle}{\langle 134\rangle\langle 1\times2, 4\times5, 6\times 7\rangle}$ (right).}
\label{sa2l1}
\end{figure}

Using this, we easily calculate some examples of the
parity involution action for configurations of points in $\PP^2$:
\begin{align}
\label{sa11}
r(1\vert 2,3,4,5) &=  \frac{\langle 123\rangle\langle
  145\rangle}{\langle 134\rangle\langle 125\rangle} \lms r_3(2,5,4; 1,6,3) =
 -\frac{\langle 251\rangle\langle 546\rangle\langle 423\rangle}{\langle
   256\rangle\langle 543\rangle\langle 421\rangle}, \\
\label{sa12}
r(1\vert 2,3,5, 7) &=   \frac{\langle 123\rangle\langle
  157\rangle}{\langle 135\rangle\langle 127\rangle} \lms
 \frac{\langle 234\rangle\langle 156\rangle}{\langle 1\times 2, 3\times4, 5\times 6\rangle}, \\
\label{sa13}
r(1\vert 3,4,6,7) &=  \frac{\langle 134\rangle\langle 167\rangle} {\langle
  146\rangle\langle 137\rangle} \lms \frac{\langle 124\rangle\langle
  345\rangle\langle 671\rangle}{\langle 134\rangle\langle 1\times2, 4\times5, 6\times 7\rangle}, \\
\label{sa14}
r(1\vert 3,4,5, 6) &=  \frac{\langle 134\rangle\langle
  156\rangle}{\langle 145\rangle\langle 136\rangle} \lms
\frac{\langle 345\rangle\langle 124\rangle\langle 567\rangle\langle
  126\rangle}{\langle 456\rangle\langle 125\rangle\langle 1\times2, 3\times4, 6\times 7\rangle}, \\
\label{sa15}
 r_3(1,2,4;7,3,6) &=-\frac{\langle 127\rangle\langle
  243\rangle\langle 416\rangle}{\langle 123\rangle\langle
  246\rangle\langle 417\rangle}
\lms \frac{\langle 345\rangle \langle 1\times2, 4\times5, 6\times
  7\rangle}{\langle 145\rangle\langle 2\times3, 4\times5, 6\times 7\rangle}.
\end{align}
Let us give two examples of the proofs of these formulas.

1. The
formula illustrated in fig.~\ref{sa2l3} was proved in~\cite{G91a}.
Here is a different proof:
\begin{equation}
(f_{14}\vert f_{23}, f_{25}, f_{61}, f_{36}) =
\frac{[f_{14}, f_{23}, f_{25}][f_{14}, f_{61}, f_{36}]}{[f_{14},f_{25}, f_{61}][f_{14}, f_{23}, f_{36}]}
\stackrel{(\ref{a}), (\ref{b})}{=}
\end{equation}
\begin{equation}
\frac{\langle 325\rangle\langle 214\rangle\langle 361\rangle\langle
  614\rangle}{\langle 614\rangle\langle 125\rangle\langle
  236\rangle\langle 314\rangle} =
-\frac{\langle 124\rangle\langle 235\rangle\langle 316\rangle}{\langle
  125\rangle\langle 236\rangle\langle 314\rangle}.
\end{equation}
The cross-ratio $r(f_{14}\vert f_{23}, f_{25}, f_{61}, f_{36})$ can be calculated by viewing the
points $f_{i j}$ of the dual projective plane
as the lines, denoted by $L_{i j}$,  in the original projective plane. Then it is the cross-ratio of the
configuration of four points
obtained by intersecting the line $L_{14}$ with the lines $L_{23}, L_{25}, L_{61}, L_{36}$.
This configuration is nothing else but the configuration of points $(a,b,c,d)$ on the line $L_{14}$.
So we arrive at
the geometric interpretation of the triple ratio as the cross-ratio~\cite{G95}
given on the left of fig.~\ref{sa2l3}:
\begin{equation} \label{tripleratio1}
r(1\vert 2,4,3,25\cap 36) = r_3(1,2,3;4,5,6)= -\frac{\langle
  124\rangle\langle 235\rangle\langle 316\rangle}{\langle
  125\rangle\langle 236\rangle\langle 314\rangle}.
\end{equation}
Notice that
$r(2\vert 3,5,1,14\cap 36)= r(1\vert 2,4,3,25\cap 36)$; to see this, project onto the $36$ line.

\begin{figure}
\centering
\includegraphics{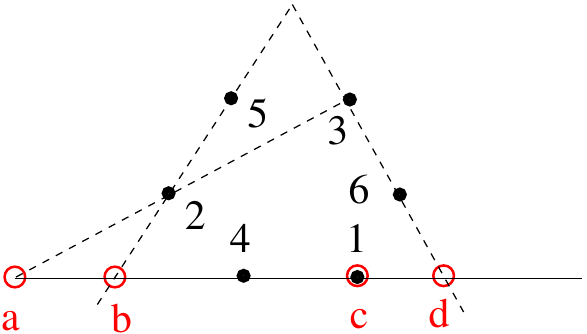}
\caption{$r(a,b,c,d) = r(1\vert 2,4,3,25\cap 36) = r_3(1,2,3;4,5,6)=
  -\frac{\langle 124\rangle\langle 235\rangle\langle
    316\rangle}{\langle 125\rangle\langle 236\rangle\langle 314\rangle}$.}
\label{sa2l3}
\end{figure}

2. To check formula~(\ref{sa12}) we write
\begin{equation}
r(f_{12}\vert f_{23}, f_{34}, f_{56}, f_{71}) = \frac{[f_{12}, f_{23}, f_{34}]
[f_{12}, f_{56}, f_{71}]}{[f_{12}, f_{34}, f_{56}][f_{12}, f_{23}, f_{71}]} \stackrel{(\ref{a}), (\ref{b})}{=}
\end{equation}
\begin{equation}
\frac{\langle 123\rangle\langle 234\rangle
\langle 712\rangle\langle 156\rangle}{[f_{12}, f_{34}, f_{56}]\langle
123\rangle\langle 271\rangle} =
\frac{\langle 156\rangle\langle 234\rangle}{\langle {1\times2}, {3\times4}, {5\times6}\rangle}.
\end{equation}
This provides the geometric interpretation of the cluster $\mathcal{X}$-coordinate~(\ref{sa12})  as the cross-ratio
given on the left of fig.~\ref{sa2l1}.

Therefore all of the more complicated cluster $\mathcal{X}$-coordinates for $\Conf_7(\PP^3)$ can be obtained by
applying the parity involution to the standard cross-ratios $r$ and the triple ratios $r_3$.

\begin{figure}
\centering
\includegraphics{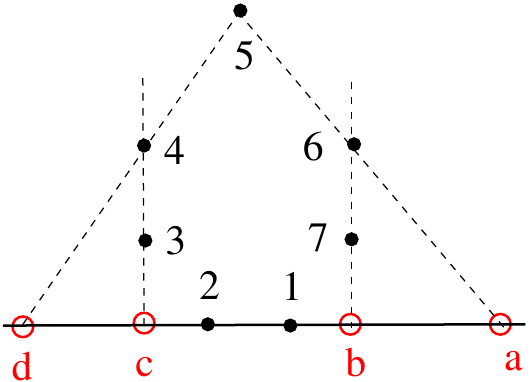}
\caption{$r(a,b,c,d) = \frac{(345)(124)(567)(126)}{(456)(125)(1\times2, 3\times 4, 6\times 7)}$.}
\label{sa2l2}
\end{figure}

\section{An Identity for the Trilogarithm in Cluster \texorpdfstring{$\mathcal{X}$}{X}-Coordinates}
\label{sec:trilog-ident}

As we emphasized in sec.~\ref{sec:math}, Abel's pentagon equation for the dilogarithm
admits a form where the set of arguments is the set of all (negated)
cluster $\mathcal{X}$-coordinates on the space of  configurations of five points
in $\PP^1$: this is the example of the $A_2$ algebra reviewed
in sec.~\ref{sec:intr-clust-algebr}, though the properties of the pentagram
of arguments were studied already by Gauss~\cite{Gauss}.

It is natural to ask whether there is a natural generalization of this
feature of Abel's identity to higher polylogarithms.
There is a generic functional equation for the trilogarithm related to the space of configurations
of seven points in $\PP^2$, see~\cite{G91a,G95}, from which any functional equation
for the trilogarithm can be deduced.
Its  arguments are the triple ratios, which, as we now know, are the simplest examples  of cluster
$\mathcal{X}$-coordinates which go beyond the $\PP^1$ cross-ratios. However,
the collection of all its arguments is invariant under the action of the permutation group
$S_7$, and so cannot be a subset of the set of cluster coordinates.
Instead, we have found the following

\begin{theorem}
\label{fe3l}
Given a configuration of six points in $\PP^2$, there is a 40-term functional
equation for the classical trilogarithm:
\begin{multline}\label{B1}
\{(1\vert 2,3,4,5)\}_3 + \{(1\vert 2,4,5,6)\}_3 + \{r_3(1,4,5; 2,3,6)\}_3 +\\
+\frac{1}{3}\{r_3(1,3,5;6,2,4)\}_3 + \textnormal{signed dihedral permutations} = 0
\end{multline} in $\B_3(\C)$.
Here the six cyclic permutations are taken with plus sign, and the six
anticyclic permutations are with minus sign.
\end{theorem}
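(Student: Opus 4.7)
My approach is to prove eq.~(\ref{B1}) by first computing the coproduct
$\delta_3 : \B_3(F) \lra \B_2(F) \otimes F^*_\Q$, $\{x\}_3 \lms \{x\}_2 \otimes x$,
of the left-hand side $\Phi$ and showing it vanishes in $\B_2(F) \otimes F^*_\Q$, then invoking
a rigidity argument to pass to the equality $\Phi = 0$ in $\B_3(\C)$. This two-stage structure is
forced on us: $\delta_3$ captures all information about $\Phi$ up to torsion, and for $\B_3$ of
a field of characteristic zero the torsion vanishes in our $\Q$-vector space setting.

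Each of the 40 arguments of $\Phi$ is either a projective cross-ratio $r(a|b,c,d,e)$ or a triple
ratio $r_3$ in the sense of eq.~(\ref{9.7.12.10}), and hence is a Laurent monomial in the 20
Plücker coordinates $\ket{ijk}$ of six points in $\PP^2$. These are precisely cluster
$\mathcal{X}$-coordinates of the $\Gr(3,6) = D_4$ cluster algebra on $\Conf_6(\PP^2)$.
Expanding each such monomial in the right-hand tensor factor and collecting yields
\begin{equation}
\delta_3(\Phi) = \sum_{1 \leq i<j<k \leq 6} C_{ijk} \otimes \ket{ijk},
\qquad C_{ijk} \in \B_2(F).
\end{equation}
By the manifest signed dihedral symmetry of $\Phi$ together with the $S_6$ action on Plücker
labels, it suffices to prove the vanishing of $C_{123}$ and $C_{124}$ in $\B_2(F)$; the remaining
coefficients are obtained by symmetry.

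To show the vanishing of each $C_{ijk}$ I would use Abel's five-term relation
eq.~(\ref{eq:fiveterm}), as realized by the $A_2$ pentagonal faces of the $D_4$ Stasheff polytope.
Every argument of $\Phi$ sits in this cluster algebra, so each $C_{ijk}$ is expected to decompose
as a $\Q$-linear combination of pentagon relations drawn from the $A_2$-faces incident to those
arguments. The main obstacle is the combinatorial bookkeeping: the coproduct expansion produces
several hundred $\{x\}_2$ contributions which must be assembled into a handful of pentagons, a
task best carried out with computer algebra and cross-checked by numerical evaluation of
$\delta_3(\Phi)$ at random rational points. The fact that all arguments lie in a single rank-$4$
cluster algebra is what makes this identity tractable where a generic weight-three identity
(requiring seven points in $\PP^2$) would not be.

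Once $\delta_3(\Phi) = 0$ is established, the second stage proceeds by rigidity.
The differential of the single-valued trilogarithm $\mathcal{L}_3(\Phi)$, regarded as a
function on $\Conf_6(\C\PP^2)$, is controlled by $\delta_3(\Phi)$, so its vanishing forces
$\mathcal{L}_3(\Phi)$ to be locally constant on the connected moduli space. Evaluating
$\mathcal{L}_3(\Phi)$ numerically at a single generic real configuration and finding it to be
zero then places $\Phi$ in $\mathcal{R}_3(\C)$, completing the proof.
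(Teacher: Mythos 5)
Your proof skeleton is essentially the paper's: compute the cobracket $\delta_3(\Phi)$ over the function field of $\Conf_6(\PP^2)$, reduce to a few Pl\"ucker components by the symmetry of the identity, kill each component by five-term relations (which in the paper are exhibited explicitly as $\partial_{\mathrm{cyc}}$-combinations corresponding to pentagon faces of the $D_4$ Stasheff polytope), and then fix the remaining constant by a specialization. One slip in your reduction step: the identity is invariant only under the \emph{signed dihedral} group, not under $S_6$ (permuting the six points does not preserve $\Phi$), so the triples $\{i,j,k\}$ fall into \emph{three} dihedral orbits, represented by $\langle 123\rangle$, $\langle 124\rangle$ and $\langle 135\rangle$; you must also verify $C_{135}=0$ (in the paper this component vanishes identically, $\B_2^{\langle 135\rangle}=0$, but it is a separate check, not a consequence of your claimed $S_6$ reduction).

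The genuine gap is in your final step. Granting $\delta_3(\Phi)=0$ over $\C(\Conf_6)$, restriction to a generic one-parameter family together with the defining construction of $\mathcal{R}_3$ does show that the class $\Phi\in\B_3(\C)$ is independent of the configuration; this matches the paper. But you then propose to show the constant class vanishes by \emph{numerically} evaluating $\mathcal{L}_3(\Phi)$ at one generic real configuration. That proves at most the analytic identity $\sum\pm\mathcal{L}_3=0$, not vanishing in $\B_3(\C)$: $\mathcal{L}_3$ is a single real-valued functional with an enormous kernel on the $\Q$-vector space $\B_3(\C)$ (for instance $\{z\}_3-\{\bar z\}_3$ is killed by $\mathcal{L}_3$ since $\mathcal{L}_3(z)=\mathcal{L}_3(\bar z)$, as is $\{x\}_3+\{y\}_3$ whenever $\mathcal{L}_3(x)=-\mathcal{L}_3(y)$), so a numerical zero at one point cannot certify membership in $\mathcal{R}_3(\C)$. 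What is needed---and what the paper does, leaving the computation as an exercise---is to check that a suitable \emph{specialization} of the configuration makes the $40$-term sum manifestly zero in $\B_3$, e.g.\ a degeneration in which the terms cancel pairwise or their arguments degenerate to the trivially-killed values. Relatedly, your claim that ``$\delta_3$ captures all information about $\Phi$ up to torsion'' is off: $\B_3$ is already a $\Q$-vector space, and the kernel of $\delta_3$ over a function field is the (huge) space of constants $\B_3(\C)$---which is precisely why the specialization step cannot be replaced by a numerical evaluation.
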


The remarkable feature of the functional equation~(\ref{B1}) for the trilogarithm is that each of its $40$ arguments
is a cluster $\mathcal{X}$-coordinate on the space of configurations of six points in $\PP^2$.
This is the first known
functional equation for the trilogarithm with the property that all arguments
are cluster $\mathcal{X}$-coordinates of the same algebra.

\begin{figure}
\centering
\includegraphics{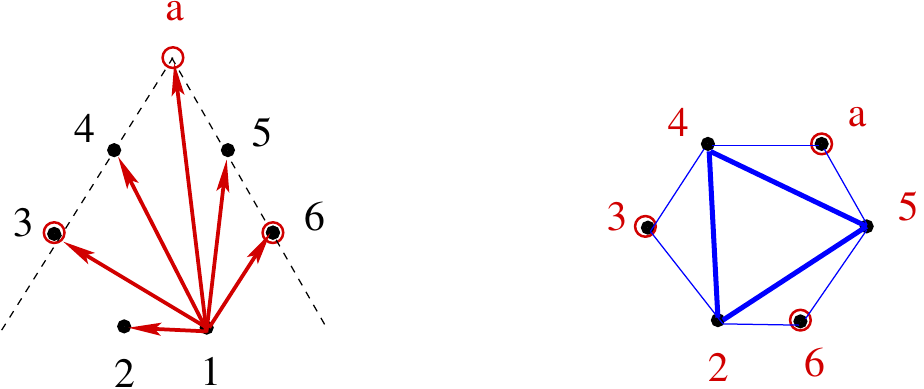}
\caption{Geometric meaning of the first three terms in~(\ref{B1}), given by cross-ratios~(\ref{threecr}).}
\label{sa2l5}
\end{figure}

Here is a geometric interpretation of the functional equation~(\ref{B1}).
Take a configuration of six points $(x_1, \dotsc, x_6)$ in $\PP^2$.
 Let $a:= {x_3x_4}\cap {x_5x_6}$ be the intersection point of the lines ${x_3x_4}$ and ${x_5x_6}$.
The lines passing through $x_1$ form a projective line.
The points $x_2, x_3, x_4, a, x_5, x_6$ determine a configuration of six point on that line, denoted by
$(x_1\vert x_2, x_3, x_4, a, x_5, x_6)$, and illustrated on the left in fig.~\ref{sa2l5}.
The three cross-ratios
\begin{equation}
\label{threecr}
r(x_1\vert x_2, x_3, x_4, x_5),  \qquad r(x_1\vert x_2, x_4, x_5, x_6),  \qquad r(x_1\vert x_4, a, x_5, x_2)
\end{equation}
are the arguments of the first three terms in~(\ref{B1}), up to the inversion of the third one,
which does not affect the corresponding element of the group $\B_3$. Indeed, the identity~(\ref{tripleratio1}) shows that
\begin{equation}
r(x_1\vert x_4, a, x_5, x_2) = r(x_1\vert x_4, x_2, x_5, a)^{-1}  = r_3(1,4,5; 2,3,6)^{-1}.
\end{equation}

 The three cross-ratios~(\ref{threecr}) are the cluster $\mathcal{X}$-coordinates
on the space of configurations $(x_1\vert x_2, x_3, x_4, a, x_5, x_6)$ of six points on $\PP^1$
corresponding to the triangulation of the hexagon in fig.~\ref{sa2l5}. They are
assigned   to the three diagonals of the triangulation;
 the points of the configuration $(x_1\vert x_2, x_3, x_4, a, x_5, x_6)$ are situated at the vertices of the hexagon.

\begin{proof}[Proof of Theorem~\ref{fe3l}]
We need to check that applying the cobracket $\delta$ to the
$40$-term expression~(\ref{B1}) we get zero in $\B_2(\C) \otimes \C^*$.
Due to the (signed) dihedral symmetry, it is sufficient to
calculate the expressions $\B_2^{\langle 123\rangle} \otimes \langle 123\rangle$,
$\B_2^{\langle 124\rangle}\otimes \langle 124\rangle$ and $\B_2^{\langle 135\rangle}\otimes
\langle 135\rangle$ in $\Z[\C]\otimes \C^*$. One has $\B_2^{\langle 135\rangle}=0$. The other two
we present as sums of  the five-term relations of geometric
origin. Before we formulate the answer, let us recall that
\begin{align}
\frac{\langle 123\rangle\langle 456\rangle}{\langle 1\times 2, 3\times 4, 5\times 6\rangle} &= r(1\vert 2,3,4, 34\cap 56), \\
\frac{\langle 1\times 2, 3\times 4, 5\times 6\rangle}{\langle
  156\rangle\langle234\rangle} &= r(5\vert 1, 2, 12\cap 34, 6).
\end{align}

Using this, one can calculate that
\begin{multline}
\B_2^{(123)}=
-\Bigl(\{r(5\vert 6,1,3,4)\}_2 + \{r_3(1,5,4; 2,6,3)\}_2+
       \{r_3(1,3,5; 2,4,6)\}_2 +\\ \{r(1\vert 5,2,3,4)\}_2 +
       \{r(1\vert 2,3,4, 34\cap 56)\}_2\Bigr)\\
-\Bigl(\{r(3\vert 2,4,6,1)\}_2 + \{r_3(2,3,6; 1,4,5)\}_2 +
       \{r_3(1,6,3; 2,5,4)\}_2 +\\ \{r(6\vert 5,1,2,3)\}_2 +
       \{r(5\vert 1, 2, 12\cap 34, 6)\}_2\Bigr)\\
+\Bigl(\{r(4\vert 5, 1,2,3)\}_2 + \{r_3(2,4,5; 1,3,6)\}_2 +
       \{r_3(1,5,4; 2,6,3)\}_2 +\\ \{r(5\vert 6, 1,2,4)\}_2 +
       \{r(5\vert 1, 2, 12\cap 34, 6)\}_2\Bigr)\\
-\Bigl(\{r(5\vert 2,3,4,6)\}_2 +\{r_3(2,4,5; 1,3,6)\}_2 +
       \{r_3(2,5,3; 1,6,4)\}_2 +\\ \{r(2\vert 1,3,4,5)\}_2 +
       \{r(2\vert 1, 34\cap 56, 4, 3)\}_2\Bigr) - \text{permutation by
         $(321654)$}.
\end{multline}

Similarly,
\begin{multline}
\B_2^{(124)}=
\Bigl(\{r(1\vert 2,4,5,6)\}_2 + \{r(2\vert 4,5,6,1)\}_2 +
      \{r(4\vert 5,6,1,2)\}_2 +\\+ \{r(5\vert 6,1,2,4)\}_2 +
      \{r(6\vert 1,2,4,5)\}_2\Bigr)\\
-\Bigl(\{r(5\vert 1, 6, 12\cap 34, 2)\}_2 + \{r(4\vert 1,2,3,6)\}_2 +
       \{r_3(2,6,4; 1,5,3)\}_2 +\\+ \{r_3(1,4,6; 2,3,5)\}_2 +
       \{r(6\vert 1,2,4,5)\}_2 \Bigr)\\
-\Bigl(\{r(5\vert 1, 2, 12\cap 34, 6)\}_2 + \{r(4\vert 2,3,5,1)\}_2 +
       \{r_3(2,4,5; 1,3,6)\}_2 +\\+ \{r_3(1,5,4; 2,6,3)\}_2 +
       \{r(5\vert 2,4,6,1)\}_2 \Bigr).
\end{multline}

To present each of the seven five-term summands as a five term relation, define
\begin{multline}
\partial_\mathrm{cyc}
(1,2,3,4,5):= \{r(2,3,4,5)\} + \{r(3,4,5,1)\} +\\+ \{r(4,5,1,2)\} + \{r(5,1,2,3)\} + \{r(1,2,3,4)\}.
\end{multline}  We make a similar definition for the case of five
intersecting lines.  If the lines are $(12)$, $(13)$, $(14)$, $(15)$,
$(16)$, then we define
\begin{multline}
  \partial_\mathrm{cyc} (1\vert 2,3,4,5,6) := \{r(1\vert 2,3,4,5)\} +
  \{r(1\vert 3,4,5,6)\} +\\+ \{r(1\vert 4,5,6,2)\} + \{r(1\vert
  5,6,2,3)\} + \{r(1\vert 6,2,3,4)\}.
\end{multline}

Then,
\begin{multline} \label{5tr1}
\B_2^{(123)} =
\partial_\mathrm{cyc}
\Bigl(-(1\vert 2,3,4,34\cap 56, 5) + (3\vert 12\cap 56,1,2,4,6)\\ - (4\vert 12\cap 56,1,2,3,5) + (2\vert 1,3,4,34\cap 56,5)\Bigr).
\end{multline}
\begin{equation}
\label{5tr2}
\B_2^{(124)}=
\partial_\mathrm{cyc}
\Bigl((1,2,4,5,6)- (4\vert 6, 12 \cap 56,1,2,3)+(4\vert 5, 12 \cap 56,1,2,3)\Bigr).
\end{equation}
Notice that the  configuration
 $(1,2,4,5,6)$ in $\PP^2$ by duality, or by drawing the unique conic through
these five points, determines the configuration of five points on $\PP^1$.

Thus not only the functional equation~(\ref{B1}) itself, but also
the way it vanishes in $\B_2 \otimes \C^*$ is of cluster origin:
the five-term relations~(\ref{5tr1})-(\ref{5tr2}) correspond to certain pentagon
faces
in the four-dimensional Stasheff polytope of type $D_4$ describing the
cluster $\mathcal{X}$-variety $\Conf_6(\PP^2)$.

To complete the proof of the theorem it is sufficient to check that a certain specialization
of the functional equation is zero.  We leave this as an exercise.
\end{proof}

Finally, one may look for functional equations for the trilogarithm
which can be expressed via cluster $\mathcal{X}$-coordinates on the space $\Conf_7(\PP^2)$,
which is the case of interest for scattering amplitudes.
The six-point identity in eq.~\eqref{B1} can obviously also be used for seven points in $\mathbb{P}^{2}$:
as written, it simply does not depend on the seventh point.
Other $40$-term identities can be obtained from the  eq.~\eqref{B1} by
applying parity conjugation and dihedral permutations of the points
$1,\dotsc,7$. A less obvious type of transformation is parity followed
by a transposition of two points.  This results in transformations
sending points in $\mathbb{P}^{2}$ to lines in $\mathbb{P}^{2}$.  An
example of such a transformation is
\begin{equation}
1 \to (34), \quad
2 \to (35), \quad
3 \to (56), \quad
4 \to (67), \quad
5 \to (17), \quad
6 \to (12).
\end{equation}
Remarkably, applying this map to the 40-term identity~\eqref{B1} we get a functional equation written via cluster
$\mathcal{X}$-coordinates on the space $\Conf_7(\PP^2)$.
After considering all possible
transformations of these types
we obtain a total of 35 different 40-term identities written via cluster
$\mathcal{X}$-coordinates on the space $\Conf_7(\PP^2)$,
consisting of five families of seven related to each other by cyclic
rotations of the seven points.  Only 22 of these 35 identities are linearly independent.  Of course all of them reduce to the identity in eq.~\eqref{B1} by some in general complicated change of variables.  Such changes of variable arise from an embedding of the $D_4$ cluster algebra (that is, $\Gr(3,6)$) into the $E_6$ (or $\Gr(3,7)$) cluster algebra.
It would be very interesting to search for new functional
identities of cluster type at higher weight. Of particular interest is, of course, the next case---weight 4.
We have found that there
are no identities at weight 4 involving
cluster $\mathcal{X}$-coordinates on $\Conf_8(\PP^2)$, but we do
expect such identities on $\Conf_8(\PP^3)$.

\bibliographystyle{JHEP}
\bibliography{motives47}

\providecommand{\href}[2]{#2}\begingroup\raggedright\begin{thebibliography}{10}

\bibitem{Mangano:1990by}
M.~L. Mangano and S.~J. Parke, {\it {Multiparton amplitudes in gauge
  theories}},  {\em Phys.Rept.} {\bf 200} (1991) 301--367,
  [\href{http://xxx.lanl.gov/abs/hep-th/0509223}{{\tt hep-th/0509223}}].

\bibitem{Dixon:1996wi}
L.~J. Dixon, {\it {Calculating scattering amplitudes efficiently}},
  \href{http://xxx.lanl.gov/abs/hep-ph/9601359}{{\tt hep-ph/9601359}}.

\bibitem{Cachazo:2005ga}
F.~Cachazo and P.~Svrcek, {\it {Lectures on twistor strings and perturbative
  Yang-Mills theory}},  {\em PoS} {\bf RTN2005} (2005) 004,
  [\href{http://xxx.lanl.gov/abs/hep-th/0504194}{{\tt hep-th/0504194}}].

\bibitem{Bern:2007dw}
Z.~Bern, L.~J. Dixon, and D.~A. Kosower, {\it {On-Shell Methods in Perturbative
  QCD}},  {\em Annals Phys.} {\bf 322} (2007) 1587--1634,
  [\href{http://xxx.lanl.gov/abs/0704.2798}{{\tt arXiv:0704.2798}}].

\bibitem{GreyBook}
R.~Roiban, M.~Spradlin, and A.~Volovich, {\it Scattering amplitudes in gauge
  theories: Progress and outlook},  {\em J.Phys.} {\bf A44} (2011) 1.

\bibitem{Feng:2011np}
B.~Feng and M.~Luo, {\it {An Introduction to On-shell Recursion Relations}},
  \href{http://xxx.lanl.gov/abs/1111.5759}{{\tt arXiv:1111.5759}}.

\bibitem{Brink:1976bc}
L.~Brink, J.~H. Schwarz, and J.~Scherk, {\it {Supersymmetric Yang-Mills
  Theories}},  {\em Nucl.Phys.} {\bf B121} (1977) 77.

\bibitem{Gliozzi:1976qd}
F.~Gliozzi, J.~Scherk, and D.~I. Olive, {\it {Supersymmetry, Supergravity
  Theories and the Dual Spinor Model}},  {\em Nucl.Phys.} {\bf B122} (1977)
  253--290.

\bibitem{G02}
A.~Goncharov, {\it {Galois symmetries of fundamental groupoids and
  noncommutative geometry}},  {\em Duke Math. J.} {\bf 128} (2005), no.~2
  209--284, [\href{http://xxx.lanl.gov/abs/math/0208144}{{\tt math/0208144}}].

\bibitem{Schlotterer:2012ny}
O.~Schlotterer and S.~Stieberger, {\it {Motivic Multiple Zeta Values and
  Superstring Amplitudes}},  \href{http://xxx.lanl.gov/abs/1205.1516}{{\tt
  arXiv:1205.1516}}.

\bibitem{Drummond:2013vz}
J.~Drummond and E.~Ragoucy, {\it {Superstring amplitudes and the associator}},
  \href{http://xxx.lanl.gov/abs/1301.0794}{{\tt arXiv:1301.0794}}.

\bibitem{Broedel:2013tta}
J.~Broedel, O.~Schlotterer, and S.~Stieberger, {\it {Polylogarithms, Multiple
  Zeta Values and Superstring Amplitudes}},
  \href{http://xxx.lanl.gov/abs/1304.7267}{{\tt arXiv:1304.7267}}.

\bibitem{Broedel:2013aza}
J.~Broedel, O.~Schlotterer, S.~Stieberger, and T.~Terasoma, {\it {All order
  alpha'-expansion of superstring trees from the Drinfeld associator}},
  \href{http://xxx.lanl.gov/abs/1304.7304}{{\tt arXiv:1304.7304}}.

\bibitem{Drummond:2006rz}
J.~Drummond, J.~Henn, V.~Smirnov, and E.~Sokatchev, {\it {Magic identities for
  conformal four-point integrals}},  {\em JHEP} {\bf 0701} (2007) 064,
  [\href{http://xxx.lanl.gov/abs/hep-th/0607160}{{\tt hep-th/0607160}}].

\bibitem{Bern:2006ew}
Z.~Bern, M.~Czakon, L.~J. Dixon, D.~A. Kosower, and V.~A. Smirnov, {\it {The
  Four-Loop Planar Amplitude and Cusp Anomalous Dimension in Maximally
  Supersymmetric Yang-Mills Theory}},  {\em Phys.Rev.} {\bf D75} (2007) 085010,
  [\href{http://xxx.lanl.gov/abs/hep-th/0610248}{{\tt hep-th/0610248}}].

\bibitem{Alday:2007hr}
L.~F. Alday and J.~M. Maldacena, {\it {Gluon scattering amplitudes at strong
  coupling}},  {\em JHEP} {\bf 0706} (2007) 064,
  [\href{http://xxx.lanl.gov/abs/0705.0303}{{\tt arXiv:0705.0303}}].

\bibitem{Drummond:2007aua}
G.~Korchemsky, J.~Drummond, and E.~Sokatchev, {\it {Conformal properties of
  four-gluon planar amplitudes and Wilson loops}},  {\em Nucl.Phys.} {\bf B795}
  (2008) 385--408, [\href{http://xxx.lanl.gov/abs/0707.0243}{{\tt
  arXiv:0707.0243}}].

\bibitem{Drummond:2007cf}
J.~Drummond, J.~Henn, G.~Korchemsky, and E.~Sokatchev, {\it {On planar gluon
  amplitudes/Wilson loops duality}},  {\em Nucl.Phys.} {\bf B795} (2008)
  52--68, [\href{http://xxx.lanl.gov/abs/0709.2368}{{\tt arXiv:0709.2368}}].

\bibitem{Alday:2007he}
L.~F. Alday and J.~Maldacena, {\it {Comments on gluon scattering amplitudes via
  AdS/CFT}},  {\em JHEP} {\bf 0711} (2007) 068,
  [\href{http://xxx.lanl.gov/abs/0710.1060}{{\tt arXiv:0710.1060}}].

\bibitem{Drummond:2007au}
J.~Drummond, J.~Henn, G.~Korchemsky, and E.~Sokatchev, {\it {Conformal Ward
  identities for Wilson loops and a test of the duality with gluon
  amplitudes}},  {\em Nucl.Phys.} {\bf B826} (2010) 337--364,
  [\href{http://xxx.lanl.gov/abs/0712.1223}{{\tt arXiv:0712.1223}}].

\bibitem{Drummond:2008vq}
J.~Drummond, J.~Henn, G.~Korchemsky, and E.~Sokatchev, {\it {Dual
  superconformal symmetry of scattering amplitudes in N=4 super-Yang-Mills
  theory}},  {\em Nucl.Phys.} {\bf B828} (2010) 317--374,
  [\href{http://xxx.lanl.gov/abs/0807.1095}{{\tt arXiv:0807.1095}}].

\bibitem{Goncharov:2010jf}
A.~B. Goncharov, M.~Spradlin, C.~Vergu, and A.~Volovich, {\it {Classical
  Polylogarithms for Amplitudes and Wilson Loops}},  {\em Phys.Rev.Lett.} {\bf
  105} (2010) 151605, [\href{http://xxx.lanl.gov/abs/1006.5703}{{\tt
  arXiv:1006.5703}}].

\bibitem{G91b}
A.~Goncharov, {\it {Geometry of configurations, polylogarithms, and motivic
  cohomology}},  {\em Adv. Math.} {\bf 114} (1995), no.~2 197--318.

\bibitem{FG03b}
V.~V. Fock and A.~B. Goncharov, {\it {Cluster ensembles, quantization and the
  dilogarithm}},  {\em Ann. Sci. \'Ec. Norm. Sup\'er. (4)} {\bf 42} (2009),
  no.~6 865--930, [\href{http://xxx.lanl.gov/abs/math/0311245}{{\tt
  math/0311245}}].

\bibitem{1021.16017}
S.~Fomin and A.~Zelevinsky, {\it {Cluster algebras. I: Foundations}},  {\em J.
  Am. Math. Soc.} {\bf 15} (2002), no.~2 497--529.

\bibitem{1054.17024}
S.~Fomin and A.~Zelevinsky, {\it {Cluster algebras. II: Finite type
  classification}},  {\em Invent. Math.} {\bf 154} (2003), no.~1 63--121.

\bibitem{ArkaniHamed:2009dn}
N.~Arkani-Hamed, F.~Cachazo, C.~Cheung, and J.~Kaplan, {\it {A Duality For The
  S Matrix}},  {\em JHEP} {\bf 1003} (2010) 020,
  [\href{http://xxx.lanl.gov/abs/0907.5418}{{\tt arXiv:0907.5418}}].

\bibitem{ArkaniHamed:2009vw}
N.~Arkani-Hamed, F.~Cachazo, and C.~Cheung, {\it {The Grassmannian Origin Of
  Dual Superconformal Invariance}},  {\em JHEP} {\bf 1003} (2010) 036,
  [\href{http://xxx.lanl.gov/abs/0909.0483}{{\tt arXiv:0909.0483}}].

\bibitem{ArkaniHamed:2009sx}
N.~Arkani-Hamed, J.~Bourjaily, F.~Cachazo, and J.~Trnka, {\it {Local Spacetime
  Physics from the Grassmannian}},  {\em JHEP} {\bf 1101} (2011) 108,
  [\href{http://xxx.lanl.gov/abs/0912.3249}{{\tt arXiv:0912.3249}}].

\bibitem{ArkaniHamed:2009dg}
N.~Arkani-Hamed, J.~Bourjaily, F.~Cachazo, and J.~Trnka, {\it {Unification of
  Residues and Grassmannian Dualities}},  {\em JHEP} {\bf 1101} (2011) 049,
  [\href{http://xxx.lanl.gov/abs/0912.4912}{{\tt arXiv:0912.4912}}].

\bibitem{ArkaniHamed:2010kv}
N.~Arkani-Hamed, J.~L. Bourjaily, F.~Cachazo, S.~Caron-Huot, and J.~Trnka, {\it
  {The All-Loop Integrand For Scattering Amplitudes in Planar N=4 SYM}},  {\em
  JHEP} {\bf 1101} (2011) 041, [\href{http://xxx.lanl.gov/abs/1008.2958}{{\tt
  arXiv:1008.2958}}].

\bibitem{ArkaniHamed:2012nw}
N.~Arkani-Hamed, J.~L. Bourjaily, F.~Cachazo, A.~B. Goncharov, A.~Postnikov,
  and J.~Trnka, {\it {Scattering Amplitudes and the Positive Grassmannian}},
  \href{http://xxx.lanl.gov/abs/1212.5605}{{\tt arXiv:1212.5605}}.

\bibitem{1057.52003}
S.~Fomin and A.~Zelevinsky, {\it {$Y$-systems and generalized associahedra}},
  {\em Ann. Math. (2)} {\bf 158} (2003), no.~3 977--1018.

\bibitem{0114.39402}
J.~D. Stasheff, {\it {Homotopy associativity of $H$-spaces. I, II}},  {\em
  Trans. Am. Math. Soc. 108,} {\bf 275-292} (1963) 293--312.

\bibitem{Hodges:2009hk}
A.~Hodges, {\it {Eliminating spurious poles from gauge-theoretic amplitudes}},
  \href{http://xxx.lanl.gov/abs/0905.1473}{{\tt arXiv:0905.1473}}.

\bibitem{Penrose:1967wn}
R.~Penrose, {\it {Twistor algebra}},  {\em J.Math.Phys.} {\bf 8} (1967) 345.

\bibitem{Penrose:1972ia}
R.~Penrose and M.~A. MacCallum, {\it {Twistor theory: An Approach to the
  quantization of fields and space-time}},  {\em Phys.Rept.} {\bf 6} (1972)
  241--316.

\bibitem{Witten:2003nn}
E.~Witten, {\it {Perturbative gauge theory as a string theory in twistor
  space}},  {\em Commun.Math.Phys.} {\bf 252} (2004) 189--258,
  [\href{http://xxx.lanl.gov/abs/hep-th/0312171}{{\tt hep-th/0312171}}].

\bibitem{Mason:2009qx}
L.~Mason and D.~Skinner, {\it {Dual Superconformal Invariance, Momentum
  Twistors and Grassmannians}},  {\em JHEP} {\bf 0911} (2009) 045,
  [\href{http://xxx.lanl.gov/abs/0909.0250}{{\tt arXiv:0909.0250}}].

\bibitem{Witten:1978xx}
E.~Witten, {\it {An Interpretation of Classical Yang-Mills Theory}},  {\em
  Phys.Lett.} {\bf B77} (1978) 394.

\bibitem{Beisert:2012gb}
N.~Beisert and C.~Vergu, {\it {On the Geometry of Null Polygons in Full N=4
  Superspace}},  {\em Phys.Rev.} {\bf D86} (2012) 026006,
  [\href{http://xxx.lanl.gov/abs/1203.0525}{{\tt arXiv:1203.0525}}].

\bibitem{Beisert:2012xx}
N.~Beisert, S.~He, B.~U. Schwab, and C.~Vergu, {\it {Null Polygonal Wilson
  Loops in Full N=4 Superspace}},  {\em J.Phys.} {\bf A45} (2012) 265402,
  [\href{http://xxx.lanl.gov/abs/1203.1443}{{\tt arXiv:1203.1443}}].

\bibitem{Bern:2008ap}
Z.~Bern, L.~Dixon, D.~Kosower, R.~Roiban, M.~Spradlin, C.~Vergu, and
  A.~Volovich, {\it {The Two-Loop Six-Gluon MHV Amplitude in Maximally
  Supersymmetric Yang-Mills Theory}},  {\em Phys.Rev.} {\bf D78} (2008) 045007,
  [\href{http://xxx.lanl.gov/abs/0803.1465}{{\tt arXiv:0803.1465}}].

\bibitem{Drummond:2008aq}
J.~Drummond, J.~Henn, G.~Korchemsky, and E.~Sokatchev, {\it {Hexagon Wilson
  loop = six-gluon MHV amplitude}},  {\em Nucl.Phys.} {\bf B815} (2009)
  142--173, [\href{http://xxx.lanl.gov/abs/0803.1466}{{\tt arXiv:0803.1466}}].

\bibitem{DelDuca:2009au}
V.~Del~Duca, C.~Duhr, and V.~A. Smirnov, {\it {An Analytic Result for the
  Two-Loop Hexagon Wilson Loop in N = 4 SYM}},  {\em JHEP} {\bf 1003} (2010)
  099, [\href{http://xxx.lanl.gov/abs/0911.5332}{{\tt arXiv:0911.5332}}].

\bibitem{DelDuca:2010zg}
V.~Del~Duca, C.~Duhr, and V.~A. Smirnov, {\it {The Two-Loop Hexagon Wilson Loop
  in N = 4 SYM}},  {\em JHEP} {\bf 1005} (2010) 084,
  [\href{http://xxx.lanl.gov/abs/1003.1702}{{\tt arXiv:1003.1702}}].

\bibitem{B}
A.~Beilinson, {\it {Height pairing between algebraic cycles}},  in {\em
  K-Theory, Arithmetic and Geometry}.
\newblock {Berlin: Springer-Verlag}, 1987.

\bibitem{DG}
P.~Deligne and A.~B. Goncharov, {\it Groupes fondamentaux motiviques de {T}ate
  mixte},  {\em Ann. Sci. \'Ecole Norm. Sup. (4)} {\bf 38} (2005), no.~1 1--56.

\bibitem{Bl}
S.~J. Bloch, {\em {Higher regulators, algebraic $K$-theory, and zeta functions
  of elliptic curves}}.
\newblock {Providence, RI: American Mathematical Society (AMS)}, 2000.

\bibitem{Su}
A.~Suslin, {\it {$K\sb 3$ of a field and the Bloch group}},  {\em Proc. Steklov
  Inst. Math.} {\bf 183} (1990) 217--239.

\bibitem{Z}
D.~Zagier, {\it {Polylogarithms, Dedekind zeta functions, and the algebraic
  K-theory of fields}},  in {\em Arithmetic Algebraic Geometry}.
\newblock {Boston: Birkh\"auser}, 1991.

\bibitem{G91a}
A.~B. Goncharov, {\it Polylogarithms and motivic {G}alois groups},  in {\em
  Motives ({S}eattle, {WA}, 1991)}, vol.~55 of {\em Proc. Sympos. Pure Math.},
  pp.~43--96.
\newblock Amer. Math. Soc., Providence, RI, 1994.

\bibitem{G95}
A.~Goncharov, {\it {Deninger's conjecture on $L$-functions of elliptic curves
  at $s=3$}},  {\em J. Math. Sci., New York} {\bf 81} (1996), no.~3 2631--2656,
  [\href{http://xxx.lanl.gov/abs/alg-geom/9512016}{{\tt alg-geom/9512016}}].

\bibitem{CaronHuot:2011ky}
S.~Caron-Huot, {\it {Superconformal symmetry and two-loop amplitudes in planar
  N=4 super Yang-Mills}},  {\em JHEP} {\bf 1112} (2011) 066,
  [\href{http://xxx.lanl.gov/abs/1105.5606}{{\tt arXiv:1105.5606}}].

\bibitem{MR1888840}
S.~Fomin and A.~Zelevinsky, {\it The {L}aurent phenomenon},  {\em Adv. in Appl.
  Math.} {\bf 28} (2002), no.~2 119--144.

\bibitem{1088.22009}
J.~S. Scott, {\it {Grassmannians and cluster algebras}},  {\em Proc. Lond.
  Math. Soc., III. Ser.} {\bf 92} (2006), no.~2 345--380.

\bibitem{1057.53064}
M.~Gekhtman, M.~Shapiro, and A.~Vainshtein, {\it {Cluster algebras and Poisson
  geometry}},  {\em Mosc. Math. J.} {\bf 3} (2003), no.~3 899--934.

\bibitem{1215.16012}
B.~Keller, {\it {Cluster algebras, quiver representations and triangulated
  categories}},  in {\em Triangulated Categories}.
\newblock {Cambridge: Cambridge University Press}, 2010.

\bibitem{MR2383126}
S.~Fomin and N.~Reading, {\it Root systems and generalized associahedra},  in
  {\em Geometric combinatorics}, vol.~13 of {\em IAS/Park City Math. Ser.},
  pp.~63--131.
\newblock Amer. Math. Soc., Providence, RI, 2007.

\bibitem{Anastasiou:2009kna}
C.~Anastasiou, A.~Brandhuber, P.~Heslop, V.~V. Khoze, B.~Spence, and
  G.~Travaglini, {\it {Two-Loop Polygon Wilson Loops in N=4 SYM}},  {\em JHEP}
  {\bf 0905} (2009) 115, [\href{http://xxx.lanl.gov/abs/0902.2245}{{\tt
  arXiv:0902.2245}}].

\bibitem{Dixon:2011nj}
L.~J. Dixon, J.~M. Drummond, and J.~M. Henn, {\it {Analytic result for the
  two-loop six-point NMHV amplitude in N=4 super Yang-Mills theory}},  {\em
  JHEP} {\bf 1201} (2012) 024, [\href{http://xxx.lanl.gov/abs/1111.1704}{{\tt
  arXiv:1111.1704}}].

\bibitem{Dixon:2011pw}
L.~J. Dixon, J.~M. Drummond, and J.~M. Henn, {\it {Bootstrapping the three-loop
  hexagon}},  {\em JHEP} {\bf 1111} (2011) 023,
  [\href{http://xxx.lanl.gov/abs/1108.4461}{{\tt arXiv:1108.4461}}].

\bibitem{CaronHuot:2011kk}
S.~Caron-Huot and S.~He, {\it {Jumpstarting the All-Loop S-Matrix of Planar N=4
  Super Yang-Mills}},  {\em JHEP} {\bf 1207} (2012) 174,
  [\href{http://xxx.lanl.gov/abs/1112.1060}{{\tt arXiv:1112.1060}}].

\bibitem{DelDuca:2010zp}
V.~Del~Duca, C.~Duhr, and V.~A. Smirnov, {\it {A Two-Loop Octagon Wilson Loop
  in N = 4 SYM}},  {\em JHEP} {\bf 1009} (2010) 015,
  [\href{http://xxx.lanl.gov/abs/1006.4127}{{\tt arXiv:1006.4127}}].

\bibitem{Heslop:2010kq}
P.~Heslop and V.~V. Khoze, {\it {Analytic Results for MHV Wilson Loops}},  {\em
  JHEP} {\bf 1011} (2010) 035, [\href{http://xxx.lanl.gov/abs/1007.1805}{{\tt
  arXiv:1007.1805}}].

\bibitem{Heslop:2011hv}
P.~Heslop and V.~V. Khoze, {\it {Wilson Loops @ 3-Loops in Special
  Kinematics}},  {\em JHEP} {\bf 1111} (2011) 152,
  [\href{http://xxx.lanl.gov/abs/1109.0058}{{\tt arXiv:1109.0058}}].

\bibitem{Goddard:2012cx}
T.~Goddard, P.~Heslop, and V.~V. Khoze, {\it {Uplifting Amplitudes in Special
  Kinematics}},  {\em JHEP} {\bf 1210} (2012) 041,
  [\href{http://xxx.lanl.gov/abs/1205.3448}{{\tt arXiv:1205.3448}}].

\bibitem{Ferro:2012wa}
L.~Ferro, {\it {Differential equations for multi-loop integrals and
  two-dimensional kinematics}},  {\em JHEP} {\bf 1304} (2013) 160,
  [\href{http://xxx.lanl.gov/abs/1204.1031}{{\tt arXiv:1204.1031}}].

\bibitem{Bartels:2011xy}
J.~Bartels, L.~Lipatov, and A.~Prygarin, {\it {Collinear and Regge behavior of
  2 $\to$ 4 MHV amplitude in N = 4 super Yang-Mills theory}},
  \href{http://xxx.lanl.gov/abs/1104.4709}{{\tt arXiv:1104.4709}}.

\bibitem{Prygarin:2011gd}
A.~Prygarin, M.~Spradlin, C.~Vergu, and A.~Volovich, {\it {All Two-Loop MHV
  Amplitudes in Multi-Regge Kinematics From Applied Symbology}},  {\em
  Phys.Rev.} {\bf D85} (2012) 085019,
  [\href{http://xxx.lanl.gov/abs/1112.6365}{{\tt arXiv:1112.6365}}].

\bibitem{Bartels:2011ge}
J.~Bartels, A.~Kormilitzin, L.~Lipatov, and A.~Prygarin, {\it {BFKL approach
  and $2 \to 5$ maximally helicity violating amplitude in ${\cal N}=4$
  super-Yang-Mills theory}},  {\em Phys.Rev.} {\bf D86} (2012) 065026,
  [\href{http://xxx.lanl.gov/abs/1112.6366}{{\tt arXiv:1112.6366}}].

\bibitem{Lipatov:2012gk}
L.~Lipatov, A.~Prygarin, and H.~J. Schnitzer, {\it {The Multi-Regge limit of
  NMHV Amplitudes in N=4 SYM Theory}},  {\em JHEP} {\bf 1301} (2013) 068,
  [\href{http://xxx.lanl.gov/abs/1205.0186}{{\tt arXiv:1205.0186}}].

\bibitem{Dixon:2012yy}
L.~J. Dixon, C.~Duhr, and J.~Pennington, {\it {Single-valued harmonic
  polylogarithms and the multi-Regge limit}},  {\em JHEP} {\bf 1210} (2012)
  074, [\href{http://xxx.lanl.gov/abs/1207.0186}{{\tt arXiv:1207.0186}}].

\bibitem{Pennington:2012zj}
J.~Pennington, {\it {The six-point remainder function to all loop orders in the
  multi-Regge limit}},  {\em JHEP} {\bf 1301} (2013) 059,
  [\href{http://xxx.lanl.gov/abs/1209.5357}{{\tt arXiv:1209.5357}}].

\bibitem{Alday:2010ku}
L.~F. Alday, D.~Gaiotto, J.~Maldacena, A.~Sever, and P.~Vieira, {\it {An
  Operator Product Expansion for Polygonal null Wilson Loops}},  {\em JHEP}
  {\bf 1104} (2011) 088, [\href{http://xxx.lanl.gov/abs/1006.2788}{{\tt
  arXiv:1006.2788}}].

\bibitem{Gaiotto:2010fk}
D.~Gaiotto, J.~Maldacena, A.~Sever, and P.~Vieira, {\it {Bootstrapping Null
  Polygon Wilson Loops}},  {\em JHEP} {\bf 1103} (2011) 092,
  [\href{http://xxx.lanl.gov/abs/1010.5009}{{\tt arXiv:1010.5009}}].

\bibitem{Gaiotto:2011dt}
D.~Gaiotto, J.~Maldacena, A.~Sever, and P.~Vieira, {\it {Pulling the straps of
  polygons}},  {\em JHEP} {\bf 1112} (2011) 011,
  [\href{http://xxx.lanl.gov/abs/1102.0062}{{\tt arXiv:1102.0062}}].

\bibitem{Basso:2013vsa}
B.~Basso, A.~Sever, and P.~Vieira, {\it {Space-time S-matrix and Flux-tube
  S-matrix at Finite Coupling}},  \href{http://xxx.lanl.gov/abs/1303.1396}{{\tt
  arXiv:1303.1396}}.

\bibitem{Beisert:2010jr}
N.~Beisert, C.~Ahn, L.~F. Alday, Z.~Bajnok, J.~M. Drummond, et~al., {\it
  {Review of AdS/CFT Integrability: An Overview}},  {\em Lett.Math.Phys.} {\bf
  99} (2012) 3--32, [\href{http://xxx.lanl.gov/abs/1012.3982}{{\tt
  arXiv:1012.3982}}].

\bibitem{Drummond:2009fd}
J.~M. Drummond, J.~M. Henn, and J.~Plefka, {\it {Yangian symmetry of scattering
  amplitudes in N=4 super Yang-Mills theory}},  {\em JHEP} {\bf 0905} (2009)
  046, [\href{http://xxx.lanl.gov/abs/0902.2987}{{\tt arXiv:0902.2987}}].

\bibitem{Alday:2009dv}
L.~F. Alday, D.~Gaiotto, and J.~Maldacena, {\it {Thermodynamic Bubble Ansatz}},
   {\em JHEP} {\bf 1109} (2011) 032,
  [\href{http://xxx.lanl.gov/abs/0911.4708}{{\tt arXiv:0911.4708}}].

\bibitem{Alday:2010vh}
L.~F. Alday, J.~Maldacena, A.~Sever, and P.~Vieira, {\it {Y-system for
  Scattering Amplitudes}},  {\em J.Phys.} {\bf A43} (2010) 485401,
  [\href{http://xxx.lanl.gov/abs/1002.2459}{{\tt arXiv:1002.2459}}].

\bibitem{Ferro:2012xw}
L.~Ferro, T.~Lukowski, C.~Meneghelli, J.~Plefka, and M.~Staudacher, {\it
  {Harmonic R-matrices for Scattering Amplitudes and Spectral Regularization}},
   \href{http://xxx.lanl.gov/abs/1212.0850}{{\tt arXiv:1212.0850}}.

\bibitem{DelDuca:2011wh}
V.~Del~Duca, L.~J. Dixon, J.~M. Drummond, C.~Duhr, J.~M. Henn, et~al., {\it
  {The one-loop six-dimensional hexagon integral with three massive corners}},
  {\em Phys.Rev.} {\bf D84} (2011) 045017,
  [\href{http://xxx.lanl.gov/abs/1105.2011}{{\tt arXiv:1105.2011}}].

\bibitem{Duhr:2011zq}
C.~Duhr, H.~Gangl, and J.~R. Rhodes, {\it {From polygons and symbols to
  polylogarithmic functions}},  {\em JHEP} {\bf 1210} (2012) 075,
  [\href{http://xxx.lanl.gov/abs/1110.0458}{{\tt arXiv:1110.0458}}].

\bibitem{Bullimore:2011kg}
M.~Bullimore and D.~Skinner, {\it {Descent Equations for Superamplitudes}},
  \href{http://xxx.lanl.gov/abs/1112.1056}{{\tt arXiv:1112.1056}}.

\bibitem{Brandhuber:2012vm}
A.~Brandhuber, G.~Travaglini, and G.~Yang, {\it {Analytic two-loop form factors
  in N=4 SYM}},  {\em JHEP} {\bf 1205} (2012) 082,
  [\href{http://xxx.lanl.gov/abs/1201.4170}{{\tt arXiv:1201.4170}}].

\bibitem{Bogner:2012dn}
C.~Bogner and F.~Brown, {\it {Symbolic integration and multiple
  polylogarithms}},  {\em PoS} {\bf LL2012} (2012) 053,
  [\href{http://xxx.lanl.gov/abs/1209.6524}{{\tt arXiv:1209.6524}}].

\bibitem{Lipstein:2012vs}
A.~E. Lipstein and L.~Mason, {\it {From the holomorphic Wilson loop to `d log'
  loop-integrands for super-Yang-Mills amplitudes}},  {\em JHEP} {\bf 1305}
  (2013) 106, [\href{http://xxx.lanl.gov/abs/1212.6228}{{\tt
  arXiv:1212.6228}}].

\bibitem{Naculich:2013xa}
S.~G. Naculich, H.~Nastase, and H.~J. Schnitzer, {\it {All-loop
  infrared-divergent behavior of most-subleading-color gauge-theory
  amplitudes}},  {\em JHEP} {\bf 1304} (2013) 114,
  [\href{http://xxx.lanl.gov/abs/1301.2234}{{\tt arXiv:1301.2234}}].

\bibitem{Drummond:2013nda}
J.~Drummond, C.~Duhr, B.~Eden, P.~Heslop, J.~Pennington, et~al., {\it {Leading
  singularities and off-shell conformal integrals}},
  \href{http://xxx.lanl.gov/abs/1303.6909}{{\tt arXiv:1303.6909}}.

\bibitem{Duhr:2012fh}
C.~Duhr, {\it {Hopf algebras, coproducts and symbols: an application to Higgs
  boson amplitudes}},  {\em JHEP} {\bf 1208} (2012) 043,
  [\href{http://xxx.lanl.gov/abs/1203.0454}{{\tt arXiv:1203.0454}}].

\bibitem{vonManteuffel:2012je}
A.~von Manteuffel and C.~Studerus, {\it {Top quark pairs at two loops and
  Reduze 2}},  {\em PoS} {\bf LL2012} (2012) 059,
  [\href{http://xxx.lanl.gov/abs/1210.1436}{{\tt arXiv:1210.1436}}].

\bibitem{Gehrmann:2013vga}
T.~Gehrmann, L.~Tancredi, and E.~Weihs, {\it {Two-loop QCD helicity amplitudes
  for $g\,g \to Z\,g$ and $g\,g \to Z\,\gamma $}},  {\em JHEP} {\bf 1304}
  (2013) 101, [\href{http://xxx.lanl.gov/abs/1302.2630}{{\tt
  arXiv:1302.2630}}].

\bibitem{Anastasiou:2013srw}
C.~Anastasiou, C.~Duhr, F.~Dulat, and B.~Mistlberger, {\it {Soft triple-real
  radiation for Higgs production at N3LO}},  {\em JHEP} {\bf 1307} (2013) 003,
  [\href{http://xxx.lanl.gov/abs/1302.4379}{{\tt arXiv:1302.4379}}].

\bibitem{Henn:2013pwa}
J.~M. Henn, {\it {Multiloop integrals in dimensional regularization made
  simple}},  \href{http://xxx.lanl.gov/abs/1304.1806}{{\tt arXiv:1304.1806}}.

\bibitem{Gauss}
C.~F. Gauss, {\it Pentagramma mirificum},  in {\em Werke, Band III},
  pp.~481--490.
\newblock G\"ottingen, 1863.

\end{thebibliography}\endgroup

\end{document}